\newtheorem{theorem}{Theorem}
\newenvironment{thmbis}
  {\addtocounter{theorem}{-1}%
   \begin{theorem}}
  {\end{theorem}}
\newtheorem{proposition}{Proposition}
\newtheorem{corollary}{Corollary}
\newtheorem{lemma}{Lemma}
\newtheorem{problem}{Problem}
\theoremstyle{definition}
\newtheorem{definition}{Definition}
\newtheorem{example}{Example}
\newtheorem{remark}{Remark}
\newcommand{\C}{\mathcal{C}}
\newcommand{\GL}{\mathrm{GL}}
\newcommand{\F}{\mathbb{F}}
\newcommand{\E}{\mathcal{E}}
\newcommand{\rk}{\mathrm{rk}}
\newcommand{\bs}{}
\newcommand{\Gab}{\mathrm{Gab}}
\newcommand{\Gal}{\mathrm{Gal}}
\newcommand{\Gr}{\mathrm{Gr}}
\newcommand{\rsu}{\mathrm{supp}}
\newcommand{\rs}{\mathrm{rs}}
\newcommand{\Fm}{\mathbb{F}_{q^m}}
\newcommand{\Tr}{\mathrm{Tr}_{\F_{q^m}/\F_q}}
\newcommand{\Hom}{\mathrm{Hom}}
\newcommand{\Fq}{\mathbb{F}_{q}}
\newcommand{\G}{\mathcal{G}}
\newcommand{\npmatrix}[1]{\left( \begin{matrix} #1 \end{matrix} \right)}
\title[Systematic Gabiduin codes and $q$-Cauchy matrices]{Systematic encoders for generalized Gabidulin codes and the $q$-analogue of Cauchy matrices}
\author[Alessandro Neri]{Alessandro Neri$^*$}
\address{Technical University of Munich, Germany}
\curraddr{}
\email{alessandro.neri@tum.de}
\thanks{$^*$Alessandro Neri was supported by the Swiss National Science Foundation through grants no. 169510 and 187711.}
\subjclass[2010]{94B05; 11T71}
\keywords{Rank-metric codes; Gabidulin codes; $q$-Cauchy matrices; systematic Gabidulin codes; standard form}
\begin{document}
\maketitle

\begin{abstract}
 We characterize the generator matrix in standard form of generalized Gabidulin codes. 
The parametrization we get for the non-systematic part of this matrix coincides with the $q$-analogue of generalized Cauchy matrices, leading to the definition of \emph{$q$-Cauchy matrices}.
These matrices can be represented very conveniently and their representation allows to define new interesting subfamilies of generalized Gabidulin codes whose generator matrix is a structured matrix. In particular, as an application, we construct Gabidulin codes whose generator matrix is the concatenation of an identity block and a Toeplitz/Hankel matrix. In addition, our results allow  to give a new efficient criterion  to verify whether a rank metric code of dimension $k$ and length $n$ is a generalized Gabidulin code. This criterion is only based on the computation of the rank of one matrix and on the verification of the linear independence of two sets of elements and it requires  $\mathcal O(m\cdot F(k,n))$ field operations, where $F(k,n)$ is the cost of computing the reduced row echelon form of a $k \times n$ matrix. Moreover, we also provide a characterization of the  generator matrix in standard form of general MRD codes.
\end{abstract}

\section{Introduction}

Codes in the rank metric were introduced, independently, by Delsarte \cite{de78}, Gabidulin \cite{ga85a} and Roth \cite{ro91}, although a similar notion can be traced back to Bergman \cite{be69}. However, only in the last ten years have they significantly gained  interest, due to their application in network coding \cite{si08j,ko08}. Moreover, rank metric codes  have a plethora of different applications in communications and security. In addition to network coding, the applications proposed in the last 20 years concern cryptography \cite{lo10, ov08}, space-time coding and wireless communications \cite{ta98, lu03}, distributed storage \cite{silberstein2015error, ca17, ne18pmds}, authentication schemes \cite{ne17} and low-rank matrix completion \cite{mu17}. 

 As with codes in the Hamming metric, they are usually defined over a finite field $\F_q$, and in the linear case their important parameters are given by the length $n$, the dimension $k$ and the minimum distance $d$. Those parameters are related by an inequality that is as elegant as effective. This is the well-known Singleton bound, that holds in both the Hamming and rank metric. Hamming codes meeting this bound are called \emph{maximum distance separable (MDS)} codes. Their natural analogue in the rank metric is represented by \emph{maximum rank distance (MRD)} codes, that are defined analogously as codes that attain the Singleton bound with equality. Although it was proven that there are plenty of MRD codes that are linear over the extension field \cite{ne18, by18}, only few new families have been discovered recently \cite{sh16, lu15, ot17,  sheekey2019new, puchinger2017further} and some sporadic construction \cite{ho16, csajbok2018new, csajbok2018maximum, csajbok2018newbis, marino2019mrd, bartoli2019new}.

However, the  most studied and important construction of MRD codes is still the one proposed in the seminal works \cite{de78, ga85a, ro91}, and then generalized in \cite{ks05}. These codes are known as \emph{generalized Gabidulin codes}, and they represent the rank analogue of the well-known \emph{generalized Reed-Solomon (GRS) codes}. As GRS codes, generalized Gabidulin codes are evaluation codes. However,  they are defined over an extension field $\Fm$ of $\Fq$ and the evaluation is done on a particular subset of \emph{linearized polynomials} in $n$ points  that are linearly independent over $\F_q$. The structure of evaluation codes allowed the development of many efficient decoding algorithms in the last years \cite{si09p, wa13}.

In this framework, another analogy emerges regarding the generator matrices of these two families of codes. The canonical generator matrix of GRS codes is obtained by the evaluation of the canonical basis $\{1,x,\ldots, x^{k-1}\}$, that gives as a result the  \emph{weighted Vandermonde matrix}, a matrix given by the product of a Vandermonde with a non-singular diagonal matrix. The rank analogue of the weighted Vandermonde matrix is given by the $s$-Moore matrix. Such a matrix is the canonical generator matrix of a generalized Gabidulin code, obtained via the evaluation of the canonical basis $\{x, x^{q^s}, x^{q^{2s}}, \ldots, x^{q^{(k-1)s}} \}$.

There is another important generator matrix of GRS codes that is well-known in the literature. In 1985 Roth and Seroussi gave a characterization of the generator matrix in standard form for these codes, showing that GRS codes are in correspondence with \emph{generalized Cauchy matrices} (\cite{ro85}). The same characterization was also given independently by D\"ur in \cite{dur1987automorphism}. Explicitly, the generator matrix in standard form of a GRS code is given by $(I_k \mid X)$, where $I_k$ is the $k\times k $ identity matrix, and $X$ is a generalized Cauchy matrix. On the other hand, every matrix $(I_k \mid X)$, with $X$ a generalized Cauchy matrix, generates a GRS code. 

In this work we  give a characterization of the generator matrix in standard form of a generalized Gabidulin code, that up to now was  unknown. As a consequence, this also allows us to define a rank analogue of generalized Cauchy matrices, whose definition coincides with the $q$-analogue of generalized Cauchy matrices. This result is obtained making a wide use of properties of finite fields, in particular the trace map, and of some recent results appeared recently \cite{ho16,ne18}.

In addition to the theoretical result that almost completes the picture on the analogies between GRS and generalized Gabidulin codes, this has also a useful impact from a practical point of view. Using the structure of the rank analogue of a generalized Cauchy matrix, we derive a subfamily of generalized Gabidulin codes whose generator matrix is made by an identity block and a Toeplitz/Hankel block. From an application point of view, this new family of codes seems to be suitable for fast algorithms for erasure correction and syndrome decoding as well as for encoding. It is well-known, indeed, that the matrix-vector multiplication with a Toeplitz/Hankel matrix can be performed in a fast way. 

Moreover, from  the theoretical characterization obtained, we also derive a new criterion to determine whether a given code  is a generalized Gabidulin code. This new criterion is faster to compute than any other previously known. Indeed, for a given rank metric code of dimension $k$ and length $n$ over a finite field $\Fm$, it only requires $\mathcal O(m\cdot F(k,n))$ field operations, where $F(k,n)$ denotes the cost of computing the reduced row echeleon form of a $k \times n$ matrix.

The paper is structured as follows. In Section 2 we recall some basic properties of finite fields and in particular of the field trace map. We also briefly explain the main results on GRS codes and on their generator matrices. In Section 3 we introduce rank metric codes and give a recap on the most important results on MRD and generalized Gabidulin codes. In addition, some new results are presented that are preparatory for the rest of the paper. Moreover, we give a characterization of the generator matrix of general MRD codes, in the spirit of the well-known results for MDS codes.
Section 4 represents the main contribution of this work. Here we characterize the generator matrix in standard form of a generalized Gabidulin code. From this result we derive a new criterion for determining if a given rank metric code is a generalized Gabidulin code. This section can be also seen as the analogue of Roth and Seroussi  \cite{ro85} and D\"ur \cite{dur1987automorphism} works for the rank metric and it completes the picture on the generator matrices of generalized Gabidulin codes. In Section 5 we use our previous results for constructing subfamilies of Gabidulin codes with structured generator matrix. These codes have generator matrix in standard form with a Hankel or a Toeplitz non-systematic part, potentially very useful for applications. Finally, in Section 6 we summarize the work underlining our main contributions.

\section{Preliminaries}
Throughout the whole work, given a map $h:\mathcal X \rightarrow \mathcal Y$ and a subset $ \mathcal T\subseteq \mathcal Y$, we  denote by $h^{-1}(\mathcal T)$ the preimage of the set $\mathcal T$, i.e.
$$h^{-1}(\mathcal T)=\left\{ x \in \mathcal X \mid h(x) \in \mathcal T\right\}.$$
 In the same way, for a set  $\mathcal S\subseteq \mathcal X$, $h(\mathcal S)$ denotes the set of images of the elements in $\mathcal S$ through $h$, i.e.
$$h(\mathcal S)=\left\{ h(x) \mid x \in \mathcal S  \right\}.$$

\subsection{Trace over finite fields an its duality}
The following definitions and results can be found in any textbook on
finite fields, e.g.\ \cite{li94}.  We denote the finite field of
cardinality $q$ by $\F_q$. It is well-known that it exists if and only if $q$ is a prime
power. Moreover, if it exists, $\F_q$ is unique up to isomorphism. An
extension field of extension degree $m$ is denoted by $\F_{q^m}$.
% If
%$\alpha$ is a root of an irreducible monic polynomial in $\F_q[x]$ of
%degree $m$, then
%$$ \F_{q^m} \cong \F_q[\alpha].$$
An important property of finite fields is the existence of a primitive element. This means that there always exists $\alpha \in \F_q$ that is a generator of $\F_q^*$, i.e. 
$$\F_q=\{0\}\cup\{\alpha^i \mid 0 \leq i \leq q-2\}.$$
We now recall some basic theory on  finite
fields and the trace function.
It is well-known that the extension field $\Fm$ over $\F_q$ is a Galois extension
and $\mathrm{Gal}(\Fm/\F_q)$ is cyclic. One of its generators is given by the $q$-Frobenius automorphism $\theta$, defined as
$$\begin{array}{rcl} \theta:\Fm & \longrightarrow & \Fm \\
\alpha & \longmapsto & \alpha^q.
\end{array}$$

\begin{definition}
  Let $\F_q$ be a finite field and $\F_{q^m}$ be an extension
  field. For $\alpha \in \F_{q^m}$, the \emph{trace} of $\alpha$ over
  $\F_q$ is defined by
  $$\mathrm{Tr}_{\F_{q^m}/\F_q}(\alpha)  := \sum_{i=0}^{m-1}\theta^i(\alpha)= \sum_{i=0}^{m-1}\alpha^{q^i}.$$
\end{definition}

For every integer $0<s<m$ with $\gcd(m,s)=1$, we denote by $\varphi_s$
the map given by
 $$ \begin{array}{rcl}
   \varphi_s:\F_{q^m} &\longrightarrow & \F_{q^m} \\
   \alpha & \longmapsto & \theta^s(\alpha)-\alpha.
 \end{array}
$$
We will refer to the function 
$$\begin{array}{rcl} \Tr:\Fm & \longrightarrow & \F_q
\end{array}$$
as the \emph{trace map} of $\Fm/\F_q$.

The following result relates the trace map with the functions $\varphi_s$.

 \begin{lemma}\label{lem:trace}
   The trace map satisfies the following properties:
   \begin{enumerate}
   \item $\mathrm{Tr}_{\F_{q^m}/\F_q}(\alpha) \in \F_q$ for all
     $\alpha \in \F_{q^m}$.
   \item $\mathrm{Tr}_{\F_{q^m}/\F_q}$ is an $\F_q$-linear surjective
     transformation from $\F_{q^m}$ to $\F_q$.
 \item $\varphi_s$ is an $\F_q$-linear transformation from $\F_{q^m}$ to
   itself.
 \item For every $s$ coprime to $m$, $\varphi_s(\alpha)=0$ if and only
   if $\alpha \in \F_q$.
 \item (Additive Hilbert’s Theorem 90 for finite fields) $\ker (\mathrm{Tr}_{\F_{q^m}/\F_q})=\mathrm{Im}(\varphi_s)$ for
   every $s$ coprime to $m$ and has cardinality $q^{m-1}$.
 \end{enumerate}
\end{lemma}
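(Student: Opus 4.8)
The plan is to prove Lemma~\ref{lem:trace} by establishing the five items essentially in order, since later items rely on earlier ones. Items (1) and (2) are the standard facts about the trace: for (1) one shows $\theta(\mathrm{Tr}_{\F_{q^m}/\F_q}(\alpha)) = \mathrm{Tr}_{\F_{q^m}/\F_q}(\alpha)$ by reindexing the sum $\sum_{i=0}^{m-1}\alpha^{q^i}$ under $x\mapsto x^q$ (using $\alpha^{q^m}=\alpha$), so the value is fixed by $\Gal(\Fm/\Fq)$ and hence lies in $\F_q$. For (2), $\F_q$-linearity is immediate from the fact that $x\mapsto x^q$ is additive and fixes $\F_q$ pointwise, so each $\theta^i$ is $\F_q$-linear and so is the sum; surjectivity follows because the trace is a nonzero $\F_q$-linear functional — nonzero because $\mathrm{Tr}_{\F_{q^m}/\F_q}(\alpha)=0$ for all $\alpha$ would mean every element of $\Fm$ is a root of the polynomial $\sum_{i=0}^{m-1} x^{q^i}$ of degree $q^{m-1}<q^m$, a contradiction — and a nonzero linear map onto the one-dimensional space $\F_q$ is surjective.

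Next, (3) is immediate: $\varphi_s(\alpha)=\theta^s(\alpha)-\alpha$ is a difference of $\F_q$-linear maps (each $\theta^s$ is $\F_q$-linear as in (2)), hence $\F_q$-linear. For (4), one inclusion is clear: if $\alpha\in\F_q$ then $\theta^s(\alpha)=\alpha^{q^s}=\alpha$ so $\varphi_s(\alpha)=0$. Conversely, $\varphi_s(\alpha)=0$ means $\alpha^{q^s}=\alpha$, i.e. $\alpha$ is fixed by $\theta^s$; since $\gcd(m,s)=1$, the automorphism $\theta^s$ also generates $\Gal(\Fm/\Fq)$, so the fixed field of $\theta^s$ is $\F_q$, giving $\alpha\in\F_q$. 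Equivalently, and more elementarily, the solutions of $x^{q^s}=x$ form $\F_{q^{\gcd(m,s)}}=\F_q$.

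Finally, for (5), the additive Hilbert~90 statement, one shows $\mathrm{Im}(\varphi_s)\subseteq\ker(\mathrm{Tr}_{\F_{q^m}/\F_q})$ by a telescoping computation: $\mathrm{Tr}_{\F_{q^m}/\F_q}(\theta^s(\alpha)-\alpha)=\sum_{i=0}^{m-1}\theta^{i+s}(\alpha)-\sum_{i=0}^{m-1}\theta^{i}(\alpha)$, and since $\{\theta^{i+s}\bmod m : 0\le i\le m-1\}=\{\theta^i : 0\le i\le m-1\}$ (again because $\gcd(m,s)=1$, or even just because $\theta^m=\mathrm{id}$ so the index set is permuted cyclically), the two sums cancel. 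For the reverse inclusion and the cardinality count, the clean route is dimension counting: by (2), $\ker(\mathrm{Tr}_{\F_{q^m}/\F_q})$ is an $\F_q$-subspace of codimension $1$ in $\Fm$, hence has $q^{m-1}$ elements; by (3) and (4), $\ker(\varphi_s)=\F_q$ has dimension $1$, so by rank–nullity $\dim_{\F_q}\mathrm{Im}(\varphi_s)=m-1$, i.e. $|\mathrm{Im}(\varphi_s)|=q^{m-1}$. Since $\mathrm{Im}(\varphi_s)$ is contained in the $q^{m-1}$-element set $\ker(\mathrm{Tr}_{\F_{q^m}/\F_q})$ and has the same cardinality, the two sets coincide. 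The only mild subtlety — the step I expect to require the most care to state cleanly — is the repeated use of the fact that $\gcd(m,s)=1$ makes $\theta^s$ a generator of the Galois group (needed for (4) and for the index-permutation in (5)); everything else is routine finite-field bookkeeping, and since the excerpt says these results "can be found in any textbook on finite fields," a proof at this level of detail, or a direct citation to \cite{li94}, is appropriate.
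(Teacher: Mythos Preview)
Your proof is correct and complete. The paper itself does not give a proof of this lemma at all: it simply cites \cite[Chapter~2, Section~3]{li94} for a partial proof and \cite[Lemma~2]{ne18} for a complete one. Your argument is precisely the standard textbook proof that those references contain --- Frobenius-fixed elements for (1), the polynomial-degree bound for nonvanishing of the trace in (2), the fixed field of $\theta^s$ for (4), and the rank--nullity/dimension count for (5) --- so there is no meaningful divergence in approach to discuss. One small remark: in your alternative phrasing for (4), ``the solutions of $x^{q^s}=x$ form $\F_{q^{\gcd(m,s)}}$'' is correct only when read as ``the solutions inside $\F_{q^m}$''; your primary Galois-theoretic argument already handles this cleanly.
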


\begin{proof}
 A partial proof of this result can be found in \cite[Chapter 2, Section 3]{li94}. For a complete proof we refer to \cite[Lemma 2]{ne18}.
\end{proof}

 The trace map has many important properties. One of them is that it can be used to define an isomorphism between $\Fm$ and $\Hom_{\Fq}(\Fm,\Fq)$.

\begin{definition}
 The $\F_q$-bilinear map defined as 
 $$ \begin{array}{rcl}
  \mathrm{tr}: \Fm \times \Fm & \longrightarrow & \F_q \\
  (\alpha,\beta) & \longmapsto & \mathrm{Tr}_{\F_{q^m}/\F_q}(\alpha\beta),
 \end{array}
 $$
 is called the \emph{trace form} of the extension $\Fm/\F_q$.
 \end{definition}
 Observe that for every $\alpha \in \Fm$, we can associate an $\F_q$-linear map $T_\alpha \in \Hom_{\F_q}(\Fm, \F_q)$, defined as
 $$\begin{array}{rcl} 
 T_{\alpha}:\Fm & \longrightarrow & \F_q \\
 \beta & \longmapsto & \Tr(\alpha\beta).
 \end{array}$$
\begin{theorem}\label{thm:dualisom}
 The trace form is a symmetric non degenerate $\F_q$-bilinear form. Moreover it induces a duality isomorphism given by
 $$\begin{array}{rcl}
   \Psi:\Fm & \longrightarrow & \Hom_{\F_q}(\Fm, \F_q)\\
   \alpha & \longmapsto & T_{\alpha}.
   \end{array}$$
\end{theorem}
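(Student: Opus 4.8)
The plan is to break the statement into three independent pieces: symmetry of the trace form, non-degeneracy, and the fact that the induced map $\Psi$ is an isomorphism. Symmetry is immediate: for $\alpha,\beta \in \Fm$ we have $\mathrm{tr}(\alpha,\beta) = \Tr(\alpha\beta) = \Tr(\beta\alpha) = \mathrm{tr}(\beta,\alpha)$, using only commutativity of multiplication in $\Fm$. Bilinearity over $\Fq$ is likewise a direct consequence of the $\Fq$-linearity of $\Tr$ (Lemma \ref{lem:trace}, part (2)) together with distributivity. So the real content is the non-degeneracy, and everything else follows from it by a dimension count.

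For non-degeneracy, I would argue as follows. Fix $\alpha \in \Fm$ with $\alpha \neq 0$, and consider the associated map $T_\alpha$. I claim $T_\alpha$ is not identically zero, i.e. there exists $\beta$ with $\Tr(\alpha\beta) \neq 0$. Indeed, as $\beta$ ranges over all of $\Fm$, the product $\alpha\beta$ ranges over all of $\Fm$ (since $\alpha$ is a unit), so $T_\alpha$ is identically zero precisely when $\Tr$ itself is identically zero on $\Fm$ — which contradicts the surjectivity of $\Tr$ onto $\Fq$ from Lemma \ref{lem:trace}, part (2). Hence the left kernel of the trace form is $\{0\}$, and by symmetry the right kernel is $\{0\}$ as well, so the form is non-degenerate.

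It remains to check that $\Psi$ is an $\Fq$-linear isomorphism. The assignment $\alpha \mapsto T_\alpha$ is $\Fq$-linear because $T_{\lambda\alpha + \mu\alpha'}(\beta) = \Tr((\lambda\alpha + \mu\alpha')\beta) = \lambda\,\Tr(\alpha\beta) + \mu\,\Tr(\alpha'\beta) = (\lambda T_\alpha + \mu T_{\alpha'})(\beta)$ for all $\beta$, again by $\Fq$-linearity of $\Tr$. Its kernel consists of those $\alpha$ for which $T_\alpha = 0$, which by the non-degeneracy established above forces $\alpha = 0$; thus $\Psi$ is injective. Finally, $\Fm$ and $\Hom_{\Fq}(\Fm,\Fq)$ are both $\Fq$-vector spaces of dimension $m$, so an injective $\Fq$-linear map between them is automatically surjective, and $\Psi$ is an isomorphism.

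The only step that is not purely formal is the non-degeneracy, and even there the single nontrivial input is the surjectivity of the trace map, which is already recorded in Lemma \ref{lem:trace}. So I do not expect any genuine obstacle; the proof is essentially an exercise in unwinding definitions and invoking the preliminary lemma, and the main thing to be careful about is phrasing the kernel argument cleanly rather than getting lost in bilinear-form bookkeeping.
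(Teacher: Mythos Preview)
Your proof is correct and complete. The paper itself does not actually prove this theorem; it simply refers the reader to \cite[Theorem 2.24]{li94}. Your argument is essentially the standard textbook proof one would find there: symmetry and $\Fq$-bilinearity are formal, non-degeneracy follows from surjectivity of the trace (Lemma~\ref{lem:trace}(2)) via the observation that multiplication by a nonzero $\alpha$ is a bijection of $\Fm$, and the isomorphism $\Psi$ then follows by an injectivity-plus-dimension argument. So there is nothing to compare beyond noting that you have supplied the details the paper omits.
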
 

\begin{proof}
For the proof one can see \cite[Theorem 2.24]{li94}.
\end{proof}

The following results directly follow from Theorem \ref{thm:dualisom}.

\begin{corollary}\label{cor:neq}
 For every $\alpha \in \Fm^*$ the map $T_\alpha$ is non identically zero, and hence $\dim_{\F_q}(\ker (T_\alpha))=m-1$.
\end{corollary}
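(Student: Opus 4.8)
The plan is to deduce both assertions directly from the statement of Theorem~\ref{thm:dualisom}, namely that the map $\Psi:\Fm\to\Hom_{\F_q}(\Fm,\F_q)$, $\alpha\mapsto T_\alpha$, is an isomorphism of $\F_q$-vector spaces. In particular $\Psi$ is injective, so its kernel is trivial: the only $\alpha\in\Fm$ for which $T_\alpha$ is the zero homomorphism is $\alpha=0$. Hence for every $\alpha\in\Fm^*$ we have $T_\alpha=\Psi(\alpha)\neq 0$ in $\Hom_{\F_q}(\Fm,\F_q)$, which is precisely the assertion that $T_\alpha$ is not identically zero. If one wishes to avoid invoking the full isomorphism, one can argue more elementarily: if $T_\alpha\equiv 0$ for some $\alpha\neq 0$, then $\Tr(\alpha\beta)=0$ for all $\beta\in\Fm$; substituting $\beta=\alpha^{-1}\gamma$ and letting $\gamma$ range over $\Fm$ gives $\Tr(\gamma)=0$ for all $\gamma\in\Fm$, contradicting the surjectivity of the trace map recorded in Lemma~\ref{lem:trace}(2).

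For the dimension statement, I would apply the rank--nullity theorem to the $\F_q$-linear map $T_\alpha:\Fm\to\F_q$, which is linear by construction (it is $\beta\mapsto\Tr(\alpha\beta)$ and $\Tr$ is $\F_q$-linear). Since $\F_q$ is one-dimensional over itself, the image $\mathrm{Im}(T_\alpha)$ is either $\{0\}$ or all of $\F_q$; by the first part of the statement it is nonzero, hence equals $\F_q$ and has $\F_q$-dimension $1$. Therefore $\dim_{\F_q}(\ker(T_\alpha))=\dim_{\F_q}(\Fm)-\dim_{\F_q}(\mathrm{Im}(T_\alpha))=m-1$.

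There is no genuine obstacle here: essentially all the content is already packaged in Theorem~\ref{thm:dualisom} (or, at the more elementary level, in the surjectivity part of Lemma~\ref{lem:trace}). The only point requiring a moment's care is the trivial but essential observation that a nonzero $\F_q$-linear functional with values in the one-dimensional space $\F_q$ is automatically surjective, which is what lets rank--nullity deliver the exact codimension $m-1$ rather than merely $\dim_{\F_q}(\ker(T_\alpha))\in\{m-1,m\}$.
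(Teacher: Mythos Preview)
Your argument is correct and is precisely the intended one: the paper simply records this corollary as an immediate consequence of Theorem~\ref{thm:dualisom}, and your use of the injectivity of $\Psi$ together with rank--nullity is exactly how that consequence is unpacked. The alternative elementary argument via surjectivity of the trace is also fine, but not needed.
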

 
 \begin{corollary}\label{cor:Tlin}
 For every $\alpha,\beta \in \Fm$ and $\lambda, \mu \in \F_q$, we have
$$T_{\lambda \alpha+\mu \beta}=\lambda T_\alpha+\mu T_\beta.$$
\end{corollary}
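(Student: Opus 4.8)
The plan is to deduce the identity from the $\F_q$-linearity built into Theorem \ref{thm:dualisom}, or, if one prefers a self-contained argument, directly from the $\F_q$-linearity of the trace map recorded in Lemma \ref{lem:trace}(2). Either route is short, and I would present the elementary one since it makes the role of the hypothesis $\lambda,\mu\in\F_q$ transparent.

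First I would fix an arbitrary $\gamma\in\Fm$ and evaluate the left-hand side at $\gamma$. By the definition of the linear functional $T_{\bullet}$,
$$T_{\lambda\alpha+\mu\beta}(\gamma)=\Tr\big((\lambda\alpha+\mu\beta)\gamma\big)=\Tr(\lambda\alpha\gamma+\mu\beta\gamma).$$
Since $\Tr=\mathrm{Tr}_{\F_{q^m}/\F_q}$ is $\F_q$-linear by Lemma \ref{lem:trace}(2) and $\lambda,\mu\in\F_q$, the right-hand side equals $\lambda\,\Tr(\alpha\gamma)+\mu\,\Tr(\beta\gamma)=\lambda\,T_\alpha(\gamma)+\mu\,T_\beta(\gamma)=(\lambda T_\alpha+\mu T_\beta)(\gamma)$. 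As $\gamma$ was arbitrary, the two $\F_q$-linear maps $T_{\lambda\alpha+\mu\beta}$ and $\lambda T_\alpha+\mu T_\beta$ agree on all of $\Fm$, hence coincide in $\Hom_{\F_q}(\Fm,\F_q)$.

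The one point to watch is that $\lambda,\mu$ are scalars from the base field $\F_q$, which is exactly what licenses pulling them out through the trace; the statement would be false for coefficients in the extension field. Beyond this there is no genuine obstacle: the corollary is just the assertion that $\Psi:\alpha\mapsto T_\alpha$ is a homomorphism of $\F_q$-vector spaces, which is precisely the content already guaranteed by Theorem \ref{thm:dualisom}.
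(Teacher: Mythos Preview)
Your proof is correct and matches the paper's approach: the paper states the corollary as a direct consequence of Theorem~\ref{thm:dualisom} without further argument, and your last paragraph identifies exactly this, while your elementary verification via Lemma~\ref{lem:trace}(2) simply unpacks the $\F_q$-linearity of $\Psi$ explicitly.
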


Since the trace form induces a duality isomorphism, we can naturally define the notion of dual basis.
\begin{definition}
Given an $\F_q$-basis $\alpha_1,\ldots, \alpha_m$ of  $\Fm$ and $\beta_1,\ldots, \beta_m \in \Fm$, we say that $\beta_1, \ldots, \beta_m$ is a \emph{dual basis} of  $\alpha_1,\ldots, \alpha_m$ with respect to the trace form, if for all $i,j \in \{1,\ldots,m\}$
$$ \mathrm{tr}(\alpha_i,\beta_j)=\delta_{i,j}=\begin{cases} 1 & \mbox{ if } i=j \\
 0 & \mbox{ if } i\neq j.
\end{cases}$$
\end{definition}

\begin{remark}
Given an $\F_q$-basis $\alpha_1,\ldots, \alpha_m$ of  $\Fm$, the existence and uniqueness of its dual basis follow  by Theorem \ref{thm:dualisom} and the fact that $\Fm$ is a finite dimensional $\F_q$-vector space.
\end{remark}

\begin{lemma}\label{lem:dimT} For every $\alpha_1,\ldots,\alpha_k, \beta \in \Fm$, 
 $$\ker (T_{\alpha_1}) \cap \ldots \cap \ker (T_{\alpha_k}) \subseteq \ker (T_{\beta})$$
 if and only if $\beta \in \langle \alpha_1,\ldots, \alpha_k\rangle$.
\end{lemma}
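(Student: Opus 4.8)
The plan is to use the duality isomorphism $\Psi$ from Theorem~\ref{thm:dualisom} to translate the statement about kernels of the maps $T_\alpha$ into a statement about orthogonal complements of subspaces of $\Fm$, viewed as an $\F_q$-vector space equipped with the nondegenerate trace form. Concretely, set $V = \langle \alpha_1,\ldots,\alpha_k\rangle \subseteq \Fm$. The key observation is that $\bigcap_{i=1}^k \ker(T_{\alpha_i})$ is precisely the orthogonal complement $V^\perp$ with respect to the trace form: an element $x$ lies in the intersection iff $\mathrm{tr}(\alpha_i,x)=0$ for all $i$, which by $\F_q$-bilinearity (Corollary~\ref{cor:Tlin}) is equivalent to $\mathrm{tr}(v,x)=0$ for all $v\in V$. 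Similarly $\ker(T_\beta) = \langle\beta\rangle^\perp$.

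With this reformulation, the claim becomes: $V^\perp \subseteq \langle\beta\rangle^\perp$ if and only if $\beta \in V$. The "if" direction is immediate: if $\beta\in V$ then $\langle\beta\rangle\subseteq V$, so taking orthogonals reverses the inclusion. For the "only if" direction, I would take orthogonals again and invoke the standard fact that for a nondegenerate bilinear form on a finite-dimensional space, $(W^\perp)^\perp = W$ for every subspace $W$ — this is exactly what the nondegeneracy in Theorem~\ref{thm:dualisom} buys us, together with the dimension count $\dim W + \dim W^\perp = m$ (which itself follows from $\Psi$ being an isomorphism, since $W^\perp$ is the image under $\Psi^{-1}$ of the annihilator of $W$ in $\Hom_{\F_q}(\Fm,\F_q)$). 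Applying $(-)^\perp$ to $V^\perp \subseteq \langle\beta\rangle^\perp$ yields $\langle\beta\rangle = \langle\beta\rangle^{\perp\perp} \subseteq V^{\perp\perp} = V$, hence $\beta\in V$.

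Alternatively, and perhaps more in keeping with the elementary tone of the preceding lemmas, I would argue the contrapositive of "only if" directly: if $\beta\notin V = \langle\alpha_1,\ldots,\alpha_k\rangle$, then $\alpha_1,\ldots,\alpha_j,\beta$ (after discarding dependent $\alpha_i$'s, say $\alpha_1,\ldots,\alpha_j$ form a basis of $V$) are $\F_q$-linearly independent, so they extend to an $\F_q$-basis of $\Fm$. Using the existence of a dual basis with respect to the trace form, there is an element $x\in\Fm$ with $\mathrm{tr}(\alpha_i,x)=0$ for all $i$ but $\mathrm{tr}(\beta,x)=1\neq 0$; such an $x$ witnesses $x\in\bigcap_i\ker(T_{\alpha_i})$ but $x\notin\ker(T_\beta)$, contradicting the inclusion. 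This only uses the existence of dual bases, already established in the Remark.

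The main obstacle — really the only non-formal point — is establishing the dimension formula $\dim_{\F_q} V + \dim_{\F_q} V^\perp = m$ (equivalently, that $\perp\perp$ is the identity on subspaces), if one takes the first route; this is where nondegeneracy of the trace form is essential and where one must be careful not to conflate $V^\perp$ with a naive "complement." The second route sidesteps this by working with an explicit dual basis, so I expect to present that one, as it is cleanest and reuses machinery the paper has already set up.
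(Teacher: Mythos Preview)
Your proposal is correct, and the second (dual-basis) approach you settle on is exactly the paper's proof: reduce to a basis $\alpha_1,\ldots,\alpha_s$ of $V$, complete $\alpha_1,\ldots,\alpha_s,\beta$ to a basis of $\Fm$, and take the dual element $\tilde\beta$ as the witness in $\bigcap_i\ker(T_{\alpha_i})\setminus\ker(T_\beta)$. The first route via $(W^\perp)^\perp=W$ is also valid, but the paper does not invoke it.
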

\begin{proof}
 Suppose $\beta \in \langle \alpha_1,\ldots, \alpha_k\rangle $. By Corollary \ref{cor:Tlin}, we have $T_\beta=\lambda_1T_{\alpha_1}+\ldots+\lambda_kT_{\alpha_k}$. Hence, if $x\in \ker (T_{\alpha_1}) \cap \ldots \cap \ker (T_{\alpha_k})$, then
$$T_\beta(x)=\lambda_1T_{\alpha_1}(x)+\ldots+\lambda_kT_{\alpha_k}(x)=0+\ldots+0=0,$$
and therefore $x \in \ker(T_{\beta})$.

On the other hand, suppose  $\beta \notin \langle \alpha_1,\ldots, \alpha_k\rangle $. Let $s:=\dim_{\F_q} \langle \alpha_1,\ldots, \alpha_k\rangle$. Without loss of generality we can assume that $ \langle \alpha_1,\ldots, \alpha_k\rangle =  \langle \alpha_1,\ldots, \alpha_s\rangle$. Now, complete $\alpha_1,\ldots,\alpha_s, \beta$ to an $\F_q$-basis $ \alpha_1,\ldots,\alpha_s,\beta, \gamma_1,\ldots\gamma_{m-s-1}$ of $\Fm$ and consider its dual basis with respect to the trace form $ \tilde{\alpha}_1,\ldots,\tilde{\alpha}_s,\tilde{\beta}, \tilde{\gamma}_1,\ldots \tilde{\gamma}_{m-s-1}$. Therefore, $T_{\alpha_i}(\tilde{\beta})=0$ for every $i=1,\ldots,s$ and $T_{\beta}(\tilde{\beta})=1$, i.e. $$\tilde{\beta} \in \ker (T_{\alpha_1}) \cap \ldots \cap \ker (T_{\alpha_k}) \setminus \ker (T_{\beta}).$$

\end{proof}

\begin{proposition}\label{thm:kerT} For every $\alpha_1,\ldots,\alpha_k \in \Fm$, 
 $$ \dim_{\F_q}(\ker (T_{\alpha_1}) \cap \ldots \cap \ker (T_{\alpha_k}))=m-\dim_{\F_q}\langle \alpha_1,\ldots, \alpha_k \rangle.$$
\end{proposition}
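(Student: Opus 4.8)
The plan is to reduce the statement to the previous two results by an induction on $k$, using Lemma \ref{lem:dimT} to control exactly when adding a new generator $\alpha_{k+1}$ cuts the dimension of the intersection of kernels down by one. First I would set $V_k := \ker(T_{\alpha_1}) \cap \ldots \cap \ker(T_{\alpha_k})$ and $W_k := \langle \alpha_1, \ldots, \alpha_k \rangle$, with the convention $V_0 = \Fm$ and $W_0 = \{0\}$. The base case $k = 0$ is trivial, and the case where all $\alpha_i = 0$ is also immediate since then $V_k = \Fm$; by relabeling I may assume $\alpha_1 \neq 0$, which by Corollary \ref{cor:neq} gives $\dim_{\F_q} V_1 = m - 1 = m - \dim_{\F_q} W_1$.

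For the inductive step, assume $\dim_{\F_q} V_k = m - \dim_{\F_q} W_k$. I would split into two cases according to whether $\alpha_{k+1} \in W_k$ or not. If $\alpha_{k+1} \in W_k$, then Corollary \ref{cor:Tlin} writes $T_{\alpha_{k+1}}$ as an $\F_q$-linear combination of $T_{\alpha_1}, \ldots, T_{\alpha_k}$, so $V_k \subseteq \ker(T_{\alpha_{k+1}})$ and hence $V_{k+1} = V_k$; also $W_{k+1} = W_k$, so both sides of the claimed equality are unchanged and the induction goes through. If instead $\alpha_{k+1} \notin W_k$, then $\dim_{\F_q} W_{k+1} = \dim_{\F_q} W_k + 1$, and by (the contrapositive direction of) Lemma \ref{lem:dimT} the inclusion $V_k \subseteq \ker(T_{\alpha_{k+1}})$ fails, i.e. there exists $x \in V_k \setminus \ker(T_{\alpha_{k+1}})$. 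This means the $\F_q$-linear functional $T_{\alpha_{k+1}}$ restricted to $V_k$ is not identically zero, hence surjective onto $\F_q$, so its kernel in $V_k$ — which is exactly $V_{k+1}$ — has dimension $\dim_{\F_q} V_k - 1$. Combining, $\dim_{\F_q} V_{k+1} = \dim_{\F_q} V_k - 1 = m - \dim_{\F_q} W_k - 1 = m - \dim_{\F_q} W_{k+1}$, completing the induction.

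Alternatively, and perhaps more cleanly, one can avoid induction altogether: the map $\alpha \mapsto T_\alpha$ is $\F_q$-linear by Corollary \ref{cor:Tlin} and injective on $\Fm$ by Corollary \ref{cor:neq} (or Theorem \ref{thm:dualisom}), so $\{T_{\alpha_1}, \ldots, T_{\alpha_k}\}$ spans a subspace $U$ of $\Hom_{\F_q}(\Fm, \F_q)$ of dimension exactly $\dim_{\F_q} W_k$. Then $V_k = \bigcap_i \ker(T_{\alpha_i})$ is precisely the annihilator $U^\circ \subseteq \Fm$ of $U$ under the perfect pairing between $\Fm$ and its dual, and a standard dimension count for annihilators gives $\dim_{\F_q} U^\circ = m - \dim_{\F_q} U = m - \dim_{\F_q} W_k$. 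I would likely present whichever version fits the paper's level of detail; both rest on the same two corollaries.

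The only genuine obstacle is making sure the direction of Lemma \ref{lem:dimT} used in the inductive step is the right one: the lemma says $V_k \subseteq \ker(T_\beta)$ iff $\beta \in W_k$, so when $\alpha_{k+1} \notin W_k$ the lemma guarantees $V_k \not\subseteq \ker(T_{\alpha_{k+1}})$, which is exactly what forces the dimension drop. Everything else is routine linear algebra over $\F_q$ — the rank-nullity theorem applied to $T_{\alpha_{k+1}}|_{V_k}$, and bookkeeping of the two cases. One should also note at the outset that the statement is trivially true when the $\alpha_i$ are all zero, and that the relabeling in the non-induction argument (or the choice $\alpha_1 \neq 0$ in the induction) costs no generality since both sides of the asserted identity are symmetric in the $\alpha_i$.
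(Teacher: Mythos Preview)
Your inductive proof is correct and essentially matches the paper's: both induct on $k$, invoking Lemma~\ref{lem:dimT} to guarantee the dimension drop when the new generator is independent, with Corollary~\ref{cor:neq} anchoring the base case. The paper organizes things slightly differently --- it first reduces to linearly independent $\alpha_i$ and then uses the formula $\dim(S\cap K)=\dim S+\dim K-\dim(S+K)$ in place of your rank--nullity argument on $T_{\alpha_{k+1}}|_{V_k}$ --- but this is cosmetic; your alternative annihilator argument via the perfect pairing of Theorem~\ref{thm:dualisom} is a cleaner non-inductive route that the paper does not take.
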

\begin{proof}
Let $s:=\dim_{\F_q} \langle \alpha_1,\ldots, \alpha_k\rangle$. Without loss of generality we can suppose  $ \langle \alpha_1,\ldots, \alpha_k\rangle =  \langle \alpha_1,\ldots, \alpha_s\rangle$. By Lemma \ref{lem:dimT} we have $\ker (T_{\alpha_{s+1}}), \ldots,\ker(T_{\alpha_k}) \supseteq \ker (T_{\alpha_1}) \cap \ldots \cap \ker (T_{\alpha_s}) $, and hence 
$$ \ker (T_{\alpha_1}) \cap \ldots \cap \ker (T_{\alpha_k})=\ker (T_{\alpha_1}) \cap \ldots \cap \ker (T_{\alpha_s}).$$
Therefore, it is enough to prove the statement when $\alpha_1,\ldots, \alpha_k$ are linearly independent over $\F_q$. We use induction on $k$. If $k=1$ then $\dim_{\F_q}(\ker(T_{\alpha_1}))=m-1$ by Corollary \ref{cor:neq}.

Suppose now that the statement is true for $k-1$, i.e. $$\dim_{\F_q}(S)=m-k+1,$$
where $S:=\ker (T_{\alpha_1}) \cap \ldots \cap \ker (T_{\alpha_{k-1}})$. Then, by  Lemma \ref{lem:dimT}, $S\not\subseteq\ker (T_{\alpha_k})$, i.e. $S +\ker (T_{\alpha_k})=\Fm$. Therefore,
\begin{align*}
\dim_{\F_q}(S\cap \ker (T_{\alpha_k})) &= \dim_{\F_q}(S)+\dim_{\F_q}(\ker (T_{\alpha_k})) - \dim_{\F_q}(S+\ker (T_{\alpha_k}))\\
                     &=m-k+1+m-1-m \\
                     &= m-k.
\end{align*}
\end{proof}

Now, let $\alpha_1,\ldots, \alpha_k \in \Fm$ be $\F_q$-linearly independent and complete them to a basis $\alpha_1,\ldots,\alpha_m$ of $\Fm$. Let $\beta_1,\ldots, \beta_m$ be its dual bases. Then for every $i=1,\ldots,k$ we have $T_{\alpha_i}(\beta_j)=0$ for every $j=k+1,\ldots,m$, i.e. $\beta_j \in \ker(T_{\alpha_1})\cap\ldots\cap \ker(T_{\alpha_k})$. Moreover, by Proposition \ref{thm:kerT}, we get $\dim_{\F_q}(\ker(T_{\alpha_1})\cap\ldots\cap \ker(T_{\alpha_k}))=m-k$, and hence
$$ \ker(T_{\alpha_1})\cap\ldots\cap \ker(T_{\alpha_k}) =\langle \beta_{k+1}, \ldots, \beta_m \rangle. $$
We can now define the trace-orthogonal space of a subspace as follows.
 
\begin{definition}
 Let $ S:=\langle \alpha_1,\ldots, \alpha_k \rangle$ be an $\F_q$-subspace of $\Fm$. Then the \emph{trace-orthogonal} space of $S$ is defined as the $\F_q$-subspace 
 $$ S^{\times}:= \ker(T_{\alpha_1})\cap\ldots\cap \ker(T_{\alpha_k}). $$
\end{definition}

\begin{proposition}
 The subspace $S^{\times}$ is well-defined, i.e. it does not depend on the choice of the set of generators.
\end{proposition}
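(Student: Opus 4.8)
The statement to prove is that $S^{\times}$ is well-defined: if $\langle \alpha_1,\ldots,\alpha_k\rangle = \langle \alpha_1',\ldots,\alpha_\ell'\rangle$ as $\F_q$-subspaces of $\Fm$, then $\ker(T_{\alpha_1})\cap\cdots\cap\ker(T_{\alpha_k}) = \ker(T_{\alpha_1'})\cap\cdots\cap\ker(T_{\alpha_\ell'})$. The natural approach is to show that this intersection depends only on the subspace $S$ and not on any generating set, by giving an intrinsic description of it. First I would reduce to comparing an arbitrary finite generating set of $S$ with a fixed basis of $S$: it suffices to show that for any generating tuple $\alpha_1,\ldots,\alpha_k$ of $S$, the intersection $\bigcap_{i=1}^k \ker(T_{\alpha_i})$ equals $\bigcap_{\alpha \in S}\ker(T_\alpha)$, the latter being manifestly independent of the choice of generators. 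One inclusion is trivial since each $\alpha_i \in S$, so $\bigcap_{\alpha\in S}\ker(T_\alpha) \subseteq \bigcap_{i=1}^k\ker(T_{\alpha_i})$.

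For the reverse inclusion, take $x \in \bigcap_{i=1}^k \ker(T_{\alpha_i})$ and an arbitrary $\alpha \in S = \langle \alpha_1,\ldots,\alpha_k\rangle$. Writing $\alpha = \sum_i \lambda_i \alpha_i$ with $\lambda_i \in \F_q$, Corollary~\ref{cor:Tlin} gives $T_\alpha = \sum_i \lambda_i T_{\alpha_i}$, hence $T_\alpha(x) = \sum_i \lambda_i T_{\alpha_i}(x) = 0$, so $x \in \ker(T_\alpha)$. This shows $\bigcap_{i=1}^k \ker(T_{\alpha_i}) \subseteq \bigcap_{\alpha\in S}\ker(T_\alpha)$, and therefore both sides are equal to the intrinsic object $\bigcap_{\alpha \in S}\ker(T_\alpha)$. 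Since this description makes no reference to a particular generating set, $S^{\times}$ is well-defined.

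This argument is entirely elementary and there is no real obstacle; the only thing to be careful about is making explicit that the two potential definitions (via any generating tuple versus via all of $S$) agree, which is precisely the $\F_q$-linearity of $\alpha \mapsto T_\alpha$ recorded in Corollary~\ref{cor:Tlin}. One could alternatively invoke Proposition~\ref{thm:kerT} and Lemma~\ref{lem:dimT} to argue dimension-wise, but the direct argument above is cleaner and self-contained.

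\begin{proof}
We show that for any $\F_q$-subspace $S$ of $\Fm$ and any finite generating set $\alpha_1,\ldots,\alpha_k$ of $S$, one has
$$\ker(T_{\alpha_1})\cap\cdots\cap\ker(T_{\alpha_k}) = \bigcap_{\alpha\in S}\ker(T_\alpha).$$
Since the right-hand side does not depend on the choice of generators, this proves that $S^{\times}$ is well-defined.

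The inclusion $\bigcap_{\alpha\in S}\ker(T_\alpha) \subseteq \ker(T_{\alpha_1})\cap\cdots\cap\ker(T_{\alpha_k})$ is immediate, since each $\alpha_i$ lies in $S$. Conversely, let $x \in \ker(T_{\alpha_1})\cap\cdots\cap\ker(T_{\alpha_k})$ and let $\alpha \in S$. Write $\alpha = \lambda_1\alpha_1 + \cdots + \lambda_k\alpha_k$ with $\lambda_1,\ldots,\lambda_k \in \F_q$. By Corollary~\ref{cor:Tlin}, $T_\alpha = \lambda_1 T_{\alpha_1} + \cdots + \lambda_k T_{\alpha_k}$, whence
$$T_\alpha(x) = \lambda_1 T_{\alpha_1}(x) + \cdots + \lambda_k T_{\alpha_k}(x) = 0.$$
Thus $x \in \ker(T_\alpha)$, and since $\alpha \in S$ was arbitrary, $x \in \bigcap_{\alpha\in S}\ker(T_\alpha)$. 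This establishes the reverse inclusion and completes the proof.
\end{proof}
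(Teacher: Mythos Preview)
Your proof is correct and rests on the same underlying fact as the paper's, namely the $\F_q$-linearity of $\alpha\mapsto T_\alpha$ from Corollary~\ref{cor:Tlin}. The paper's argument differs only in packaging: it compares two generating sets directly by invoking Lemma~\ref{lem:dimT} (whose relevant direction is itself just Corollary~\ref{cor:Tlin}) to obtain each inclusion, whereas you instead identify the intersection with the intrinsic object $\bigcap_{\alpha\in S}\ker(T_\alpha)$. Your route is marginally more self-contained since it avoids the detour through Lemma~\ref{lem:dimT}, but the two arguments are essentially the same.
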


\begin{proof}
 Let $\{\alpha_1,\ldots, \alpha_k\}$ and $\{\alpha_1',\ldots, \alpha_t'\}$ be two sets of generators for a subspace $S$. We want to prove that $\ker(T_{\alpha_1})\cap\ldots\cap \ker(T_{\alpha_k})=\ker(T_{\alpha_1'})\cap\ldots\cap \ker(T_{\alpha_t'})$. For every $i=1,\ldots,k$, $\alpha_i \in \langle \alpha_1',\ldots,\alpha_t' \rangle$ and therefore, by Lemma \ref{lem:dimT},  it holds that $\ker(T_{\alpha_i})\supseteq \ker(T_{\alpha_1'})\cap\ldots\cap \ker(T_{\alpha_t'})$. Hence, 
$$ \ker(T_{\alpha_1})\cap\ldots\cap \ker(T_{\alpha_k})\supseteq\ker(T_{\alpha_1'})\cap\ldots\cap \ker(T_{\alpha_t'}).$$
The opposite inclusion is analogous.
\end{proof}

We already know the relation between the image of the map $\varphi_s$ and the kernel of the trace map (see Lemma \ref{lem:trace}). The following Lemma characterizes the preimage of any element under the map $\varphi_s$.

\begin{lemma}\label{lem:preimage}
 Let $\alpha \in \Fm$ and $s$ a positive integer coprime to $m$. Then 
\begin{enumerate}
\item $$|\varphi_s^{-1}(\{\alpha\})| = \begin{cases}q & \mbox{ if } \alpha \in \ker(\Tr) \\ 
0 & \mbox{ if } \alpha \notin \ker(\Tr)
\end{cases}$$
\item Let $\alpha \in \ker(\Tr)$. If $x_1, x_2 \in \varphi_s^{-1}(\{\alpha\})$, then $x_1-x_2 \in \F_q$, or equivalently, there exists an $x \in \Fm$ such that
$$\varphi_s^{-1}(\{\alpha\})=\left\{ x+\lambda\mid \lambda \in \Fq \right\}.$$

Moreover such an $x$ is of the form

$$ x= -\frac{1}{\Tr(\gamma)} \sum_{i=0}^{m-2}\left(\sigma^{i+1}(\gamma)\sum_{j=0}^i(\sigma^j(\alpha))\right).$$
where $\gamma \in \Fm$ is such that $\Tr(\gamma)\neq 0$, and $\sigma:=\theta^s$.
\end{enumerate}
\end{lemma}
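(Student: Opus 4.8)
The plan is to read part (1) straight off the structural properties of $\varphi_s$ recorded in Lemma~\ref{lem:trace}, and to obtain part (2) by a one-line coset argument followed by a direct verification that the displayed element $x$ is indeed a preimage.

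For part (1): by Lemma~\ref{lem:trace}(3) the map $\varphi_s$ is $\F_q$-linear; by part (4) its kernel is exactly $\F_q$; and by part (5) its image is exactly $\ker(\Tr)$. Hence if $\alpha\notin\ker(\Tr)$ then $\alpha$ lies outside the image of $\varphi_s$ and $\varphi_s^{-1}(\{\alpha\})=\emptyset$, while if $\alpha\in\ker(\Tr)$ then $\varphi_s^{-1}(\{\alpha\})$ is nonempty and, for any one of its elements $x$, equals the coset $x+\ker(\varphi_s)=x+\F_q$, a set of size $q$. The first assertion of part (2) is immediate from the same fact: if $x_1,x_2\in\varphi_s^{-1}(\{\alpha\})$ then $\varphi_s(x_1-x_2)=0$, so $x_1-x_2\in\ker(\varphi_s)=\F_q$, which is precisely the statement that $\varphi_s^{-1}(\{\alpha\})=\{x+\lambda\mid\lambda\in\F_q\}$ for a fixed preimage $x$.

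It remains to check that the closed-form expression is a preimage. An element $\gamma$ with $\Tr(\gamma)\neq0$ exists by surjectivity of $\Tr$ (Lemma~\ref{lem:trace}(2)), so the formula is well defined. Since $\gcd(m,s)=1$, the automorphism $\sigma=\theta^s$ also generates $\Gal(\Fm/\F_q)$, whence $\sum_{i=0}^{m-1}\sigma^i=\Tr$. Introduce the partial sums $S_i:=\sum_{j=0}^{i}\sigma^{j}(\alpha)$ for $0\le i\le m-1$; they satisfy $S_0=\alpha$, the recursion $\sigma(S_i)=S_{i+1}-\alpha$, and $S_{m-1}=\Tr(\alpha)=0$ because $\alpha\in\ker(\Tr)$. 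Writing $c:=-1/\Tr(\gamma)$, the displayed $x$ is $x=c\sum_{i=0}^{m-2}\sigma^{i+1}(\gamma)S_i$; applying $\sigma$, using $\sigma(S_i)=S_{i+1}-\alpha$ and re-indexing gives
$$\sigma(x)=c\sum_{k=1}^{m-1}\sigma^{k+1}(\gamma)\bigl(S_k-\alpha\bigr).$$
Isolating the $k=m-1$ term (which vanishes since $S_{m-1}=0$), recognizing the remaining sum $\sum_{k=1}^{m-2}\sigma^{k+1}(\gamma)S_k$ as $x/c-\sigma(\gamma)\alpha$, and using $\sum_{l=0}^{m-1}\sigma^{l}(\gamma)=\Tr(\gamma)$ to evaluate $\alpha\sum_{k=1}^{m-1}\sigma^{k+1}(\gamma)=\alpha\bigl(\Tr(\gamma)-\sigma(\gamma)\bigr)$, one finds $\sigma(x)=x-c\,\alpha\,\Tr(\gamma)=x+\alpha$, i.e. $\varphi_s(x)=\sigma(x)-x=\alpha$.

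The only place requiring care is the bookkeeping in this last computation — separating the boundary term $k=m-1$, matching the truncated sum back to $x$ itself, and collecting the $\alpha$-coefficients via the trace identity. Once the partial sums $S_i$ and the relation $\sigma(S_i)=S_{i+1}-\alpha$ are in place, everything is formal; this is essentially the classical constructive proof of the additive Hilbert 90, specialized so that the normalizing constant $c=-1/\Tr(\gamma)$ turns $\sigma(x)-x$ into exactly $\alpha$.
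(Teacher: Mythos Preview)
Your proof is correct and follows essentially the same approach as the paper's. The only differences are cosmetic: for part~(1) you invoke Lemma~\ref{lem:trace}(4) directly to identify $\ker(\varphi_s)=\F_q$, whereas the paper deduces $|\ker(\varphi_s)|=q$ via the image size; and for the explicit preimage the paper simply declares the verification ``a straightforward computation'' while you actually carry it out (correctly).
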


\begin{proof}
\begin{enumerate}
\item If $\alpha \notin \ker(\Tr)$, then, by part (5) of Lemma \ref{lem:trace}  we have $\varphi_s^{-1}(\{\alpha\})=\emptyset$. On the other hand, if $\alpha \in \ker(\Tr)$, then $|\varphi_s^{-1}(\{\alpha\})|=|\ker(\varphi_s)|$, since $\varphi_s$ is an $\F_q$-linear map. By part (2) of Lemma \ref{lem:trace}
$$q^{m-1}=|\ker(\Tr)|=|\mathrm{Im}(\varphi_s)|=\frac{|\Fm|}{|\ker(\varphi_s)|},$$
and therefore we get $|\varphi_s^{-1}(\{\alpha\})|=q$.
\item For the first part, let $x_1,x_2\in \varphi_s^{-1}(\{\alpha\})$. Hence,
$\varphi_s(x_1)-\varphi_s(x_2)=0,$ and by linearity of $\varphi_s$, we get $\varphi_s(x_1-x_2)=0$. By part (4) of Lemma \ref{lem:trace} we get $x_1-x_2 \in \F_q$. Finally, showing that $\varphi_s(x)=\alpha$ is a straightforward computation.
\end{enumerate}
\end{proof}

We conclude this section with a useful result on the linear independence of preimages of $\varphi_s $.

\begin{lemma}\label{lem:preindip}
Let $\alpha_1,\ldots, \alpha_k\in \ker(\Tr)$, $s$ be a positive integer coprime to $m$ and $\sigma:=\theta^s$. Suppose moreover that $\beta_1,\ldots,\beta_k \in \Fm$ are such that $\sigma(\beta_i)-\beta_i=\alpha_i$. Then, the elements $\alpha_1,\ldots, \alpha_k$ are linearly independent over $\F_q$ if and only if the elements $1,\beta_1,\ldots, \beta_k$ are linearly independent over $\F_q$. 
\end{lemma}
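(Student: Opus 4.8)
The plan is to exploit the fact that $\varphi_s$ is an $\F_q$-linear endomorphism of $\Fm$ whose kernel is exactly $\F_q$, which is precisely the content of parts (3) and (4) of Lemma~\ref{lem:trace}. Since $\varphi_s(\beta_i) = \sigma(\beta_i) - \beta_i = \alpha_i$ and $\varphi_s(1) = \sigma(1) - 1 = 0$, the map $\varphi_s$ transports any $\F_q$-linear relation among $1, \beta_1, \ldots, \beta_k$ to an $\F_q$-linear relation among $\alpha_1, \ldots, \alpha_k$, and conversely a relation among the $\alpha_i$ pulls back (via $\ker \varphi_s = \F_q$) to a relation involving $1$ and the $\beta_i$. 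So both implications reduce to a one-line application of linearity.

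First I would prove the ``if'' direction: assume $1, \beta_1, \ldots, \beta_k$ are $\F_q$-linearly independent, and suppose $\sum_{i=1}^k \lambda_i \alpha_i = 0$ with $\lambda_i \in \F_q$. Applying $\varphi_s$ backwards, $\varphi_s\left(\sum_i \lambda_i \beta_i\right) = \sum_i \lambda_i \alpha_i = 0$, so by part (4) of Lemma~\ref{lem:trace} there is $\mu \in \F_q$ with $\sum_i \lambda_i \beta_i = \mu$. This is the relation $\mu \cdot 1 - \sum_i \lambda_i \beta_i = 0$ among $1, \beta_1, \ldots, \beta_k$, so $\mu = 0$ and all $\lambda_i = 0$; hence the $\alpha_i$ are independent.

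Then I would prove the ``only if'' direction in the same spirit: assume $\alpha_1, \ldots, \alpha_k$ are $\F_q$-linearly independent, and suppose $\mu \cdot 1 + \sum_{i=1}^k \lambda_i \beta_i = 0$ with $\mu, \lambda_i \in \F_q$. Applying $\varphi_s$ and using $\F_q$-linearity (part (3) of Lemma~\ref{lem:trace}) together with $\varphi_s(1) = 0$, we get $\sum_i \lambda_i \alpha_i = 0$, so all $\lambda_i = 0$, and then $\mu \cdot 1 = 0$ forces $\mu = 0$.

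Honestly there is no real obstacle here; the only point that requires the finite-field input is the identification $\ker \varphi_s = \F_q$ (Lemma~\ref{lem:trace}(4)), used to control the ``constant'' term $\mu$ in the first direction. Everything else is elementary linear algebra over $\F_q$, and the hypothesis $\alpha_i \in \ker(\Tr)$ is only needed to guarantee that the $\beta_i$ exist at all (cf.\ Lemma~\ref{lem:trace}(5) and Lemma~\ref{lem:preimage}); it plays no role in the independence argument itself.
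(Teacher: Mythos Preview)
Your proposal is correct and takes essentially the same approach as the paper: both arguments rest on the single identity $\varphi_s\!\left(\sum_i \lambda_i \beta_i\right) = \sum_i \lambda_i \alpha_i$ together with $\ker\varphi_s = \F_q$ from Lemma~\ref{lem:trace}(4). The paper just compresses your two directions into one line by reading that identity as an ``if and only if'', whereas you spell out each implication separately; this is a purely cosmetic difference.
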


\begin{proof}
 Suppose $\lambda_1,\ldots,\lambda_k\in \F_q$ and consider the sum
 $$\sum_i \lambda_i\alpha_i=\sum_i \lambda_i(\sigma(\beta_i)-\beta_i)= \sigma(\sum_i \lambda_i\beta_i)-\sum_i \lambda_i\beta_i.$$
This means that a non-trivial combination of the $\alpha_i$'s is zero if and only if a non-trivial combination of the $\beta_i$'s belongs to $\ker \varphi_s$. This is equivalent, by part (4) of Lemma \ref{lem:trace}, to $\sum_i \lambda_i\beta_i \in \F_q$, i.e.  $1,\beta_1,\ldots, \beta_k$ are linearly dependent over $\F_q$. 
\end{proof}

\subsection{GRS codes and Generalized Cauchy Matrices}

In classical coding theory the most studied and well-known class of codes is definitely represented by generalized Reed-Solomon codes. These codes were introduced in \cite{rs60} and through the years were deeply studied by many authors. Their importance is due to the fact that they are maximum distance separable, and possess very fast algorithms for their encoding and decoding procedures \cite{gu98, ko03}.
In this section we are going to briefly describe them, focusing in particular on their generator matrices.

 Let $n$ be a positive integer. The Hamming distance $d_H$ on $\Fq^n$ is defined as
$$\begin{array}{rcl}
d_H: \F_q^n \times \F_q^n : & \longrightarrow & \mathbb N \\
(u,v) & \longmapsto & |\{i \mid u_i \neq v_i \}|.
\end{array}$$
It is well-known that $d_H$ defines indeed a metric on $\F_q^n$. With this metric, classical coding theory was developed in the last 70 years, focusing on many different classes of codes. In this section we will only consider linear codes.

\begin{definition}
 Let $0<k \leq n$ be two positive integers. A \emph{linear code} $\C$ of dimension $k$ and length $n$ over a finite field $\F_q$ is a $k$-dimensional $\F_q$-subspace of $\F_q^n$ equipped with the Hamming distance. The \emph{minimum distance} of $\C$ is the integer
$$d_H(\C):= \min\left\{ d_H(u,v) \mid u,v \in \C, u\neq v\right\}.$$
A matrix $G \in \F_q^{k\times n}$ is called a \emph{generator matrix} for the code $\C$ if $\C=\rs(G)$, where $\rs(G)$ denotes the subspace generated by the rows of $G$, called the \emph{row space} of $G$.
\end{definition}

It is well known that the minimum distance $d$ of any linear code of dimension $k$ and length $n$ satisfies the following inequality:
$$d\leq n-k+1.$$
This bound is known as Singleton bound \cite{si64} and codes meeting it with equality are called
 called  \emph{maximum distance separable (MDS) codes}.

Among all the possible generator matrices of an MDS code, there exists one in a special form. Indeed, it is easy to verify that every MDS code of length $n$ and dimension $k$ has a generator matrix of the form $G=(I_k \mid X)$, where $X \in \Fq^{k\times (n-k)}$ and $I_k$ denotes the $k \times k$ identity matrix. Such a generator matrix is said to be in  \emph{standard form}, or equivalently, in \emph{systematic form}. Hence, for a given matrix $X\in\Fq^{k \times(n-k)}$, we denote by $\C_X$ the code generated by $(I_k \mid X)$. 
 It is well-known that MDS codes can be characterized by the non-systematic part of their generator matrix in standard form. Concretely, we have the following result.

\begin{theorem}\label{thm:MDSsuperregular}
A linear code $\C_X\subseteq \Fq^n$ is MDS if and only if the matrix $X$ is \emph{superregular}\footnote{A matrix $A \in \F^{r \times t}$ is said to be superregular if all its minors are nonzero} 
\end{theorem}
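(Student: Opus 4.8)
The plan is to prove both implications by connecting minors of $(I_k \mid X)$ to minors of $X$. First I would recall the standard fact that a linear code $\C$ of length $n$ and dimension $k$ with generator matrix $G$ is MDS if and only if every set of $k$ columns of $G$ is linearly independent, i.e.\ every $k \times k$ submatrix of $G$ is invertible. So the entire statement reduces to the claim: for $G = (I_k \mid X)$, every $k\times k$ submatrix of $G$ is nonsingular if and only if every square submatrix of $X$ (of every size) is nonsingular.

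The key step is the determinantal identity relating the two. Fix a selection of columns of $G$: say we keep $k - r$ of the identity columns (indexed by a set $J \subseteq \{1,\dots,k\}$ with $|J| = k-r$) and $r$ of the columns of $X$ (indexed by a set $L \subseteq \{1,\dots,n-k\}$ with $|L| = r$). Expanding the determinant of the resulting $k\times k$ submatrix of $G$ along the $k-r$ unit-vector columns, one sees that it equals, up to sign, the $r\times r$ minor of $X$ obtained by selecting the rows of $X$ indexed by $\{1,\dots,k\}\setminus J$ and the columns indexed by $L$. Conversely, every $r\times r$ minor of $X$ arises this way for a suitable choice of $J$ and $L$. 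Since $r$ ranges over $0,\dots,\min(k,n-k)$ as $J,L$ vary, the correspondence is a bijection between ``$k$-subsets of columns of $G$'' and ``square submatrices of $X$ of arbitrary size'' (including the empty minor, which is $1$, corresponding to taking all $k$ identity columns). Hence $G$ has all $k\times k$ submatrices nonsingular if and only if $X$ has all square submatrices nonsingular, which is precisely the definition of $X$ being superregular.

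For the direction ``$\C_X$ MDS $\Rightarrow X$ superregular'', one takes an arbitrary $r\times r$ submatrix of $X$, builds the corresponding $k$-subset of columns of $G$ via the construction above, uses that this submatrix of $G$ is invertible (MDS), and reads off from the determinantal identity that the chosen minor of $X$ is nonzero. For the converse, one takes an arbitrary $k$-subset of columns of $G$, which determines sets $J$ and $L$ and an associated square submatrix of $X$; superregularity of $X$ gives that this minor is nonzero, so the $k\times k$ submatrix of $G$ is invertible, and since this holds for all column selections, $\C_X$ is MDS.

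The main obstacle is making the determinantal expansion precise — in particular tracking the sign and being careful about the edge cases $r = 0$ (all identity columns: the minor is the empty determinant $1$, so this imposes no condition, consistent with $I_k$ always being invertible) and $r = k$ or $r = n-k$ (all of one block is used). None of this is deep; it is a bookkeeping exercise with Laplace expansion. I would state it cleanly as a lemma: for $G = (I_k \mid X)$ and any index sets as above, $\det G_{[k],J\cup(k+L)} = \pm \det X_{([k]\setminus J),\,L}$, and then the theorem follows immediately from the MDS $\Leftrightarrow$ ``all $k$-column-subsets independent'' characterization.
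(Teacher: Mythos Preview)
Your proof is correct and is the standard argument for this classical result. Note, however, that the paper does not actually give a proof of this theorem: it is stated as a well-known characterization of MDS codes (introduced with ``It is well-known that MDS codes can be characterized by the non-systematic part of their generator matrix in standard form''), so there is no proof in the paper to compare against. Your write-up via the Laplace expansion identity $\det G_{[k],\,J\cup(k+L)} = \pm \det X_{[k]\setminus J,\,L}$ combined with the ``every $k$ columns of $G$ are independent'' characterization of MDS is exactly the textbook route and would serve perfectly well as a proof here.
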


Let $0<k\leq n$ be two positive integers, and consider the set of polynomials over $\F_q$ of degree strictly less than $k$
 $$ \Fq[x]_{<k}:= \left\{f(x) \in \Fq[x] \mid \deg f<k\right\}.$$

\begin{definition}
Suppose moreover that $n\leq q$, and consider $\alpha_1,\dots, \alpha_n \in \Fq$ pairwise distinct elements, and $b_1,\ldots, b_n \in \Fq^*$. The code
 $$\C= \left\{(b_1f(\alpha_1),b_2f(\alpha_2),\ldots,b_nf(\alpha_n))\mid f\in \Fq[x]_{<k}\right\}$$
is called \emph{Generalized Reed-Solomon (GRS) code} and it is denoted by $\mathrm{GRS}_{n,k}(\bs\alpha, \bs b)$,
where $\bs \alpha=(\alpha_1,\ldots, \alpha_n)$ and $\bs b=(b_1,\ldots, b_n)$.
\end{definition}

It is well-known that GRS codes are MDS and that the canonical generator matrix for a GRS code $\C=\mathrm{GRS}_{n,k}(\bs\alpha, \bs b)$ is given by the  \emph{weighted Vandermonde matrix} that is

\[ \left( \begin{array}{cccc} b_1 & b_2 &\dots & b_n \\  b_1\alpha_1 & b_2\alpha_2 &\dots & b_n\alpha_n \\
 b_1\alpha_1^2 & b_2\alpha_2^2 &\dots & b_n\alpha_n^2 \\
    \vdots& \vdots&&\vdots \\  b_1\alpha_1^{k-1} &  b_2\alpha_2^{k-1} &\dots
    & b_n\alpha_n^{k-1} \end{array}\right)= V_k(\bs\alpha) \mathrm{diag}(\bs b),\]
where $V_k(\bs\alpha)$ is the classical Vandermonde matrix, and $\mathrm{diag}(\bs b)$ denotes the diagonal matrix whose diagonal entries are given by $b_1,\ldots, b_n$.  This generator matrix is  obtained by choosing the set of monomials $\left\{1,x,x^2,\ldots, x^{k-1} \right\}$ as an $\F_q$-basis of $\Fq[x]_{<k}$, and then evaluating each of them in the points $\alpha_1,\ldots, \alpha_n$. This is why we refer to it as the \emph{canonical generator matrix}.

In 1985 Roth and Seroussi \cite{ro85} studied the generator matrix in standard form of a GRS code, giving a complete characterization. The same result was given by D\"ur in \cite{dur1987automorphism}.

\begin{definition}\label{def:cauchy}
Let $r,s$ be  positive integers, $x_1,\ldots,x_r, y_1,\ldots,y_s \in \F_q$,  and $c_1,\ldots, c_r$, $d_1,\ldots,d_s \in \Fq^*$ be elements such that
\begin{itemize}
\item[(a)] $ x_1,\ldots,x_r$ pairwise distinct,
\item[(b)] $ y_1,\ldots,y_{s}$ pairwise distinct,
\item[(c)] $y_i \in \Fq\setminus \{x_1,\ldots, x_r\}$, for $i=1,\ldots, s$.
\end{itemize}
The matrix $C \in \Fq^{r\times s}$ defined by
$$C_{{i,j}}={\frac  {c_{i}d_{j}}{x_{i}-y_{j}}}$$
is called \emph{Generalized Cauchy (GC) matrix}.
\end{definition}

\begin{theorem}\cite[Theorem 1]{ro85}\label{thm:GRSGC}, \cite[Theorem 2]{dur1987automorphism}
 \begin{enumerate}
\item If $\C=\mathrm{GRS}_{n,k}(\bs\alpha, \bs b) $, then $\C=\C_X$, where 
$X \in \Fq^{k \times (n-k)}$ is a GC matrix.
\item If $X \in \Fq^{k \times (n-k)}$ is a GC matrix then the code $\C_X$ 
 is a Generalized Reed-Solomon code.
\end{enumerate}
\end{theorem}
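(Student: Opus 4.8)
The plan is to write down explicitly the generator matrix in standard form of a GRS code by means of Lagrange interpolation, recognize its non-systematic block as a Generalized Cauchy matrix, and then deduce part (2) by inverting this construction.

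For part (1), let $\C=\mathrm{GRS}_{n,k}(\bs\alpha,\bs b)$ with $k<n$. Since $\C$ is MDS, any $k$ columns of a generator matrix are $\F_q$-linearly independent: otherwise there would be a nonzero codeword vanishing on $k$ coordinates, hence of weight at most $n-k<n-k+1$. In particular the first $k$ columns of the weighted Vandermonde matrix form an invertible block, so $\C$ admits a generator matrix $G=(I_k\mid X)$, and this is the unique such matrix. Its $i$-th row is the codeword $(b_1f_i(\alpha_1),\dots,b_nf_i(\alpha_n))$, where $f_i\in\F_q[x]_{<k}$ is forced by $b_jf_i(\alpha_j)=\delta_{ij}$ for $j=1,\dots,k$; Lagrange interpolation then gives
$$ f_i(x)=\frac{1}{b_i}\prod_{\substack{l=1\\ l\neq i}}^{k}\frac{x-\alpha_l}{\alpha_i-\alpha_l}. $$
Using the identity $\prod_{l\neq i,\,l\leq k}(\alpha_{k+j}-\alpha_l)=\big(\prod_{l=1}^{k}(\alpha_{k+j}-\alpha_l)\big)\big/(\alpha_{k+j}-\alpha_i)$, evaluation at $\alpha_{k+j}$ yields
$$ X_{i,j}=b_{k+j}f_i(\alpha_{k+j})=\frac{c_id_j}{x_i-y_j}, $$
with $x_i:=\alpha_i$, $y_j:=\alpha_{k+j}$, $c_i:=-\big(b_i\prod_{l\neq i,\,l\leq k}(\alpha_i-\alpha_l)\big)^{-1}$ and $d_j:=b_{k+j}\prod_{l=1}^{k}(\alpha_{k+j}-\alpha_l)$. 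Conditions (a)–(c) of Definition~\ref{def:cauchy} hold because $\alpha_1,\dots,\alpha_n$ are pairwise distinct, and $c_i,d_j\neq 0$ because the $b_i$ are nonzero and every difference appearing is nonzero; hence $X$ is a GC matrix.

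For part (2), start from a GC matrix $X$ with data $x_i,y_j,c_i,d_j$ (so $r=k$, $s=n-k$) and reverse the dictionary: set $\alpha_i:=x_i$ for $i\leq k$, $\alpha_{k+j}:=y_j$ for $j\leq n-k$, and solve the formulas above for the weights, $b_i:=-\big(c_i\prod_{l\neq i,\,l\leq k}(x_i-x_l)\big)^{-1}$ for $i\leq k$ and $b_{k+j}:=d_j\big(\prod_{l=1}^{k}(y_j-x_l)\big)^{-1}$ for $j\leq n-k$. By (a)–(c) the $\alpha_i$ are $n$ pairwise distinct elements of $\F_q$ (so automatically $n\leq q$), and by the nonvanishing hypotheses all $b_i$ are nonzero, so $\mathrm{GRS}_{n,k}(\bs\alpha,\bs b)$ is well-defined; rerunning the computation of part (1) on these parameters returns exactly the matrix $X$, whence $(I_k\mid X)$ generates $\mathrm{GRS}_{n,k}(\bs\alpha,\bs b)$ and therefore $\C_X=\mathrm{GRS}_{n,k}(\bs\alpha,\bs b)$.

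The only genuinely non-routine point is the algebraic simplification in part (1): one must check that the rational function produced by the Lagrange formula factors as $c_id_j/(x_i-y_j)$ with $c_i$ depending only on the row index and $d_j$ only on the column index, which is precisely the content of the product identity displayed above. Everything else is bookkeeping — verifying the distinctness and nonvanishing conditions of Definition~\ref{def:cauchy} in both directions, and noting in passing that the standard-form generator matrix of an MDS code is unique, so that the two directions are genuinely inverse correspondences between GRS parameters and GC parameters.
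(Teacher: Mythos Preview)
The paper does not supply its own proof of this statement; it is quoted as a known result from \cite{ro85} and \cite{dur1987automorphism}. Your argument via Lagrange interpolation is correct and is essentially the classical proof given in those references, so there is nothing to compare or correct.
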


Theorem \ref{thm:GRSGC} gives a correspondence between GRS codes of dimension $k$ and length $n$ over $\F_q$,  and $k\times (n-k)$ GC matrices over $\F_q$. Moreover, in \cite[Lemma 7]{ro89}, a characterization of the GC in terms of its entries was given. We are now going to reformulate this result for our purpose, in order to underline that it gives a way to determine whether a code is a GRS code in terms of its generator matrix in standard form. 

 Let $A \in (\Fq^*)^{r \times s}$ with entries $a_{i,j}$. We denote by $A^{(-1)}$ the $r \times s$ matrix over $\Fq^*$ whose entries are $a_{i,j}^{-1}$. 

\begin{theorem}\cite{ro89}\label{thm:MDSGC}
 Let $X\in \Fq^{k \times (n-k)}$ Then, the code $\C_X$ is a GRS code if and only if 
\begin{enumerate}
 \item[(i)] every entry $x_{i,j}$ is non-zero,
 \item[(ii)] every $2 \times 2$ minor of $X^{(-1)}$ is non-zero, and
\item[(iii)] $\rk(X^{(-1)})=2$.
\end{enumerate}
\end{theorem}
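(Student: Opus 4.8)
The plan is to prove this by connecting the three conditions (i)--(iii) directly to Definition~\ref{def:cauchy} via Theorem~\ref{thm:GRSGC}. For the forward direction, suppose $\C_X$ is a GRS code; then by Theorem~\ref{thm:GRSGC}(1) we may write $X$ as a GC matrix, $x_{i,j} = \frac{c_i d_j}{z_i - w_j}$ with the $z_i$ pairwise distinct, the $w_j$ pairwise distinct, and $w_j \notin \{z_1,\ldots,z_k\}$, and all $c_i, d_j$ nonzero. Condition (i) is then immediate from $c_i, d_j \neq 0$ and $z_i \neq w_j$. For (ii) and (iii) the key observation is that $x_{i,j}^{-1} = \frac{z_i}{c_i d_j} - \frac{w_j}{c_i d_j} = \frac{1}{d_j}\cdot\frac{z_i}{c_i} + \frac{1}{c_i}\cdot\frac{-w_j}{d_j}$, so the matrix $X^{(-1)}$ factors as a product of a $k \times 2$ matrix with rows $(z_i/c_i,\ 1/c_i)$ and a $2 \times (n-k)$ matrix with columns $(1/d_j,\ -w_j/d_j)^{\mathsf T}$. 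Both factors have rank $2$ (the first because the map $i \mapsto z_i$ is injective, combined with nonzero $c_i$; the second similarly), which gives (iii); and every $2\times 2$ minor of $X^{(-1)}$ equals a product of a $2\times 2$ minor of the first factor and one of the second factor, each of which is a Vandermonde-type determinant in distinct values scaled by nonzero constants, hence nonzero --- this gives (ii).

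For the converse, assume $X$ satisfies (i)--(iii). By (i) we can form $X^{(-1)}$, and by (iii) its column space is $2$-dimensional. The strategy is to reconstruct a GC representation from the rank-$2$ factorization. Write $X^{(-1)} = U V$ with $U \in \Fq^{k\times 2}$ and $V \in \Fq^{2\times(n-k)}$ of full rank $2$; denote the rows of $U$ by $(u_i, u_i')$ and the columns of $V$ by $(v_j, v_j')^{\mathsf T}$, so $x_{i,j}^{-1} = u_i v_j + u_i' v_j'$. Condition (ii) forces, for every $i \neq i'$, the determinant $u_i u_{i'}' - u_i' u_{i'} \neq 0$ and similarly for the columns of $V$; in particular no row of $U$ is zero and no two rows are proportional, and likewise for $V$. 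Up to replacing $U$ by $UM$ and $V$ by $M^{-1}V$ for a suitable $M \in \GL_2(\Fq)$ --- a change of basis that affects neither (ii) nor (iii) --- I can normalize so that, say, $u_i' \neq 0$ for all $i$ and $v_j \neq 0$ for all $j$; then setting $c_i := 1/u_i'$, $z_i := u_i/u_i'$, $d_j := 1/v_j$, $w_j := -v_j'/v_j$ yields $x_{i,j}^{-1} = (z_i - w_j)/(c_i d_j)$, i.e. $x_{i,j} = \frac{c_i d_j}{z_i - w_j}$. The non-proportionality conditions translate exactly into $z_i$ pairwise distinct and $w_j$ pairwise distinct; and $x_{i,j}^{-1} \neq 0$ (finiteness of $x_{i,j}$) gives $z_i \neq w_j$, i.e. $w_j \notin \{z_1,\ldots,z_k\}$. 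Since $n \le q$ is needed for a GRS code, one also checks that the total number of distinct values $z_i, w_j$ does not exceed $q$; this is automatic because they are $n$ distinct elements of $\Fq$. Thus $X$ is a GC matrix, and Theorem~\ref{thm:GRSGC}(2) gives that $\C_X$ is a GRS code.

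The main obstacle I anticipate is the normalization step in the converse: a priori there is no single $M \in \GL_2(\Fq)$ making $u_i' \neq 0$ for all $i$ \emph{and} $v_j \neq 0$ for all $j$ simultaneously, because the "bad" rows of $U$ and the "bad" columns of $V$ are cut out by different linear conditions on $M$. One must argue that over $\Fq$ there exists a choice of the second basis vector avoiding finitely many proportionality classes (this uses $q$ large enough relative to $k$ and $n-k$, which is exactly the regime $n \le q$ where GRS codes live) --- or, more robustly, handle the entries where $u_i' = 0$ or $v_j = 0$ separately by allowing the "point at infinity" in the GC description, as is standard in the Roth--Seroussi/D\"ur framework, and then observe that condition (ii) actually rules out such degenerate rows/columns after the appropriate generic change of basis. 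Getting this bookkeeping exactly right, and reconciling it with the precise form of Definition~\ref{def:cauchy} (which does not include an infinite point), is the delicate part; everything else is the routine factorization-and-Vandermonde computation sketched above.
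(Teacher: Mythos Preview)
The paper does not actually prove this theorem; it is cited from \cite{ro89} (reformulated from \cite[Lemma~7]{ro89}) and stated without proof. So there is no in-paper argument to compare your proposal against.

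That said, your approach is the standard one and is essentially correct. The forward direction is routine. For the converse, your identification of the normalization step as the only delicate point is accurate, and it can in fact be resolved cleanly: choosing $M$ amounts to picking a direction $[r:s] \in \mathbb{P}^1(\Fq)$ for the second column of $M$ that avoids the $k$ directions $[u_i':-u_i]$ determined by the rows of $U$ and the $n-k$ directions $[v_j:v_j']$ determined by the columns of $V$. Condition~(i) (every entry of $X^{(-1)}$ nonzero, i.e.\ $u_iv_j+u_i'v_j'\neq 0$) forces these two bad sets to be \emph{disjoint}, so one must avoid exactly $n$ points in a projective line of size $q+1$; this succeeds precisely when $n\leq q$, which is already the standing hypothesis for GRS codes in the paper. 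Your ``point at infinity'' concern is exactly the borderline case $n=q+1$ (extended GRS codes), which lies outside the paper's definition of a GRS code, so no additional bookkeeping is needed here. Once $M$ is chosen, your formulas $z_i=u_i/u_i'$, $w_j=-v_j'/v_j$, $c_i=1/u_i'$, $d_j=1/v_j$ recover the GC representation, and the pairwise non-proportionality from (ii) yields distinctness of the $z_i$ and of the $w_j$ exactly as you say.
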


In the following we will see that the analogue of GRS in the rank metric is given by  generalized Gabidulin codes. We will find the same kind of correspondence between them and the rank analogue of GC matrices, obtained by characterizing their generator matrix in standard form. Moreover, we will also find an analogue of Theorem \ref{thm:MDSGC} in that framework.

\section{Rank Metric Codes}

In this section we will give a recap about rank metric codes. In particular, we will only study those that linear over the extension field.
Given a finite field $\F_q$ and an extension field $\Fm$,
recall that $\F_{q^m}$ is isomorphic, as
an $\F_q$-vector space, to $\F_q^m$.
Using this fact, one then easily obtains the isomorphic description of matrices over
the base field $\F_q$ as vectors over the extension field, i.e.\
$\F_q^{m\times n}\cong \F_{q^m}^n$. In this setting, let $g=(g_1,\ldots, g_n)\in \Fm^n$. We define the \emph{$\Fm$-support of $g$ over $\Fq$} the $\Fq$-subspace
$$\rsu_q(g):=\langle g_1,\ldots, g_n\rangle_{\Fq}.$$
Moreover, we denote by $\rk_q(g)=\dim_{\Fq}(\rsu_q(g))$, which is called the \emph{$q$-rank of $g$}.

Unless otherwise specified, whenever we talk about vectors in $\F^n$ over a field $\F$, in this work we will always mean row vectors.
\begin{definition}
  The \emph{rank distance} $d_R$ on $\F_q^{m\times n}$ is defined by
  \[\mathrm{d}_R(X,Y):= \rk(X-Y) , \quad X,Y \in \F_q^{m\times n}. \]
  Analogously, we define the rank distance between two elements
  $\boldsymbol x,\boldsymbol y \in \F_{q^m}^n$ as 
$$\mathrm{d}_r(\bs x, \bs y):=\rk_q(\bs x-\bs y),$$
which corresponds to the rank of the
  difference of the respective matrix representations in
  $\F_q^{m\times n}$.
\end{definition}

In this paper we will focus on $ \F_{q^m}$-linear rank metric codes in
$\F_{q^m}^n$, i.e.\ those codes that form a subspace of $
\F_{q^m}^n$.
\begin{definition}
  An $\F_{q^m}$-\emph{linear rank metric code $\mathcal C$} of length
  $n$ and dimension $k$ is a $k$-dimensional $\Fm$-subspace of $\F_{q^m}^n$
  equipped with the rank distance $\mathrm{d}_r$.
\end{definition}

As in the Hamming metric case, one defines the \emph{minimum rank distance} of $\C$ as
$$\mathrm{d}_r(\C):=\min\left\{\mathrm{d}_r(u,v) \mid u,v\in \C, u\neq v\right\},$$  
and a \emph{generator matrix} $G\in\F_{q^m}^{k\times n}
  $  as a matrix whose row space (over $\Fm$) is $\C$.

The well-known
Singleton bound for codes in the Hamming metric implies  an upper
bound for rank metric codes.

\begin{theorem}\cite[Section~2]{ga85a}\label{thm:singrank}
  Let $\mathcal{C}\subseteq \F_{q^m}^{n}$ be an $\Fm$-linear rank metric code with
  minimum rank distance $d$ of dimension $k$. Then
$$ d\leq n-k+1  .$$
\end{theorem}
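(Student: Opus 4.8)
The statement to prove is that an $\Fm$-linear rank metric code $\mathcal C\subseteq\Fm^n$ of dimension $k$ with minimum rank distance $d$ satisfies $d\le n-k+1$. The plan is to derive this from the ordinary Singleton bound in the Hamming metric, exploiting the elementary inequality relating rank and Hamming weight of a vector in $\Fm^n$. The key observation is that for any $\boldsymbol x\in\Fm^n$ one has $\rk_q(\boldsymbol x)\le \wt_H(\boldsymbol x)$, where $\wt_H$ is the Hamming weight: the nonzero coordinates of $\boldsymbol x$ span an $\Fq$-subspace of dimension at most the number of nonzero coordinates. Consequently $\mathrm d_r(\boldsymbol u,\boldsymbol v)\le \mathrm d_H(\boldsymbol u,\boldsymbol v)$ for all $\boldsymbol u,\boldsymbol v$, and hence, viewing $\mathcal C$ as an $\Fm$-linear code of length $n$ and dimension $k$ in the Hamming metric, its minimum Hamming distance $d_H(\mathcal C)$ is at least $d=\mathrm d_r(\mathcal C)$.

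First I would make precise the rank/weight comparison: take $\boldsymbol x\in\Fm^n$ with Hamming support $S\subseteq\{1,\dots,n\}$; then $\rsu_q(\boldsymbol x)=\langle x_i : i\in S\rangle_{\Fq}$ has $\Fq$-dimension at most $|S|=\wt_H(\boldsymbol x)$, so $\rk_q(\boldsymbol x)\le\wt_H(\boldsymbol x)$. Next, applying this to differences of codewords gives $\mathrm d_r(\mathcal C)\le d_H(\mathcal C)$. Then I would invoke the classical Singleton bound (stated in the excerpt as $d\le n-k+1$ for linear codes in the Hamming metric), applied to $\mathcal C$ regarded as a linear code over $\Fm$ of length $n$ and dimension $k$: this yields $d_H(\mathcal C)\le n-k+1$. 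Chaining the two inequalities, $d=\mathrm d_r(\mathcal C)\le d_H(\mathcal C)\le n-k+1$, which is the claim.

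There is essentially no hard step here; the proof is a two-line reduction. The only point that needs a word of care is that $\mathcal C$, although defined as an $\Fm$-subspace, is legitimately a linear code over $\Fm$ in the Hamming-metric sense, so that the classical Singleton bound applies with the same parameters $n$ and $k$ — this is immediate since the Hamming metric on $\Fm^n$ does not reference the field structure of the entries beyond which ones are zero. An alternative, self-contained argument avoiding any appeal to the Hamming-metric Singleton bound is to consider the projection $\pi\colon\Fm^n\to\Fm^{k-1}$ onto the first $k-1$ coordinates: since $\dim_{\Fm}\mathcal C=k>k-1$, the map $\pi|_{\mathcal C}$ has nontrivial kernel, so there is a nonzero $\boldsymbol c\in\mathcal C$ whose first $k-1$ coordinates vanish; then $\wt_H(\boldsymbol c)\le n-k+1$, and by the rank/weight inequality $\rk_q(\boldsymbol c)\le n-k+1$, whence $d\le n-k+1$. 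I would present the proof via the reduction to the Hamming-metric Singleton bound, as it is the shortest, and perhaps remark that it is exactly Gabidulin's original argument.
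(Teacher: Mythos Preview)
Your proposal is correct and matches the paper's approach: the paper does not give a proof but states just before the theorem that ``the well-known Singleton bound for codes in the Hamming metric implies an upper bound for rank metric codes,'' citing Gabidulin for the details. Your reduction via $\rk_q(\boldsymbol x)\le \wt_H(\boldsymbol x)$ is exactly this implication made explicit.
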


\begin{definition}
  A rank metric code meeting the  bound in Theorem \ref{thm:singrank}  is called a \emph{maximum rank
    distance (MRD) code}.
\end{definition}

\begin{lemma}\cite[Lemma 5.3]{ho16}\label{lem:systematic}
  Any $\Fm$-linear MRD code $\C \subseteq \F_{q^m}^n$ of dimension $k$ has a
  generator matrix $G \in \F_{q^m}^{k\times n}$ of the form
  $$G = \left(\begin{array}{c|c}
      I_k \; & X
    \end{array}
  \right).$$ 
  Moreover, all entries in $X$ are from $\F_{q^m} \setminus
  \F_q$.
\end{lemma}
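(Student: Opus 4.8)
The plan is to deduce both assertions from the fact that an $\Fm$-linear MRD code, viewed in the Hamming metric over $\Fm$, is an MDS code, which in turn follows from the elementary inequality $\rk_q(c) \le w_H(c)$ for every $c \in \Fm^n$ (where $w_H$ denotes the Hamming weight).

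First I would establish the systematic form. Let $G \in \Fm^{k \times n}$ be any generator matrix of $\C$ and suppose, for contradiction, that the first $k$ columns of $G$ are $\Fm$-linearly dependent. Then there is a nonzero $v \in \Fm^k$ such that $vG$ has its first $k$ coordinates equal to $0$; since $G$ has rank $k$ and $v \neq 0$, the vector $c := vG$ is a nonzero codeword, and it is supported on at most the last $n-k$ coordinates, so $\rk_q(c) \le w_H(c) \le n-k < n-k+1 = d$, contradicting that $\C$ is MRD. Hence the first $k$ columns of $G$ are linearly independent, and bringing $G$ into reduced row echelon form yields a generator matrix $G = (I_k \mid X)$ with $X \in \Fm^{k\times(n-k)}$.

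For the second assertion, fix $i \in \{1,\dots,k\}$ and consider the $i$-th row of $G$, namely the codeword $c = (e_i \mid x_{i,1},\dots,x_{i,n-k})$, where $e_i$ is the $i$-th standard basis vector of $\Fm^k$. By the definition of the $q$-rank, $\rk_q(c) = \dim_{\Fq}\langle 1, x_{i,1},\dots,x_{i,n-k}\rangle_{\Fq}$. On one hand this dimension is at most $n-k+1$, being the dimension of a span of $n-k+1$ elements; on the other hand $c$ is a nonzero codeword, so $\rk_q(c) \ge d = n-k+1$. Therefore $\dim_{\Fq}\langle 1, x_{i,1},\dots,x_{i,n-k}\rangle_{\Fq} = n-k+1$, which forces the $n-k+1$ elements $1, x_{i,1},\dots,x_{i,n-k}$ to be $\Fq$-linearly independent. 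In particular $x_{i,j} \notin \langle 1\rangle_{\Fq} = \Fq$ for every $j$, i.e.\ every entry of $X$ lies in $\Fm \setminus \Fq$, and repeating this for all $i$ completes the proof.

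There is no serious obstacle here; the only points requiring care are the passage to the Hamming-MDS property via $\rk_q \le w_H$, and the counting argument in the last step, where the key observation is simply that a set of $n-k+1$ elements spanning an $(n-k+1)$-dimensional $\Fq$-space must be a basis, so in particular $1$ together with any single $x_{i,j}$ cannot be dependent.
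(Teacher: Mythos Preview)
Your proof is correct. Note, however, that the paper does not supply its own proof of this lemma; it simply cites \cite[Lemma~5.3]{ho16}. The argument you give is the standard one, and the paper in fact uses the very same idea---examining the row codeword $e_i(I_k\mid X)=(0,\dots,0,1,0,\dots,0,x_{i,1},\dots,x_{i,n-k})$ and bounding its $q$-rank---in its proof of Lemma~\ref{lem:nonMRD}, which is the contrapositive (and slight strengthening) of your second step. So your approach is entirely consistent with the paper's methods.
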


A generator matrix of the form $G=(I_k \mid X)$ is said to be \emph{in standard form} (also called \emph{systematic form}). The matrix $X$ of this representation is the \emph{non-systematic part} of $G$.

Since we are going to deal only with $\Fm$-linear MRD codes, we can denote by $\C_X$ the code generated by $(I_k \mid X)$. In fact, by Lemma \ref{lem:systematic} every MRD code can be represented in a unique way as a code of the form $\C_X$ for some $X\in \F_{q^m}^{k\times (n-k)}$. We will widely use this notation later in this work.

It can be easily shown that a necessary condition for the existence of MRD codes is $n\leq m$. Therefore, in the rest of the paper we will always consider positive integers $k,n,m$ such that $0<k<n\leq m$.

Furthermore, the condition $ n\leq m$ is also sufficient. In \cite{de78, ga85a} a general construction for MRD codes is given, which  has been then generalized in \cite{ks05}. In order to present such a construction we need to introduce a particular class of polynomials.

\begin{definition}
A \emph{linearized polynomial} over $\Fm$ is a polynomial $f(x)\in \Fm[x]/(x^{q^m}-x)$ of the form
$$\sum_{i=0}^{m-1}f_ix^{[i]},$$
where $[i]:=q^i$.
We denote by $\mathcal L_m(\Fm)$ the space of linearized polynomials over $\Fm$.
\end{definition}

Let $\mathcal G_{k,s} \subseteq \mathcal L_m(\Fm)$ be the set defined as
$$ \mathcal G_{k,s}:= \left\{f_0x+f_1x^{[s]}+\ldots+f_{k-1}x^{[s(k-1)]} \mid f_i\in \Fm\right\}.$$

\begin{definition}\label{def:Gab}
Let  $g=(g_1,\ldots, g_n) \in \Fm^n$ be a vector with $\rk_q(g)=n$ and let $s$ be an integer coprime to $m$. Let $\C$ be the rank metric code defined as
$$\C= \left\{(f(g_1),f(g_2),\ldots,f(g_n))\mid f\in \mathcal G_{k,s}\right\}.$$
Then $\C$ is  called \emph{generalized Gabidulin code} of parameter $s$, and it will be denoted by
$$\C=\mathcal G_{k,s}(g).$$
\end{definition}

We denote by $\GL_n(q):=\{A\in \F_q^{n\times n} \mid \rk (A) =n\}$ the
general linear group of degree $n$ over $\F_q$.
Furthermore, given a finite field $\Fq$, we consider the Grassmannian $\Gr(k,\Fq^n)$, that is the set of all $k$-dimensional subspaces of the vector space $\Fq^n$ over $\Fq$. It is well known that its cardinality is given by the Gaussian binomial  $ \binom{n}{k}_{q}$, defined as
$$  \binom{n}{k}_q = \prod_{i=0}^{k-1} \frac{q^n-q^i}{q^k-q^i}=\frac{\prod_{i=0}^{k-1}(q^n-q^i)}{|\GL_k(q)|}.$$
With this notation, for a positive integer $s$ coprime to $m$, we introduce the set $\Gab_q(k,n,m,s)$ as the set of all generalized Gabidulin codes over $\Fm$ of dimension $k$, length $n$ and parameter $s$, i.e.
$$\Gab_q(k,n,m,s):= \left\{ \mathcal U \in \Gr(k,\Fm^n) \mid \mathcal U \mbox{ is a gen. Gabidulin code of parameter } s \right\}.$$

\begin{definition}
For a vector $\bs{v}:=(v_1,\dots, v_n) \in \F_{q^m}^n$ we denote the $k
\times n$ \emph{$s$-Moore matrix} by
\[M_{s,k}(\bs{v}) := \left( \begin{array}{cccc} v_1 & v_2
    &\dots &v_n \\ \theta^s(v_1) & \theta^s(v_2) &\dots &\theta^s(v_n) \\
    \vdots&&&\vdots \\ \theta^{(k-1)s}(v_1)& \theta^{(k-1)s}(v_2) &\dots
    &\theta^{(k-1)s}(v_n) \end{array}\right) .\] 
\end{definition}

At this point, it is straightforward to see that a generator matrix of a generalized Gabidulin code $\mathcal G_{k,s}(g)$ is given by the $k \times n$ $s$-Moore matrix $M_{s,k}(g)$. This generator matrix is said to be \emph{canonical}, since it is obtained by evaluating the  basis of monomials $\{ x, x^{[s]}, x^{[2s]}, \ldots, x^{[(k-1)s]} \}$ of $\mathcal G_{k,s}$ in the points $g_1,\ldots,g_n$. Therefore, the $s$-Moore matrix is the natural rank analogue of a weighted Vandermonde matrix.

Note that for $s=1$, Definition \ref{def:Gab} coincides with the
classical Gabidulin code construction.
 The following theorem was shown
for $s=1$ in \cite[Section 4]{ga85a}, and for general $s$ in
\cite{ks05}.

\begin{theorem}\label{thm:GabisMRD}
 Let $1 \leq k \leq n \leq m$ be integers and let $s$ be another integer coprime to $m$. Moreover, let $g\in \Fm^n$ be such that  $\rk_q(g)=n$. Then, the generalized Gabidulin code $\G_{k,s}(g)\subseteq \F_{q^m}^{n}$ of
  dimension $k$ over $\F_{q^m}$ has minimum rank
  distance %(over $\F_q$)
  $n-k+1$. Thus, generalized Gabidulin codes are $\Fm$-linear MRD codes.
\end{theorem}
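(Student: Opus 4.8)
The plan is to prove the lower bound $\mathrm{d}_r(\mathcal{G}_{k,s}(g))\geq n-k+1$; the reverse inequality is exactly the Singleton bound of Theorem~\ref{thm:singrank}, and with both in hand the MRD conclusion is immediate from the definition. The first step is to rewrite the rank weight of a codeword in terms of the kernel of a linearized map. A nonzero codeword has the form $(f(g_1),\dots,f(g_n))$ for a nonzero $f=\sum_{i=0}^{k-1}f_ix^{[si]}\in\mathcal{G}_{k,s}$, and since $f$ is $\F_q$-linear and $V:=\rsu_q(g)$ has $\F_q$-dimension $n$ (by the hypothesis $\rk_q(g)=n$),
\[\rk_q\big(f(g_1),\dots,f(g_n)\big)=\dim_{\F_q}f(V)=n-\dim_{\F_q}(\ker f\cap V).\]
So it suffices to show that every nonzero $f\in\mathcal{G}_{k,s}$ has $\dim_{\F_q}\ker f\leq k-1$: this bounds $\dim_{\F_q}(\ker f\cap V)$ by $k-1$, so every nonzero codeword has weight at least $n-k+1$, and in particular $V\not\subseteq\ker f$, so the evaluation map $\mathcal{G}_{k,s}\to\F_{q^m}^n$ is injective and the code indeed has $\F_{q^m}$-dimension $k$. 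I would emphasize that the crude bound ``$f$ has ordinary degree $q^{s(k-1)}$, hence at most $q^{s(k-1)}$ roots'' only gives $\dim_{\F_q}\ker f\leq s(k-1)$, which is too weak when $s>1$; improving it is precisely where the hypothesis $\gcd(s,m)=1$ has to enter.

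I would prove the claim $\dim_{\F_q}\ker f\leq k-1$ by contradiction. If $\ker f$ contained $\F_q$-linearly independent elements $w_1,\dots,w_k$, then $0=f(w_j)=\sum_{i=0}^{k-1}f_i\theta^{si}(w_j)$ for $j=1,\dots,k$, which says precisely that the nonzero vector $(f_0,\dots,f_{k-1})$ lies in the left kernel of the $k\times k$ $s$-Moore matrix $M_{s,k}(w)$ with $w=(w_1,\dots,w_k)$. Thus the whole theorem reduces to the following fact: if $w_1,\dots,w_k\in\F_{q^m}$ are $\F_q$-linearly independent and $\gcd(s,m)=1$, then $M_{s,k}(w)$ is invertible.

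This $s$-Moore matrix nonsingularity is the \emph{main obstacle}, and it is the only place where the coprimality of $s$ and $m$ is genuinely used. I would handle it as follows. Since $\gcd(s,m)=1$, the automorphism $\sigma:=\theta^s$ is itself a generator of $\Gal(\F_{q^m}/\F_q)$ with fixed field exactly $\F_q$, so ``$\F_q$-linearly independent'' for $w_1,\dots,w_k$ is the same as linear independence over the fixed field of $\sigma$, and the theory of $\sigma$-linearized (Ore) polynomials applies verbatim. Concretely, a nonzero left-kernel vector of $M_{s,k}(w)$ would produce a nonzero element of the skew polynomial ring $\F_{q^m}[x;\sigma]$ of $\sigma$-degree at most $k-1$ vanishing on the whole $k$-dimensional space $\langle w_1,\dots,w_k\rangle_{\F_q}$; but a nonzero $\sigma$-linearized polynomial of $\sigma$-degree $d$ has an $\F_q$-kernel of dimension at most $d$ (equivalently, the Moore determinant of $d+1$ elements with respect to $\sigma$ factors as a product of their primitive $\F_q$-linear combinations, hence is nonzero exactly when those elements are $\F_q$-independent), forcing $k-1\geq k$ --- a contradiction. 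Alternatively one can simply invoke this Moore-type nonsingularity from the literature. Granting it, $\dim_{\F_q}\ker f\leq k-1$ for every nonzero $f\in\mathcal{G}_{k,s}$, so $\mathrm{d}_r(\mathcal{G}_{k,s}(g))\geq n-k+1$; together with Theorem~\ref{thm:singrank} this gives $\mathrm{d}_r(\mathcal{G}_{k,s}(g))=n-k+1$, and hence $\mathcal{G}_{k,s}(g)$ is an $\F_{q^m}$-linear MRD code. I expect essentially all the effort --- and all the use of $\gcd(s,m)=1$ --- to be concentrated in this Moore matrix step; without that hypothesis $M_{s,k}(w)$ can be singular and the code need not be MRD.
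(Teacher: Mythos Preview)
Your argument is correct and is essentially the standard proof: reduce the rank-weight lower bound to the bound $\dim_{\F_q}\ker f\le k-1$ for nonzero $f\in\mathcal G_{k,s}$, and deduce the latter from the nonsingularity of the $s$-Moore matrix on $\F_q$-independent elements, using that $\gcd(s,m)=1$ makes $\theta^s$ a generator of $\Gal(\F_{q^m}/\F_q)$ with fixed field $\F_q$. Note, however, that the paper does not supply its own proof of this theorem at all; it simply attributes the result to \cite[Section~4]{ga85a} for $s=1$ and to \cite{ks05} for general $s$, so there is no in-paper argument to compare against --- your proof is precisely the one behind those references.
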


The dual code of a code $\mathcal{C}\subseteq \F_{q^{m}}^{n}$ is
defined in the usual way as
\[\mathcal{C}^{\perp} := \{\boldsymbol{u} \in \F_{q^{m}}^{n} \mid
\boldsymbol{u}\boldsymbol{c}^\top=0 \quad \forall \boldsymbol{c}\in
\mathcal{C}\}.  \]

In his seminal paper Gabidulin showed the following two results on
dual codes of MRD and Gabidulin codes. The result was generalized to
$s>1$ later on by Kshevetskiy and Gabidulin. Observe that also Delsarte in \cite{de78} proved a similar result for what concerns the dual of matrix codes with respect to the Delsarte bilinear form.

\begin{proposition}\cite[Sections~2 and 4]{ga85a}\cite[Subsection
  IV.C]{ks05}\label{prop:dual1}
  \begin{enumerate}
  \item Let $\mathcal{C}\subseteq \F_{q^{m}}^{n}$ be an $\Fm$-linear MRD code of
    dimension $k$. Then the dual code $\mathcal{C}^{\perp}\subseteq
    \F_{q^{m}}^{n}$ is an $\Fm$-linear MRD code of dimension $n-k$.
  \item Let $\mathcal{C}\subseteq \F_{q^{m}}^{n}$ be a generalized
    Gabidulin code of dimension $k$ and parameter $s$. Then the dual code
    $\mathcal{C}^{\perp}\subseteq \F_{q^{m}}^{n}$ is a generalized
    Gabidulin code of dimension $n-k$ and parameter $s$.
  \end{enumerate}
\end{proposition}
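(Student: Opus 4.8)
The plan is to treat the two parts in turn: part~(1) by transferring the classical fact that the dual of an MDS code is MDS to the rank metric, and part~(2) by exhibiting an explicit $s$-Moore parity-check matrix for $\mathcal{G}_{k,s}(g)$.

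For part~(1), the dimension statement $\dim_{\Fm}\mathcal{C}^{\perp}=n-k$ is immediate from non-degeneracy of the pairing $(u,c)\mapsto uc^{\top}$, so the content is the equality $\mathrm{d}_r(\mathcal{C}^{\perp})=k+1$. First I would record the elementary fact that for every $v\in\Fm^{n}$ one has $\rk_q(v)=\min_{A\in\GL_n(q)}\wt_H(vA)$: writing the coordinates of $v$ against a fixed $\F_q$-basis of $\Fm$ yields a matrix in $\F_q^{n\times m}$ of rank $\rk_q(v)$, and right multiplication of $v$ by elements of $\GL_n(q)$ realizes exactly the $\F_q$-row operations on that matrix. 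Consequently $\mathrm{d}_r(\mathcal D)=\min_{A\in\GL_n(q)}d_H(\mathcal D A)$ for every $\Fm$-linear code $\mathcal D$, and since each $\mathcal D A$ still has $\Fm$-dimension $k$, the code $\mathcal C$ is MRD of dimension $k$ if and only if $\mathcal C A$ is an MDS code for all $A\in\GL_n(q)$. Since a short computation gives $(\mathcal C A)^{\perp}=\mathcal C^{\perp}(A^{-1})^{\top}$, and $A\mapsto(A^{-1})^{\top}$ is a bijection of $\GL_n(q)$, combining this with the classical fact that the dual of an MDS code is MDS yields that $\mathcal C^{\perp}B$ is MDS for every $B\in\GL_n(q)$, i.e.\ $\mathrm{d}_r(\mathcal C^{\perp})=k+1$ and $\mathcal C^{\perp}$ is MRD.

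For part~(2), write $\mathcal C=\mathcal G_{k,s}(g)$, so that $M_{s,k}(g)$ is a generator matrix of $\mathcal C$. It suffices to produce $h\in\Fm^{n}$ with $\rk_q(h)=n$ and $M_{s,k}(g)\,M_{s,n-k}(h)^{\top}=0$: for such an $h$, Theorem~\ref{thm:GabisMRD} gives $\dim_{\Fm}\mathcal G_{n-k,s}(h)=n-k=\dim_{\Fm}\mathcal C^{\perp}$, while $\mathcal G_{n-k,s}(h)\subseteq\mathcal C^{\perp}$ because its canonical generator rows are orthogonal to those of $\mathcal C$; hence $\mathcal C^{\perp}=\mathcal G_{n-k,s}(h)$ is a generalized Gabidulin code of parameter $s$. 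Applying the field automorphism $\theta^{-bs}$ to the $(a,b)$-entry $\sum_{i=1}^{n}\theta^{as}(g_i)\theta^{bs}(h_i)$ of the product shows the orthogonality condition is equivalent to $\sum_{i=1}^{n}\theta^{js}(g_i)h_i=0$ for $j$ running over the $n-1$ consecutive integers $-(n-k-1),\dots,k-1$; that is, $h$ lies in the right kernel of the $(n-1)\times n$ matrix $N:=M_{s,n-1}\!\big(\theta^{-(n-k-1)s}(g)\big)$. Since $\theta^{-(n-k-1)s}(g)$ still has $q$-rank $n$, Theorem~\ref{thm:GabisMRD} (applied to $\mathcal G_{n-1,s}$) shows that $N$ has rank $n-1$, so its right kernel is an $\Fm$-line; I pick any nonzero $h$ in it.

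The only non-formal step, and the main obstacle, is to show that this $h$ has $\rk_q(h)=n$. I would argue by contradiction. If $\rk_q(h)=r<n$, choose $A\in\GL_n(q)$ so that $hA$ is supported on a set $S$ with $|S|=r$. Substituting $h=(hA)A^{-1}$ in $Nh^{\top}=0$ and using the identity $M_{s,n-1}(v)\,C=M_{s,n-1}(vC)$ valid for any $C$ with entries in $\F_q$, the relation becomes $M_{s,n-1}(\hat g)\,(hA)^{\top}=0$ with $\hat g:=\theta^{-(n-k-1)s}(g)\,A^{-\top}$, which still has $q$-rank $n$. Since $hA$ is supported on $S$, keeping only the columns indexed by $S$ gives $M_{s,n-1}(\hat g|_S)\,\big((hA)|_S\big)^{\top}=0$; but the entries of $\hat g|_S$ are $\F_q$-linearly independent and $r\le n-1$, so the $(n-1)\times r$ matrix $M_{s,n-1}(\hat g|_S)$ contains the invertible $r\times r$ $s$-Moore matrix on those entries and therefore has trivial right kernel. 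Hence $(hA)|_S=0$, so $hA=0$ and $h=0$, a contradiction. This forces $\rk_q(h)=n$ and completes part~(2).
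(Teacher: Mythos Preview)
The paper does not supply its own proof of this proposition: it is stated as a citation from \cite{ga85a} and \cite{ks05} with no argument given. Your proposal therefore cannot be compared against a proof in the paper, but it can be assessed on its own merits, and it is correct.

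Your argument for part~(1) is a clean reduction to the Hamming-metric fact that duals of MDS codes are MDS, via the identity $\rk_q(v)=\min_{A\in\GL_n(q)}\wt_H(vA)$ and the compatibility $(\mathcal C A)^{\perp}=\mathcal C^{\perp}(A^{-1})^{\top}$; both steps are valid under the paper's standing hypothesis $n\le m$. This is somewhat different in spirit from Gabidulin's original proof (which proceeds via the MacWilliams-type identities for the rank weight distribution), but it is shorter and entirely self-contained modulo the classical MDS duality.

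Your argument for part~(2) is also correct. The reduction of the orthogonality $M_{s,k}(g)M_{s,n-k}(h)^{\top}=0$ to the single system $Nh^{\top}=0$ with $N=M_{s,n-1}(\theta^{-(n-k-1)s}(g))$ is standard and carried out accurately. The crucial step, showing that a nonzero solution $h$ must have $\rk_q(h)=n$, is handled neatly: your use of the identity $M_{s,\ell}(v)\,C=M_{s,\ell}(vC)$ for $C$ over $\F_q$, together with the invertibility of square $s$-Moore matrices on $\F_q$-independent entries, gives exactly the contradiction needed. The classical route in \cite{ga85a,ks05} instead produces $h$ explicitly (e.g.\ via a dual basis construction) and then verifies orthogonality; your existence-plus-rank argument achieves the same conclusion without that explicit formula.
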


Given a matrix (resp.\ a vector) $A\in \Fm^{k \times n}$, we
denote by $\theta^s(A)$ the component-wise $q$-Frobenius of $A$ applied $s$ times, i.e. $\theta^s(A)$ is generated by applying $\theta^s$ to every
entry of the matrix (resp.\ the vector) $A$. Analogously, given a code $\C \subseteq \Fm^{k
  \times n}$, we define
$$ \theta^s(\C):=\left\{\theta^s(\mathbf{c})\mid \mathbf{c}\in \C \right\}.$$

Moreover  we consider  the map
$\Phi_s$ defined as
 $$\begin{array}{rcl}
   \Phi_s:\F_{q^m}^{k\times (n-k)} &\longrightarrow & \F_{q^m}^{k\times (n-k)} \\
   X & \longmapsto & \theta^s(X)-X.
 \end{array}
$$
Observe that $\Phi_s$ is the function that maps every entry
$x_{i,j}$ of the matrix $X$ to $\varphi_s(x_{i,j})$.

Here we present some criteria on the generator matrix of a rank metric code, that allow to verify whether the code is MRD or generalized Gabidulin. We will need these results later on. The following criterion was given
in \cite[Proposition 22]{ne18}, and it improves \cite[Corollary 2.12]{ho16}, which in turn is based on a well-known result given in
\cite{ga85a}. First we define the sets

\begin{align*}
    \E_q(k,n)& :=\left\{E \in \F_q^{k\times n} \mid \rk_q(E)=k\right\}, \\
    \mathcal T_q(k,n)&:=\left\{ E \in \E_q(k,n) \mid E \mbox{ is in reduced row echelon form }\right\}.
\end{align*}

\begin{proposition}[new MRD criterion]\cite[Proposition 22]{ne18}\label{prop:newMRDCrit}
  Let $G\in \F_{q^m}^{k\times n}$ be a generator matrix of a
  rank metric code $\mathcal{C}\subseteq \F_{q^m}^n$. Then
  $\mathcal{C}$ is an MRD code if and only if
$$ \rk(EG^\top) =k$$
for all $E\in \mathcal T_q(k,n)$.
\end{proposition}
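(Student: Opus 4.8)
The plan is to argue the contrapositive in both directions, reducing everything to one elementary fact: for a vector $c\in\Fm^n$, the $\Fq$-subspace $c^{\perp_q}:=\{x\in\Fq^n\mid cx^\top=0\}$ has $\Fq$-dimension $n-\rk_q(c)$. First I would prove this. Pick an $\Fq$-basis $\beta_1,\dots,\beta_r$ of $\rsu_q(c)$, so $r=\rk_q(c)$, and write $c=(\beta_1,\dots,\beta_r)M$ with $M\in\Fq^{r\times n}$ of rank $r$. Since the $\beta_i$ are $\Fq$-linearly independent, for $x\in\Fq^n$ we have $cx^\top=(\beta_1,\dots,\beta_r)(Mx^\top)=0$ if and only if $Mx^\top=0$; hence $c^{\perp_q}=\ker_{\Fq}(M)$ has dimension $n-r$.

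The main observation is then that $\mathcal C$ fails to be MRD precisely when it contains a nonzero codeword $c$ with $\rk_q(c)\le n-k$, and by the identity just proved this is equivalent to $\dim_{\Fq}c^{\perp_q}\ge k$. So suppose $\mathcal C$ is not MRD and fix such a codeword $c=vG$ with $0\ne v\in\Fm^k$. Choose a $k$-dimensional $\Fq$-subspace $U\subseteq c^{\perp_q}$ and let $E\in\mathcal T_q(k,n)$ be the unique matrix in reduced row echelon form with row space $U$. Then $cE^\top=0$, i.e.\ $v\,(GE^\top)=0$ with $v\ne 0$, so the $k\times k$ matrix $GE^\top$ — hence also its transpose $EG^\top$ — is singular: $\rk(EG^\top)<k$. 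This gives the ``only if'' direction.

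For the converse, assume $\rk(EG^\top)<k$ for some $E\in\mathcal T_q(k,n)$. Then $GE^\top$ has a nontrivial left kernel, so there is $0\ne v\in\Fm^k$ with $v\,(GE^\top)=0$; put $c:=vG$, which is nonzero because $\rk(G)=k$. From $cE^\top=0$ and the $\Fq$-linearity of the map $x\mapsto cx^\top$ we get $\rs(E)\subseteq c^{\perp_q}$, and since $\dim_{\Fq}\rs(E)=k$ the dimension identity yields $\rk_q(c)=n-\dim_{\Fq}c^{\perp_q}\le n-k$. Thus $\mathcal C$ has a nonzero codeword of rank weight at most $n-k$ and is not MRD.

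It remains to explain why ranging over $\mathcal T_q(k,n)$ suffices instead of all of $\E_q(k,n)$: replacing $E$ by $AE$ with $A\in\GL_k(q)$ replaces $EG^\top$ by $A(EG^\top)$, which has the same rank, and every $k$-dimensional $\Fq$-subspace of $\Fq^n$ is the row space of exactly one element of $\mathcal T_q(k,n)$; so quantifying over reduced-row-echelon matrices already tests all $k$-dimensional $\Fq$-subspaces $U$ of the form appearing above. I do not expect a genuine obstacle here — the only points that need care are making the identity $\dim_{\Fq}c^{\perp_q}=n-\rk_q(c)$ precise and keeping the transposes straight, so that the singularity of the $k\times k$ matrix is correctly read off from a nonzero left kernel vector over $\Fm$.
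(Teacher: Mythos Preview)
The paper does not actually prove this proposition: it is stated with a citation to \cite[Proposition~22]{ne18} and no argument is given here. There is therefore no ``paper's own proof'' to compare against.

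That said, your argument is correct and self-contained. The key identity $\dim_{\Fq}c^{\perp_q}=n-\rk_q(c)$ is proved cleanly via the factorisation $c=(\beta_1,\dots,\beta_r)M$, and the two directions of the contrapositive are handled carefully: in one direction you pick a $k$-dimensional $\Fq$-subspace inside $c^{\perp_q}$ and realise it as the row space of some $E\in\mathcal T_q(k,n)$; in the other you extract a nonzero left kernel vector of $GE^\top$ and bound the rank weight of the resulting codeword. The remark that passing from $\E_q(k,n)$ to $\mathcal T_q(k,n)$ loses nothing, since left-multiplication by $A\in\GL_k(q)$ preserves $\rk(EG^\top)$, is exactly the point that distinguishes this criterion from the earlier one in \cite{ho16}. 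Nothing is missing.
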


Furthermore, we need the following criterion for  generalized Gabidulin codes. 

\begin{theorem}[gen.\ Gabidulin criterion]\cite[Lemma 19]{ne18}\label{thm:GabCrit}
  Let $X\in \Fm^{k\times(n-k)}$ such that $\mathcal{C}_X\subseteq \F_{q^m}^n$ is an $\Fm$-linear MRD code. $\mathcal{C}_X$ is a generalized Gabidulin code if and only if
  there exists a positive integer $s$ with $\gcd(s,m)=1$, such that
$$ \rk (\Phi_s(X)) = 1 .$$
\end{theorem}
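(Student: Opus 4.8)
The plan is to translate the rank condition on $\Phi_s(X)$ into a statement about the code $\C_X$ and its image under the $q^s$-Frobenius, and then, in the hard direction, to recover a Gabidulin-type generator by a dimension count combined with Galois descent.

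\smallskip\noindent\emph{Reformulation and the forward direction.} For any $X$ one has $\theta^s(\C_X)=\rs\big((I_k\mid\theta^s(X))\big)$, so stacking the two generator matrices and performing the obvious row reduction gives $\C_X+\theta^s(\C_X)=\rs\!\left(\begin{smallmatrix}I_k & X\\ 0 & \Phi_s(X)\end{smallmatrix}\right)$, hence $\dim_{\Fm}\big(\C_X+\theta^s(\C_X)\big)=k+\rk(\Phi_s(X))$. By Lemma~\ref{lem:systematic} the entries of $X$ lie in $\Fm\setminus\F_q$, so by part (4) of Lemma~\ref{lem:trace} we have $\Phi_s(X)\neq 0$ and thus $\rk(\Phi_s(X))\geq 1$ always; therefore the statement is equivalent to: \emph{an $\Fm$-linear MRD code $\C$ of dimension $k<n$ is a generalized Gabidulin code of parameter $s$ if and only if $\dim_{\Fm}(\C+\theta^s(\C))=k+1$.} For the forward implication, if $\C=\G_{k,s}(g)$ then a generator matrix is $M_{s,k}(g)$, and applying $\theta^s$ component-wise shifts its rows, so $M_{s,k}(g)$ together with $\theta^s(M_{s,k}(g))$ spans the row space of the $(k+1)\times n$ $s$-Moore matrix $M_{s,k+1}(g)$; since $\rk_q(g)=n$ and $k+1\leq n$, this matrix has rank $k+1$.

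\smallskip\noindent\emph{The converse.} Assume $\C$ is MRD of dimension $k<n$ with $\dim_{\Fm}(\C+\theta^s(\C))=k+1$. Since $\gcd(s,m)=1$, the automorphism $\theta^s$ generates $\Gal(\Fm/\F_q)$, so by Galois descent an $\Fm$-subspace $\mathcal D\subseteq\Fm^n$ is $\theta^s$-invariant if and only if $\mathcal D=\mathcal D_0\otimes_{\F_q}\Fm$ for an $\F_q$-subspace $\mathcal D_0\subseteq\F_q^n$ of the same dimension. Two facts follow from MRD-ness: (i) $\C$ contains no nonzero $\theta^s$-invariant $\Fm$-subspace, since such a subspace would contain a nonzero vector of $\F_q^n$, a rank-one codeword, contradicting $\mathrm{d}_r(\C)=n-k+1\geq2$; and (ii) $\C$ is not contained in any proper $\theta^s$-invariant subspace of $\Fm^n$, since such a containment would place a nonzero vector of $\F_q^n$ in $\C^\perp$, contradicting $\mathrm{d}_r(\C^\perp)=k+1\geq2$ (Proposition~\ref{prop:dual1}). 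Now put $\C^{(j)}:=\bigcap_{i=0}^{j}\theta^{is}(\C)$; one checks $\C^{(j+1)}=\C^{(j)}\cap\theta^s(\C^{(j)})$ and $\C^{(j)}+\theta^s(\C^{(j)})\subseteq\theta^s(\C^{(j-1)})$. Starting from $\dim\C^{(0)}=k$ and $\dim(\C^{(0)}+\theta^s(\C^{(0)}))=k+1$, induction on $j$ gives $\dim_{\Fm}\C^{(j)}=k-j$ for $0\leq j\leq k-1$: at each step $\dim(\C^{(j)}+\theta^s(\C^{(j)}))$ is either $k-j$ or $k-j+1$, and the value $k-j$ forces $\C^{(j)}$ to be a nonzero $\theta^s$-invariant subspace of $\C$, excluded by (i); Grassmann's identity then yields $\dim\C^{(j+1)}=k-j-1$. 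In particular $\bigcap_{i=0}^{k-1}\theta^{-is}(\C)=\theta^{-(k-1)s}(\C^{(k-1)})$ is one-dimensional; pick $g$ spanning it, so $g\in\C$ and $\theta^{is}(g)\in\C$ for $i=0,\dots,k-1$.

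\smallskip\noindent\emph{Full rank of $g$ and conclusion.} If $\rk_q(g)=n$, then $g,\theta^s(g),\dots,\theta^{(k-1)s}(g)$ are $\Fm$-linearly independent and lie in the $k$-dimensional code $\C$, hence form a generator matrix $M_{s,k}(g)$ of $\C$, i.e. $\C=\G_{k,s}(g)$; translating back through the dimension identity, $\rk(\Phi_s(X))=1$ if and only if $\C_X$ is a generalized Gabidulin code of parameter $s$. To see $\rk_q(g)=n$, set $t:=\rk_q(g)$ and suppose $t<n$. If $t<k$, then $\langle\theta^{is}(g):i\in\N\rangle_{\Fm}$ is a nonzero $\theta^s$-invariant subspace of $\C$, contradicting (i). If $k\leq t<n$, then the $\theta^s$-translates above span $\C$, and there is $0\neq u\in\F_q^n$ with $g\,u^\top=0$; applying $\theta^{is}$ (which fixes $u$) gives $\theta^{is}(g)\,u^\top=0$ for all $i$, so $u\in\C^\perp$, contradicting (ii). Hence $\rk_q(g)=n$.

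\smallskip\noindent\emph{Where the difficulty lies.} The forward direction is a direct computation; the crux is the converse, specifically proving that the canonically recovered evaluation vector $g$ has full $q$-rank. Both the dimension bookkeeping for the chain $\C^{(j)}$ and the full-rank argument depend essentially on the MRD hypothesis, via the absence of rank-one codewords in $\C$ and in $\C^\perp$, so the statement genuinely requires $\C_X$ to be MRD rather than an arbitrary rank-metric code.
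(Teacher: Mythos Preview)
The paper does not prove this theorem; it is quoted from \cite[Lemma~19]{ne18} as a black box and used as a tool, so there is no proof in the paper to compare against.

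Your argument is correct and self-contained. The identity $\dim_{\Fm}\big(\C_X+\theta^s(\C_X)\big)=k+\rk(\Phi_s(X))$ is the right reformulation, and the descending chain $\C^{(j)}=\bigcap_{i=0}^{j}\theta^{is}(\C)$ combined with Galois descent cleanly produces the evaluation vector $g$. Both invocations of the MRD hypothesis via (i) and (ii) are sound. In the final step, the case $t<k$ implicitly uses the standard Moore-matrix fact that the $\Fm$-span of the full Frobenius orbit of $g$ is already spanned by $g,\theta^s(g),\dots,\theta^{(t-1)s}(g)$ (hence lies in $\C$); this is routine but worth stating. The case $k\le t<n$ is fine as written.

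You in fact establish the slightly sharper per-parameter statement: for each fixed $s$ coprime to $m$, $\rk(\Phi_s(X))=1$ if and only if $\C_X\in\Gab_q(k,n,m,s)$. This is precisely the form in which the paper uses the result later (e.g.\ in the proof of Corollary~\ref{cor:FqGab}(2), in Lemma~\ref{lem:phi}(3), and in Theorem~\ref{thm:stform}).
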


Theorem \ref{thm:GabCrit} will be one of the most important results on which this work is based. The criterion starts with the assumption that we already know that the code is MRD. However,  in Section \ref{sec:stform} we will derive a new criterion that does not have such assumption and it is definitely easier to verify.

Concerning Gabidulin codes, we can also find the exact number of them. In \cite{be03}, Berger provided the following result.

\begin{proposition}\cite[Theorem~2]{be03}\label{thm:numGab} Let $0<k<n$, and let $g=(g_1,\ldots,g_n)$, $g'=(g_1',\ldots,g_n') \in \Fm^n$ be two vectors such that $\rk_q(g)=\rk_q(g')=n$. Then, for any integer $s$ coprime to $m$,
$\rs(M_{s,k}(g))=\rs(M_{s,k}(g'))$ if and only if $g=\lambda g'$ for some $\lambda \in \Fm^*$.
\end{proposition}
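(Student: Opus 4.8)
The plan is to prove both directions by exploiting the fact that the row space of an $s$-Moore matrix $M_{s,k}(g)$ is precisely the generalized Gabidulin code $\G_{k,s}(g)$, which consists of the evaluation vectors $(f(g_1),\dots,f(g_n))$ for $f \in \mathcal G_{k,s}$. The ``if'' direction is the easy one: if $g' = \lambda g$ for some $\lambda \in \Fm^*$, then for each $f \in \mathcal G_{k,s}$ one has $f(\lambda g_i) = \tilde f(g_i)$ where $\tilde f(x) := f(\lambda x) = \sum_j f_j \lambda^{[sj]} x^{[sj]}$, and $\tilde f$ again lies in $\mathcal G_{k,s}$; conversely every element of $\mathcal G_{k,s}$ arises this way since $\lambda \mapsto \lambda^{[sj]}$ is a bijection on $\Fm$. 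Hence the two evaluation code sets coincide, so $\rs(M_{s,k}(g)) = \G_{k,s}(g) = \G_{k,s}(g') = \rs(M_{s,k}(g'))$.

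For the ``only if'' direction, suppose $\rs(M_{s,k}(g)) = \rs(M_{s,k}(g')) =: \C$. First I would use that $\rk_q(g) = \rk_q(g') = n$ forces $M_{s,k}(g)$ and $M_{s,k}(g')$ to have full rank $k$ (this is part of Theorem \ref{thm:GabisMRD}: the code has dimension $k$ and minimum distance $n-k+1$). Now the first coordinate: $g$ is (up to scalar) the unique-direction minimum weight behaviour — more concretely, I would look at the codeword of $\C$ corresponding to $f(x) = x$, which is $g$ itself as a row of $M_{s,k}(g)$, and argue it is characterized intrinsically. The cleanest route: a vector $\bs v \in \C$ lies in $\G_{k,s}(g)$ iff $\bs v = (f(g_1),\dots,f(g_n))$ for some $f \in \mathcal G_{k,s}$; the codewords with $\rk_q \bs v = n - k + 1$ (the minimum) correspond exactly to $f$ with a one-dimensional kernel over $\Fq$ of maximal dimension $k$... but this does not single out $g$. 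Instead I would argue via the dual or via the ``leading term'': consider the projection. Alternatively, and most directly, I would show that the $\Fq$-span $\langle g_1,\dots,g_n\rangle$ — i.e. $\rsu_q(g)$ — is determined by $\C$, because $\G_{k,s}(g)$ contains the vector $g$ and more generally $\theta^{sj}(g)$, so $\rsu_q(g)$ relates to the supports appearing in $\C$; but the decisive point is to pin down $g$ itself, not just its span.

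The main obstacle, and the heart of the argument, is exactly this rigidity step: showing that the vector $g$ (not merely the Gabidulin code it generates) is determined up to an $\Fm^*$-scalar. I expect to handle it as follows. Both $M_{s,k}(g)$ and $M_{s,k}(g')$ are generator matrices of the same code $\C$, so there is an invertible $A \in \GL_k(\Fm)$ with $A \cdot M_{s,k}(g) = M_{s,k}(g')$. Reading row by row, if $(a_0,\dots,a_{k-1})$ is the first row of $A$ then $\sum_{j=0}^{k-1} a_j \theta^{sj}(g_i) = g'_i$ for all $i$, i.e. the linearized polynomial $p(x) = \sum_j a_j x^{[sj]} \in \mathcal G_{k,s}$ satisfies $p(g_i) = g'_i$ for all $i$; similarly the other rows give $\theta^{s}(g'_i), \dots, \theta^{s(k-1)}(g'_i)$ as images of $g_i$ under polynomials in $\mathcal G_{k,s}$. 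The key constraint is that $\theta^s(g') = (\,\theta^s(g'_i)\,)_i$ must then be simultaneously $\theta^s(p(g_i))$ and also $(\text{second row of } A) \cdot M_{s,k}(g)$, i.e. $q(g_i)$ for some $q \in \mathcal G_{k,s}$; comparing, $\theta^s \circ p$ and $q$ agree on the $n$ points $g_i$ which span $\Fm$ over $\Fq$, and since these are $q$-linearized maps of bounded $q$-degree they must be equal as functions on $\langle g_1,\dots,g_n\rangle = \Fm$ — forcing $q = \theta^s \circ p \bmod (x^{q^m}-x)$. Iterating, $A$ is forced to be the Moore-type matrix built from the coefficients of $p$, i.e. $A = M_{s,k}(a)$ where $a = (a_0, \ldots, a_{k-1})$, and invertibility of $A$ plus the degree bound ($p$ has $q$-degree $< k \le n \le m$) forces $p(x) = a_0 x$ to be a monomial, hence $g'_i = a_0 g_i$ with $\lambda := a_0 \in \Fm^*$. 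I would phrase the ``degree bound forces monomial'' step carefully: if $p$ had two nonzero coefficients, the map $M_{s,k}(a)$ would be singular — this is again exactly Theorem \ref{thm:GabisMRD} applied to the vector $a$ if $\rk_q(a)=k$, or a direct rank computation showing a Moore matrix of a vector with a zero-pattern degenerates. This last linear-algebra lemma about when $M_{s,k}(a)$ is invertible is the one place I would need to be fully rigorous.
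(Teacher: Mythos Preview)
The paper does not supply its own proof of this proposition; it is quoted from Berger~\cite{be03}. So there is nothing in the paper to compare against, and I comment on your argument directly.

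Your ``if'' direction is correct. Your strategy for ``only if'' --- write $M_{s,k}(g')=A\,M_{s,k}(g)$, read off the first row as $g'_i=p(g_i)$ with $p=\sum_{j}a_jx^{[sj]}\in\mathcal G_{k,s}$, and then compare the remaining rows --- is the right one and does lead to the conclusion. The endgame as written, however, is muddled. The claim $A=M_{s,k}(a)$ is not correct: the $\ell$-th row of $A$ must encode the coefficients of $\theta^{s\ell}\!\circ p$ in the basis $x,x^{[s]},\dots,x^{[s(k-1)]}$, and this gives (once the top coefficients of $p$ have been shown to vanish) an \emph{upper triangular} matrix with diagonal entries $a_0,\theta^s(a_0),\dots,\theta^{s(k-1)}(a_0)$, not the Moore matrix of the vector $a$. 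Consequently, neither the invertibility of that triangular matrix nor the invertibility of the genuine $M_{s,k}(a)$ forces $p$ to be a monomial: the first only yields $a_0\neq 0$, and the second only says $a_0,\dots,a_{k-1}$ are $\Fq$-independent, which certainly allows several nonzero $a_j$. What actually forces $a_1=\dots=a_{k-1}=0$ is precisely the iteration you allude to but do not carry out: from row $\ell=1$, the difference $\theta^{s}\!\circ p - q_1$ lies in $\mathcal G_{k+1,s}$ and vanishes on the $n>k$ independent points $g_1,\dots,g_n$, hence is identically zero (a nonzero element of $\mathcal G_{k+1,s}$ has kernel of $\Fq$-dimension at most $k$ by Theorem~\ref{thm:GabisMRD}); equating $x^{[sk]}$-coefficients gives $a_{k-1}=0$. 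With $a_{k-1}=0$ the same comparison at row $\ell=2$ again produces a difference in $\mathcal G_{k+1,s}$ and kills $a_{k-2}$, and so on down to $p(x)=a_0x$. A side remark: your equality $\langle g_1,\dots,g_n\rangle_{\Fq}=\Fm$ holds only when $n=m$; fortunately it is not needed, since vanishing on $n>k$ independent points already suffices.
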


\begin{corollary}\label{cor:numGab}
The number of  $k$-dimensional generalized Gabidulin codes of length $n$ and parameter $s$ over $\Fm$ satisfies
$$|\mathrm{Gab}_{q}(k,n,m,s)| = \prod_{i=1}^{n-1}(q^m-q^i).$$
\end{corollary}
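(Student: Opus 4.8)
The plan is to count generalized Gabidulin codes by exploiting the parametrization $\C = \mathcal G_{k,s}(g)$ together with Proposition \ref{thm:numGab}, which tells us precisely when two generating vectors $g, g'$ yield the same code. Concretely, the map
\[
\{ g \in \Fm^n \mid \rk_q(g) = n \} \longrightarrow \mathrm{Gab}_q(k,n,m,s), \qquad g \longmapsto \rs(M_{s,k}(g))
\]
is surjective by Definition \ref{def:Gab} and Theorem \ref{thm:GabisMRD}, and by Proposition \ref{thm:numGab} its fibers are exactly the orbits of the scaling action of $\Fm^*$ on full-rank vectors, i.e.\ $g$ and $g'$ map to the same code if and only if $g' = \lambda g$ for some $\lambda \in \Fm^*$. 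Hence
\[
|\mathrm{Gab}_q(k,n,m,s)| = \frac{|\{ g \in \Fm^n \mid \rk_q(g) = n \}|}{q^m - 1}.
\]

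Next I would compute the numerator. A vector $g = (g_1, \ldots, g_n) \in \Fm^n$ has $\rk_q(g) = n$ precisely when $g_1, \ldots, g_n$ are $\F_q$-linearly independent elements of $\Fm$; thinking of $\Fm$ as an $m$-dimensional $\F_q$-vector space, this is the standard count of ordered linearly independent $n$-tuples, namely $\prod_{i=0}^{n-1}(q^m - q^i)$. Dividing by $q^m - 1 = q^m - q^0$ removes exactly the $i = 0$ factor, leaving $\prod_{i=1}^{n-1}(q^m - q^i)$, which is the claimed value. It is worth noting (parenthetically) that the right-hand side is independent of $k$ and $s$, consistent with the fact that two different parameters $s$ may index the same or different families but the cardinality is unchanged.

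The only genuine subtlety — and the step I would be most careful about — is making sure the scaling action is free on the set of full-rank vectors, so that every fiber has size exactly $q^m - 1$ and the division is exact. This follows because if $\lambda g = g$ with $g \neq 0$ then $\lambda = 1$; since $\rk_q(g) = n \geq 1$ forces $g \neq 0$, the stabilizer of each such $g$ is trivial. With freeness in hand the fiber count is uniform and the formula drops out. I would also remark that this argument only uses the "if and only if" of Proposition \ref{thm:numGab}; no further structure of $s$-Moore matrices or MRD codes is needed, so the corollary is essentially a one-line consequence once the linear-algebra count of independent tuples is recalled.
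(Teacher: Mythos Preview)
Your argument is correct and is exactly the intended one: the paper states this as an immediate corollary of Proposition~\ref{thm:numGab} without spelling out the details, and your count via the surjective map $g \mapsto \rs(M_{s,k}(g))$ with $\Fm^*$-orbits as fibers is precisely the computation left implicit. The only minor remark is that the citation of Theorem~\ref{thm:GabisMRD} for surjectivity is unnecessary, since surjectivity holds directly by Definition~\ref{def:Gab}.
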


Denote by $\mathrm{Aut}(\F_{q^m})$ the \emph{automorphism group} of
$\F_{q^m}$. It is
well-known that, if $q=p^h$ for a prime $p$, then $\mathrm{Aut}(\F_{q^m})$ is generated by the
\emph{Frobenius map}, which takes an element to its $p$-th
power. Hence, the automorphisms are of the form $x\mapsto x^{p^i}$ for
some $0\leq i < hm$.

The semilinear rank isometries on $\F_{q^m}^{n}$ are induced by the
isometries on $\F_q^{m\times n}$ and are hence well-known, see e.g.\
\cite{be03,mo14,wa96}.
\begin{lemma}\cite[Proposition~2]{mo14}\label{isometries}
  The semilinear $\F_q$-rank isometries on $\F_{q^m}^{n}$ are of the
  form
  \[(\lambda, A, \sigma) \in \left( \F_{q^m}^* \times \GL_n(q) \right)
  \rtimes \mathrm{Aut}(\F_{q^m}) ,\] acting on $ \F_{q^m}^n$ via
  \[(v_1,\dots,v_n) \cdot (\lambda, A, \sigma) = (\sigma(\lambda
  v_1),\dots,\sigma(\lambda v_n)) A .\] In particular, if
  $\mathcal{C}\subseteq \F_{q^m}^n$ is a linear code with minimum rank
  distance $d$, then
  $\mathcal{C}' := \sigma(\lambda \mathcal{C}) A $
  is a linear code with minimum rank distance $d$.
\end{lemma}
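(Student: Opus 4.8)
The plan is to reduce the statement to a very short linear‑algebra computation over $\Fm$, using the description of rank‑one vectors. First I would dispose of the easy direction: every triple $(\lambda,A,\sigma)$ of the stated form acts as a rank isometry, since right multiplication by $A\in\GL_n(q)$ is an $\F_q$‑linear bijection of the coordinates and hence does not change $\dim_{\F_q}\langle v_1,\dots,v_n\rangle_{\F_q}$, scalar multiplication by $\lambda\in\Fm^*$ is an $\Fm$‑linear bijection of $\Fm^n$ and leaves this dimension unchanged, and applying $\sigma\in\Aut(\Fm)$ entrywise sends an $\F_q$‑subspace to an $\F_q$‑subspace of the same dimension (its restriction to $\F_q$ being an automorphism of $\F_q$). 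Composing, $\rk_q$ is preserved, hence so is $\mathrm{d}_r$ applied to differences; this is exactly the ``in particular'' clause.

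For the converse, let $\psi$ be a semilinear rank isometry of $\Fm^n$, with $\psi(\alpha v)=\sigma(\alpha)\psi(v)$ for all $\alpha\in\Fm$ and some $\sigma\in\Aut(\Fm)$. Writing $e_1,\dots,e_n$ for the standard basis and letting $M\in\Fm^{n\times n}$ be the matrix whose $i$‑th row is $\psi(e_i)$, semilinearity gives $\psi(v)=\sigma(v)M$ with $\sigma(v):=(\sigma(v_1),\dots,\sigma(v_n))$, and $M\in\GL_n(\Fm)$ because $\psi$ is bijective. Since applying $\sigma$ entrywise preserves $\rk_q$ and is a bijection of $\Fm^n$, the isometry hypothesis $\rk_q(\psi(v))=\rk_q(v)$ is equivalent to
\[
\rk_q(uM)=\rk_q(u)\qquad\text{for all }u\in\Fm^n .
\]

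Now I would use that a nonzero $u\in\Fm^n$ has $\rk_q(u)=1$ if and only if $u=\beta w$ for some $\beta\in\Fm^*$ and $w\in\F_q^n\setminus\{0\}$. Applied to the rows of $M$: since $\rk_q(e_i)=1$, each row $e_iM$ has $q$‑rank $1$, so $e_iM=\beta_i w_i$ with $\beta_i\in\Fm^*$ and $w_i\in\F_q^n\setminus\{0\}$. Hence $M=\mathrm{diag}(\beta_1,\dots,\beta_n)\,N$ where $N\in\F_q^{n\times n}$ has rows $w_1,\dots,w_n$; comparing determinants, $N$ is invertible over $\Fm$ and, having entries in $\F_q$, lies in $\GL_n(q)$. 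As right multiplication by $N^{-1}\in\GL_n(q)$ preserves $\rk_q$, the diagonal matrix $D:=MN^{-1}=\mathrm{diag}(\beta_1,\dots,\beta_n)$ satisfies $\rk_q(uD)=\rk_q(u)$ for all $u$. Testing this on $u=e_1+e_2$ forces $\beta_1,\beta_2$ to be $\F_q$‑linearly dependent, while testing it on $u=e_1+\xi e_2$ for $\xi\in\Fm\setminus\F_q$ (which exists because $m\geq 2$) forces $\beta_1,\beta_2\xi$ to be $\F_q$‑linearly independent; together these give $\beta_2\in\F_q^*\beta_1$, and the same argument with the pair $e_1,e_i$ gives $\beta_i=c_i\beta_1$ with $c_i\in\F_q^*$ for every $i$. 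Thus $D=\beta_1\,\mathrm{diag}(c_1,\dots,c_n)$ and $M=\beta_1\,\mathrm{diag}(c_1,\dots,c_n)N=\lambda A$ with $\lambda:=\beta_1\in\Fm^*$ and $A:=\mathrm{diag}(c_1,\dots,c_n)N\in\GL_n(q)$. Therefore $\psi(v)=\sigma(v)\lambda A=(\sigma(\sigma^{-1}(\lambda)v_1),\dots,\sigma(\sigma^{-1}(\lambda)v_n))A$, which is precisely the action of $(\sigma^{-1}(\lambda),A,\sigma)$.

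The only genuinely delicate point in this plan is the passage to $\rk_q(uM)=\rk_q(u)$: one must keep in mind that $\sigma$ need not fix $\F_q$ pointwise when $q$ is not prime, so that ``$\sigma$ applied entrywise preserves $\rk_q$'' relies on $\sigma|_{\F_q}$ being an automorphism of $\F_q$ rather than the identity; everything else is the rank‑one characterization together with the two evaluations on $e_1+e_2$ and $e_1+\xi e_2$. (An alternative route would first turn $\psi$ into a collineation of $\mathrm{PG}(n-1,q)$ by tracking the maximal rank‑one $\Fm$‑subspaces and then invoke the Fundamental Theorem of Projective Geometry, but this is heavier and fails for $n=2$, so I would prefer the row‑rank‑one argument above.) For completeness one should note the degenerate case $n=1$, where the claim is immediate and the $\GL_1(q)$‑factor is absorbed into $\Fm^*$; but under the standing assumption $0<k<n\leq m$ we have $n\geq 2$ and $m\geq 2$, which is all the argument uses.
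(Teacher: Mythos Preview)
The paper does not prove this lemma at all: it is quoted verbatim from \cite[Proposition~2]{mo14} and used as a black box. Your proposal therefore supplies strictly more than the paper does, and the argument you give is correct and self-contained.

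Two small remarks on the write-up. First, the test on $u=e_1+\xi e_2$ is unnecessary: from the decomposition $e_iM=\beta_i w_i$ you already know $\beta_i\in\Fm^*$, so once the test on $e_1+e_2$ gives $\dim_{\F_q}\langle\beta_1,\beta_2\rangle_{\F_q}=1$ you immediately have $\beta_2\in\F_q^*\beta_1$; the second test can be dropped (and with it the explicit appeal to $m\geq 2$). Second, in the reduction step it is worth stating explicitly that right multiplication by any $N\in\GL_n(q)$ satisfies $\rsu_q(uN)=\rsu_q(u)$, so that $\rk_q(uD)=\rk_q(uDN)=\rk_q(uM)=\rk_q(u)$; you use this twice and it deserves one line. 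With these cosmetic adjustments the argument is clean and complete.
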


As semilinear isometries on $\F_{q^m}^n$ preserve the rank, we get that $\F_q$-linearly independent elements in $\F_{q^m}^n$ remain $\F_q$-linearly independent under the actions of $\left( \F_{q^m}^* \times \GL_n(q) \right)
  \rtimes \mathrm{Aut}(\F_{q^m}) $. 
Moreover,
the $s$-Moore matrix structure is preserved under these actions, which
implies that the class of generalized Gabidulin codes is closed under
the semilinear isometries. Thus, a code is semilinearly isometric to a
generalized Gabidulin code if and only if it is itself a generalized
Gabidulin code.

As a consequence of Lemma \ref{isometries}, we have an interesting result, that  will be useful in the next section.

\begin{corollary}\label{cor:FqGab}
Let $X \in \Fm^{k \times (n-k)}$, and $\tilde{X}=X+B$ for some matrix $B \in \Fq^{k\times (n-k)}$.  Moreover, let $s$ be a positive integer coprime to $m$.
\begin{enumerate}
\item If the code $\C_X$ is MRD, then also $\C_{\tilde{X}}$ is MRD.
\item If the code $\C_X$ is a generalized Gabidulin code of parameter $s$, then also $\C_{\tilde{X}}$ is a generalized Gabidulin code of parameter $s$.
\end{enumerate}
\end{corollary}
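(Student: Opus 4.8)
The plan is to deduce Corollary \ref{cor:FqGab} from Lemma \ref{isometries} by exhibiting, for a suitable $A \in \GL_n(q)$, the identity $\C_{\tilde X} = \C_X \cdot A$, so that $\C_X$ and $\C_{\tilde X}$ are $\F_q$-rank isometric and hence one is MRD (resp.\ a generalized Gabidulin code of parameter $s$) if and only if the other is. The natural candidate is the block matrix
$$ A = \begin{pmatrix} I_k & B \\ 0 & I_{n-k} \end{pmatrix} \in \GL_n(q), $$
which clearly lies in $\GL_n(q)$ since $B$ has entries in $\F_q$ and $\det A = 1$. First I would compute $(I_k \mid X)\, A = (I_k \mid X + B) = (I_k \mid \tilde X)$, from which $\C_X \cdot A = \rs\big((I_k\mid X)A\big) = \rs(I_k \mid \tilde X) = \C_{\tilde X}$. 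Thus $\C_{\tilde X}$ is the image of $\C_X$ under the rank isometry $(1, A, \mathrm{id}) \in \big(\F_{q^m}^* \times \GL_n(q)\big) \rtimes \Aut(\F_{q^m})$.

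For part (1): by Lemma \ref{isometries}, $\C_{\tilde X} = \C_X \cdot A$ has the same minimum rank distance as $\C_X$, and it has the same length $n$ and dimension $k$ (an invertible linear map over $\Fm$ preserves dimension). Hence if $\C_X$ attains the Singleton bound $d = n-k+1$, so does $\C_{\tilde X}$, i.e.\ $\C_{\tilde X}$ is MRD. For part (2): I would invoke the discussion immediately following Lemma \ref{isometries} in the excerpt, namely that the $s$-Moore matrix structure is preserved under these semilinear isometries, so the class $\Gab_q(k,n,m,s)$ is closed under the action of $\big(\F_{q^m}^* \times \GL_n(q)\big) \rtimes \Aut(\F_{q^m})$; in particular a code is a generalized Gabidulin code of parameter $s$ if and only if its image under $(1,A,\mathrm{id})$ is. Alternatively, and perhaps more cleanly given the tools already assembled, one can argue via Theorem \ref{thm:GabCrit}: $\C_{\tilde X}$ is MRD by part (1), and since $\Phi_s$ is $\F_q$-linear (Lemma \ref{lem:trace}(3) applied entrywise) with $\Phi_s(B) = \theta^s(B) - B = 0$ because $B \in \Fq^{k\times(n-k)}$, we get $\Phi_s(\tilde X) = \Phi_s(X + B) = \Phi_s(X) + \Phi_s(B) = \Phi_s(X)$; hence $\rk(\Phi_s(\tilde X)) = \rk(\Phi_s(X)) = 1$, and Theorem \ref{thm:GabCrit} yields that $\C_{\tilde X}$ is a generalized Gabidulin code of parameter $s$.

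I do not anticipate a serious obstacle here; the only point requiring a little care is that Lemma \ref{isometries} is phrased for the full group of semilinear isometries, so one must observe that the particular block-triangular $A$ above does belong to $\GL_n(q)$ (clear) and that acting by $(1,A,\mathrm{id})$ really sends the standard-form generator matrix $(I_k \mid X)$ to $(I_k \mid \tilde X)$ rather than to some other generator matrix of a different code — this is exactly the row-space computation $\rs\big((I_k\mid X)A\big) = \rs(I_k\mid \tilde X)$, which holds because $A$ is invertible so right-multiplication by $A$ induces a bijection of row spaces of the required form. With the $\Phi_s$-based argument for part (2), the only thing to check is the entrywise vanishing $\Phi_s(B)=0$ for $B$ over $\Fq$, which is immediate from part (4) of Lemma \ref{lem:trace} (equivalently, $\theta^s$ fixes $\Fq$ pointwise).
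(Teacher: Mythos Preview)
Your proof is correct and essentially identical to the paper's: both use the block upper-triangular matrix $M=\begin{pmatrix} I_k & B \\ 0 & I_{n-k}\end{pmatrix}\in\GL_n(q)$ together with Lemma~\ref{isometries} for part~(1), and the paper proves part~(2) via exactly your ``alternative'' route, namely $\Phi_s(\tilde X)=\Phi_s(X)$ combined with Theorem~\ref{thm:GabCrit}. Your first suggested route for part~(2) (closure of $\Gab_q(k,n,m,s)$ under isometries) is also valid and is supported by the discussion right after Lemma~\ref{isometries}, but the paper opts for the $\Phi_s$ computation.
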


\begin{proof} 
\begin{enumerate}
\item Let $G=(I_k\mid X)$, be the generator matrix in standard form for $\C_X$, and let $\widetilde{G}=(I_k \mid \tilde{X})$. 
Then, $\widetilde{G}=GM$ where 
$$M=\begin{pmatrix}
I_k &  B \\
0 & I_{n-k}
\end{pmatrix}\in \GL_n(q).$$
By Lemma \ref{isometries}, $\C_{\tilde{X}}=\C_{X}M$ is MRD.

\item  By Theorem \ref{thm:GabisMRD} the code $C_X$ is MRD, and so it is $C_{\tilde{X}}$ by part (1) of this Corollary. Moreover we have
$$
\Phi_s(\tilde{X})=\Phi_s(X+B) =\Phi_s(X),
$$
and we  conclude using Theorem \ref{thm:GabCrit}.
\end{enumerate}
\end{proof}

We now give an easy improvement of Lemma \ref{lem:systematic}. 

\begin{lemma}\label{lem:nonMRD}
Let $X=(x_{i,j}) \in \Fm^{k\times(n-k)}$. 
\begin{enumerate}
 \item If there exists $i$ such that $\rk_q(1, x_{i,1}, \ldots, x_{i,n-k})<n-k+1$, then $\C_X$ is not MRD.
 
 \item If there exists $j$ such that $\rk_q(1, x_{1,j}, \ldots, x_{k,j})<k+1$, then $\C_X$ is not MRD.
\end{enumerate}

\end{lemma}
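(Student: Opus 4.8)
The plan is to prove both statements by exhibiting, in each case, a nonzero codeword of $\C_X$ whose $q$-rank is strictly smaller than $n-k+1$, contradicting the MRD property (Theorem \ref{thm:singrank} together with the definition of MRD). Recall that a codeword of $\C_X$ is a row vector $c = u(I_k \mid X)$ for some $u \in \Fm^k$, so $c = (u_1,\dots,u_k, \sum_i u_i x_{i,1}, \dots, \sum_i u_i x_{i,n-k})$. The $q$-rank of $c$ is the $\Fq$-dimension of the span of its $n$ coordinates, so the whole argument reduces to controlling such dimensions in terms of the $\rk_q(\cdot)$ quantities in the hypotheses.

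For part (1), suppose $\rk_q(1, x_{i,1}, \dots, x_{i,n-k}) < n-k+1$ for some row index $i$. Take $u = e_i$, the $i$-th standard basis vector of $\Fm^k$; this is nonzero, so $c := e_i(I_k \mid X)$ is a nonzero codeword. Its coordinates are exactly $(0,\dots,0,1,0,\dots,0, x_{i,1}, \dots, x_{i,n-k})$, with the $1$ in position $i$. Hence $\rsu_q(c) = \langle 1, x_{i,1}, \dots, x_{i,n-k}\rangle_{\Fq}$ (the zero coordinates contribute nothing, and the single $1$ among the first $k$ coordinates is absorbed), so $\rk_q(c) = \rk_q(1, x_{i,1},\dots,x_{i,n-k}) < n-k+1$. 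A nonzero codeword of rank strictly less than $n-k+1$ contradicts MRD, so $\C_X$ is not MRD.

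For part (2), suppose $\rk_q(1, x_{1,j},\dots,x_{k,j}) < k+1$ for some column index $j$; equivalently $1, x_{1,j}, \dots, x_{k,j}$ are $\Fq$-linearly dependent, so there exist $\lambda_0, \lambda_1, \dots, \lambda_k \in \Fq$, not all zero, with $\lambda_0 + \sum_{i=1}^k \lambda_i x_{i,j} = 0$. The natural move is to argue that this forces the dual code $\C_X^\perp$ to fail to be MRD and then invoke part (1) of Proposition \ref{prop:dual1}; alternatively, one applies part (1) of the present lemma to a generator matrix in standard form for $\C_X^\perp$ (whose non-systematic part is, up to sign and transposition, determined by $X$). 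The one point that needs a little care — and the main obstacle — is making the passage to the dual precise: one must check that the $(\lambda_i)$ not all zero genuinely produce a row of the dual's standard-form non-systematic part together with a leading $1$ that are $\Fq$-dependent (in particular that the case $\lambda_1 = \dots = \lambda_k = 0$, i.e. $\lambda_0 = 0$, cannot occur, which is immediate, and that $\lambda_0 = 0$ with some $\lambda_i \neq 0$ still yields a dependency involving the constant $1$, which it does since $0 \in \Fq$). Once the dual generator matrix in standard form is written down, part (1) applied to it closes the argument, and Proposition \ref{prop:dual1}(1) transfers the conclusion back: if $\C_X^\perp$ is not MRD then $\C_X$ is not MRD either.
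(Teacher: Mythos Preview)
Your proposal is correct and follows essentially the same approach as the paper: part (1) via the codeword $e_i(I_k\mid X)$, and part (2) by passing to the dual and invoking part (1) together with Proposition~\ref{prop:dual1}(1). One minor remark: the discussion of the coefficients $\lambda_0,\lambda_1,\dots,\lambda_k$ in part (2) is unnecessary --- the paper simply observes that $\C_X^\perp$ has generator matrix $(-X^\top\mid I_{n-k})$, hence is permutation equivalent to $\C_{-X^\top}$, and the hypothesis $\rk_q(1,x_{1,j},\dots,x_{k,j})<k+1$ is exactly the hypothesis of part (1) for the $j$-th row of $-X^\top$ (since $\langle 1,-x_{1,j},\dots,-x_{k,j}\rangle_{\Fq}=\langle 1,x_{1,j},\dots,x_{k,j}\rangle_{\Fq}$).
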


\begin{proof}
 \begin{enumerate}
\item Suppose that $1, x_{i,1},\ldots,x_{i,n-k} $ are $\F_q$-linearly dependent for some $i \in \{1,\ldots, k\}$, and consider the non-zero codeword 
$$e_i \left( \begin{array}{c|c} I_k & X \end{array}\right)=(0,\ldots, 0,1,0,\ldots,0, x_{i,1},\ldots,x_{i,n-k}).$$
The rank of this codeword is strictly less than $n-k+1$, and therefore $\C_X$ can not be MRD.

 \item In this case we consider the code $\C_X^\perp$. Since a generator matrix for this code is  $(-X^\top \mid I_{n-k})$, we get that $\C_X^{\perp}$ is permutation equivalent to the code $\C_{-X^\top}$.
 By the first part of this Lemma, we have that $\C_{-X^\top}$ is not MRD and therefore the same holds for $\C_X^\perp$. Hence, by part (1) of Proposition \ref{prop:dual1} we can conclude that $\C_X$ is not MRD.
 \end{enumerate}
\end{proof}

The following result derives from \cite[Corollary 3.3]{ho16} and  it gives conditions for a code $C_X$ to be MRD, based only on the matrix $X$. For this purpose, we first introduce the set of normalized upper triangular matrices as 
$$\mathrm{U}_{r}(q):=\left\{ A \in \Fq^{r \times r} \mid a_{i,j}=0 \mbox{ for all } i>j, a_{i,i}=1 \mbox{ for all } i \right\}.$$

\begin{theorem}\label{thm:MRDsuperregular}
Let $X\in \Fm^{k \times (n-k)}$.  The following are equivalent:
\begin{enumerate}
 \item\label{part1} $C_X$ is MRD. 
\item For every $A\in \GL_{k}(q), B \in \GL_{n-k}(q), C \in \Fq^{k\times (n-k)}$, the matrix $AXB+C$ is superregular.
\item\label{part3} For every $A\in \mathrm{U}_{k}(q), B \in \mathrm{U}_{n-k}(q), C \in \Fq^{k\times (n-k)}$, the matrix $AXB+C$ is superregular.
\end{enumerate}
\end{theorem}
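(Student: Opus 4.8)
The plan is to prove the equivalences in the cyclic order $(1)\Rightarrow(2)\Rightarrow(3)\Rightarrow(1)$, relying on the characterization of MRD codes via their standard-form generator matrix together with the action of rank isometries. The key observation is that superregularity of a matrix $Y\in\Fm^{k\times(n-k)}$ is exactly the statement that the code $\C_Y$ is MRD, which is the rank-metric analogue of Theorem~\ref{thm:MDSsuperregular} and can be read off from \cite[Corollary 3.3]{ho16}. Indeed, $\C_Y$ is MRD precisely when every set of $k$ columns of $(I_k\mid Y)$ is $\Fm$-linearly independent; expanding the relevant determinants shows this is equivalent to all minors of $Y$ being nonzero, i.e.\ $Y$ superregular. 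I would state this equivalence as the first step (or invoke it directly from \cite{ho16}), since it converts the whole theorem into a statement purely about codes.

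For $(1)\Rightarrow(2)$: assume $\C_X$ is MRD. Given $A\in\GL_k(q)$, $B\in\GL_{n-k}(q)$, $C\in\Fq^{k\times(n-k)}$, set $Y:=AXB+C$. The matrix
$$
M=\begin{pmatrix} A^{-1} & A^{-1}C \\ 0 & B \end{pmatrix}\in\GL_n(q)
$$
satisfies $(I_k\mid X)M=(A^{-1}\mid X B + A^{-1}C)$; pre-multiplying by $A$ (a change of basis of the code, hence harmless) gives the generator matrix $(I_k\mid AXB+C)=(I_k\mid Y)$. Thus $\C_Y=\C_X M$ up to row operations, so by Lemma~\ref{isometries} the code $\C_Y$ is MRD, and by the equivalence of the first paragraph $Y$ is superregular. (Alternatively, split this into the two already-established facts: Corollary~\ref{cor:FqGab}(1) handles the additive part $C$, and the $\GL_k(q)\times\GL_{n-k}(q)$ part is a direct application of Lemma~\ref{isometries} to $M=\mathrm{diag}(A^{-1},B)$ followed by a left change of basis.)

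The implication $(2)\Rightarrow(3)$ is immediate since $\mathrm U_k(q)\subseteq\GL_k(q)$ and $\mathrm U_{n-k}(q)\subseteq\GL_{n-k}(q)$. The remaining implication $(3)\Rightarrow(1)$ is the part I expect to require the real work: taking $A=I_k$, $B=I_{n-k}$, $C=0$ already shows $X$ itself is superregular, but superregularity of $X$ alone is not obviously enough to force $\C_X$ to be MRD — one must control all $k$-subsets of columns of $(I_k\mid X)$, not just the minors of $X$. The idea is to fix an arbitrary index set $J$ of $k$ columns of $(I_k\mid X)$, say $J$ meets the identity part in positions indexed by a set $S\subseteq\{1,\dots,k\}$ and the $X$-part in a set $T\subseteq\{1,\dots,n-k\}$ with $|S|+|T|=k$; the corresponding $k\times k$ minor of $(I_k\mid X)$ equals, up to sign, the $|T|\times|T|$ minor of $X$ on rows $\{1,\dots,k\}\setminus S$ and columns $T$. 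One then has to exhibit $A\in\mathrm U_k(q)$, $B\in\mathrm U_{n-k}(q)$, $C\in\Fq^{k\times(n-k)}$ so that this particular minor of $X$ appears as a full-size minor (or a leading minor) of $AXB+C$, whose nonvanishing is guaranteed by hypothesis~\eqref{part3}; permuting rows/columns into the required position while staying inside the group of \emph{unit upper-triangular} matrices (rather than all permutations) is the delicate point, and is presumably exactly where the structure exploited in \cite[Corollary 3.3]{ho16} enters. I would carry this out by reducing to leading principal minors: add a suitable $C$ to clear entries, then use the unit-triangular $A,B$ to move the submatrix on rows $\{1,\dots,k\}\setminus S$, columns $T$ into the top-left $|T|\times|T|$ block, so that superregularity of $AXB+C$ forces the minor in question to be nonzero. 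Since $J$ was arbitrary, every maximal minor of $(I_k\mid X)$ is nonzero, which is equivalent to $\C_X$ being MRD, closing the cycle.
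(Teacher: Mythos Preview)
Your ``key observation'' is false, and this error propagates into a genuine gap in the $(3)\Rightarrow(1)$ direction. Superregularity of $Y$ is equivalent to $\C_Y$ being \emph{MDS} (Theorem~\ref{thm:MDSsuperregular}), not MRD: the condition ``every set of $k$ columns of $(I_k\mid Y)$ is $\Fm$-linearly independent'' that you write down is precisely the MDS condition, and it is strictly weaker than MRD (any MDS code with an entry of $X$ lying in $\Fq$ already fails to be MRD by Lemma~\ref{lem:systematic}). Consequently, your plan for $(3)\Rightarrow(1)$ --- showing that every $k\times k$ minor of $(I_k\mid X)$ is nonzero --- would only establish that $\C_X$ is MDS. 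In fact that much is immediate from the single choice $A=I_k$, $B=I_{n-k}$, $C=0$, since superregularity of $X$ alone already gives all maximal minors of $(I_k\mid X)$; the elaborate manoeuvre with unit upper-triangular matrices you sketch is both unnecessary for MDS and insufficient for MRD.

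What the paper does instead for $(3)\Rightarrow(1)$ is to invoke the MRD criterion of \cite[Corollary~3.3]{ho16}, which says that $\C_X$ is MRD if and only if $\C_X M$ is MDS for every $M\in\mathrm{U}_n(q)$. Every such $M$ factors as $\begin{pmatrix} A & AC \\ 0 & B\end{pmatrix}$ with $A\in\mathrm{U}_k(q)$, $B\in\mathrm{U}_{n-k}(q)$, $C\in\Fq^{k\times(n-k)}$, and then the standard form of $(I_k\mid X)M$ has non-systematic part $A^{-1}XB+C$; since inversion is a bijection on $\mathrm{U}_k(q)$, hypothesis~(3) gives superregularity of all these matrices, hence each $\C_X M$ is MDS, and the cited criterion yields MRD. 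Your $(1)\Rightarrow(2)$ argument is essentially correct (and matches the paper's) once one notes that the final step is ``$\C_Y$ MRD $\Rightarrow$ $\C_Y$ MDS $\Rightarrow$ $Y$ superregular'', not the claimed equivalence.
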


\begin{proof}
\underline{$(2) \Rightarrow (3)$} This is clear, since $\mathrm{U}_{r}(q)\subseteq \GL_{r}(q)$ for any  positive integer $r$.

\underline{$(1) \Rightarrow (2)$} Suppose that $C_X$ is MRD, and let $A\in \GL_{k}(q), B \in \GL_{n-k}(q), C \in \Fq^{k\times (n-k)}$. Then we consider the matrix $\widetilde{G}=(I_k \mid X)M$, where 
$$M:=\npmatrix{A^{-1} & A^{-1}C \\ 0 & B} \in \GL_n(q).$$
Then it is easy to see that $\rs(\widetilde{G})=C_{\widetilde{X}}$, where $\widetilde{X}=AXB+C$. Then the statement follows from Proposition \ref{prop:newMRDCrit} and the characterization of MDS codes given in Theorem \ref{thm:MDSsuperregular}.

\underline{$(3) \Rightarrow (1)$} Suppose that 3 holds. Every matrix $M \in \mathrm{U}_n(q)$ can be written in the form
$$M:=\npmatrix{A & AC \\ 0 & B} \in \GL_n(q),$$
 for some $A\in \mathrm{U}_{k}(q), B \in \mathrm{U}_{n-k}(q), C \in \Fq^{k\times (n-k)}$. Moreover, $\rs((I_k \mid X)M)=C_{\widetilde{X}}$, where $X=A^{-1}XB+C$. Since the map $A\longmapsto A^{-1}$ is a bijection from $\mathrm{U}_k(q)$ into itself, we conclude that $C_X$ is an MRD code using Theorem \ref{thm:MDSsuperregular} and \cite[Corollary 3.3]{ho16}. 
 %Theorem \ref{thm:MRDuppertriangular}.
\end{proof}
Theorem \ref{thm:MRDsuperregular}  can be considered as the analogue in the rank metric of Theorem \ref{thm:MDSsuperregular} and can also be found in \cite[Theorem 3.11]{neri2019PhD}. The equivalence between parts \ref{part1} and \ref{part3} has been shown independently in \cite[Theorem 4]{almeida2020}.

\section{Standard Form of Gabidulin Codes}\label{sec:stform}
Analogously to the works of Roth and Seroussi \cite{ro85} and D\"ur \cite{dur1987automorphism} for GRS codes, in this section we characterize the matrices $X\in \Fm^{k \times (n-k)}$ such that the code $\C_X$ is a generalized Gabidulin code, and we  refer to this family of matrices as \emph{$(q,s)$-Cauchy matrices}. In order to do that, we  rely on Theorem \ref{thm:GabCrit} which tells that $\rk(\Phi_s(X))=1$. Therefore, we start with a rank-one matrix $A$ and determine the conditions such that $A$ belongs to the image of the map $\Phi_s$. Finally, we impose that the resulting matrices $X$ with $\Phi_s(X)=A$, are such that the code $\C_X$ is MRD and get the desired characterization.

Furthermore, we also give an analogue of Theorem \ref{thm:MDSGC} for generalized Gabidulin codes. This result represents a new criterion that allows to determine whether a given code in standard form is a generalized Gabidulin code, which is faster than the one given in Theorem \ref{thm:GabCrit}.

As in the whole work, we fix positive integers $0<k<n\leq m$. For every positive integer $s$ with $\gcd(m,s)=1$, we consider the following sets:
\begin{align*}\mathbb G(s)&:=\{X\in \Fm^{k\times(n-k)}\mid \C_X \in \mathrm{Gab}_q(k,n,m,s) \}, \\
%  \mathcal R_1 &: =\left\{A\in \F_{q^m}^{k\times (n-k)}\,|\, \rk(A)=1\right\}, \\
  \mathcal R_1^*&:=\left\{A\in (\F_{q^m}^*)^{k\times (n-k)}\,|\, \rk(A)=1\right\}, \\
  \mathcal K
  &:=\left(\ker\left(\mathrm{Tr}_{\F_{q^m}/\F_q}\right)\right)^{k\times(n-k)}.
\end{align*}

\begin{lemma}\label{lem:phi}
 For every integer $s$ coprime to $m$, the following properties hold.
  \begin{enumerate}
\item $\Phi_s(\Fm^{k \times (n-k)})=\mathcal K$.
\item Let $A \in \Fm^{k\times (n-k)}$. If $A \in \mathcal K$ and $X \in \Phi_s^{-1}(\{A\})$, then 
$$\Phi_s^{-1}(\{A\})=\left\{X+B \mid B \in \Fq^{k\times (n-k)}\right\}.$$
In particular, 
$$|\Phi_s^{-1}(\{A\})|=\begin{cases}
    0 &  \mbox{ if } A\notin \mathcal K \\
   q^{k(n-k)} & \mbox{ if } A\in \mathcal K.
  \end{cases}
$$
\item $\Phi_s(\mathbb G(s))\subseteq \mathcal R_1^*\cap \mathcal K$, or, equivalently, $\mathbb G(s) \subseteq \Phi_s^{-1}(\mathcal R_1^* \cap \mathcal K)$.
\item Let $A \in \mathcal R_1^* \cap \mathcal K$ and $X \in \Phi_s^{-1}(\{A\})$. If $X \in \mathbb G(s)$ then the whole preimage of $\{A\}$ is contained in $\mathbb G(s)$, i.e.
$$\Phi_s^{-1}(\{A\}) \subseteq \mathbb G(s).$$

\end{enumerate}
\end{lemma}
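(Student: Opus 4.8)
The plan is to handle the four parts in order, each following rather directly from results already in hand. For part (1), I would note that $\Phi_s$ acts entrywise as $\varphi_s$, so $\Phi_s(\Fm^{k\times(n-k)})=(\mathrm{Im}(\varphi_s))^{k\times(n-k)}$; by the additive Hilbert 90 of Lemma \ref{lem:trace}, part (5), $\mathrm{Im}(\varphi_s)=\ker(\Tr)$, and the right-hand side is exactly $\mathcal K$. For part (2), again working componentwise: if $A=(a_{i,j})\in\mathcal K$ then each $a_{i,j}\in\ker(\Tr)$, so by Lemma \ref{lem:preimage}, part (2), $\varphi_s^{-1}(\{a_{i,j}\})=x_{i,j}+\Fq$ where $X=(x_{i,j})$ is a fixed preimage; assembling these entry by entry gives $\Phi_s^{-1}(\{A\})=\{X+B\mid B\in\Fq^{k\times(n-k)}\}$, a set of cardinality $q^{k(n-k)}$. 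If instead $A\notin\mathcal K$, then by part (1) the preimage is empty, which yields the displayed cardinality formula.

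For part (3), I would first observe that the two stated formulations are equivalent by the adjunction $\Phi_s(S)\subseteq T\iff S\subseteq\Phi_s^{-1}(T)$. Now take $X\in\mathbb G(s)$, i.e.\ $\C_X\in\mathrm{Gab}_q(k,n,m,s)$. By Theorem \ref{thm:GabisMRD} the code $\C_X$ is MRD, so Lemma \ref{lem:systematic} forces every entry $x_{i,j}$ of $X$ to lie in $\Fm\setminus\Fq$; hence by Lemma \ref{lem:trace}, part (4), $\varphi_s(x_{i,j})\neq 0$, which means $\Phi_s(X)$ has all entries nonzero. On the other hand $\rk(\Phi_s(X))=1$ by Theorem \ref{thm:GabCrit}, and $\Phi_s(X)\in\mathcal K$ by part (1). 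Therefore $\Phi_s(X)\in\mathcal R_1^*\cap\mathcal K$, as claimed.

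For part (4), start from $A\in\mathcal R_1^*\cap\mathcal K$ and $X\in\Phi_s^{-1}(\{A\})\cap\mathbb G(s)$. By part (2), $\Phi_s^{-1}(\{A\})=\{X+B\mid B\in\Fq^{k\times(n-k)}\}$. Since $\C_X$ is a generalized Gabidulin code of parameter $s$, Corollary \ref{cor:FqGab}, part (2), gives that $\C_{X+B}$ is a generalized Gabidulin code of parameter $s$ for every $B\in\Fq^{k\times(n-k)}$, i.e.\ $X+B\in\mathbb G(s)$; hence $\Phi_s^{-1}(\{A\})\subseteq\mathbb G(s)$. There is no deep obstruction in this lemma: it is essentially an assembly of the entrywise versions of Lemmas \ref{lem:trace} and \ref{lem:preimage} together with Theorem \ref{thm:GabCrit} and Corollary \ref{cor:FqGab}. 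The one step that requires a genuine (if short) argument rather than pure bookkeeping is the claim in part (3) that $\Phi_s(X)$ has no zero entry, which is precisely where the MRD hypothesis enters, via Lemma \ref{lem:systematic}.
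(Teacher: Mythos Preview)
Your proposal is correct and follows essentially the same approach as the paper's proof: both argue entrywise for parts (1) and (2) via Lemma~\ref{lem:trace}(5) and Lemma~\ref{lem:preimage}(2), invoke Theorem~\ref{thm:GabCrit} and Lemma~\ref{lem:systematic} for part (3), and combine part (2) with Corollary~\ref{cor:FqGab}(2) for part (4). In part (3) you are in fact slightly more explicit than the paper (spelling out the chain Lemma~\ref{lem:systematic} $\Rightarrow x_{i,j}\in\Fm\setminus\Fq$ $\Rightarrow \varphi_s(x_{i,j})\neq 0$ via Lemma~\ref{lem:trace}(4)), but this is the same argument.
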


\begin{proof}
  \begin{enumerate}
  \item Since $\Phi_s$ is the function that maps every entry
    $x_{i,j}$ of the matrix $X$ to $\varphi_s(x_{i,j})$, we have that $A
    \in \Phi_s(\Fm^{k \times (n-k)})$ if and only if every entry $a_{i,j}$ of
    $A$ belongs to $\mathrm{Im}(\varphi_s)$. By part (5) of Lemma \ref{lem:trace}  this is true if and only if every $a_{i,j}$ belongs to
    $\ker\left(\mathrm{Tr}_{\F_{q^m}/\F_q}\right)$.
  
  \item If $A\notin \mathcal K$, then, by part (1) of this Lemma, this means that
    $\Phi_s^{-1}(A)=\emptyset$. Otherwise, again by part (1),
    $\Phi_s^{-1}(A)\neq\emptyset$. In this case every entry
    $a_{i,j}$ belongs to $\mathrm{Im}(\varphi_s)$, and by part (2) of Lemma \ref{lem:preimage},
$$\varphi_s^{-1}(\{a_{i,j}\})=\left\{x_{i,j}+\lambda \mid \lambda \in \Fq \right\}$$
for some $x \in \Fm$. Since this holds for every entry, we get the desired result.

  \item Let $X \in \mathbb G(s)$. By Theorem \ref{thm:GabCrit}, $\Phi_s(X)$ has rank equal to $1$. Moreover, by Lemma \ref{lem:systematic}, all the entries of $\Phi_s(X)$ are in $\Fm^*$. Finally, by part (1) of this Lemma, we have $\Phi_s(X) \in \mathcal K$ and this concludes the proof.
\item It directly follows from part (2) of this Lemma and part (2) of Corollary \ref{cor:FqGab}.

 \end{enumerate}

\end{proof}

As a consequence of part (4) of Lemma \ref{lem:phi}, given a matrix $X\in \Fm^{k \times (n-k)}$, we have that the property of $\C_X$ being Gabidulin only depends on the image $\Phi_s(X)$. It is now crucial to investigate   the matrices that belong  to the image of the map $\Phi_s$, and,  by part (3) of Lemma \ref{lem:phi}, in particular  $\mathcal R_1^* \cap \mathcal K$.

By definition, every element in  $\mathcal R_1^* \cap \mathcal K$ has rank one, and it is well-known that every rank-one matrix can be written as the product of a non-zero column vector by a non-zero row vector. Moreover, for a fixed rank-one matrix over $\Fm$, there are exactly $q^m-1$ different parametrizations of this form. 

The following result is straightforward and directly follows from the considerations above and the definitions of $\mathcal R_1^*$ and $ \mathcal K$.

\begin{lemma}\label{lem:R1}
 The set  $\mathcal R_1^* \cap \mathcal K$ can be written in the following way
   \begin{align*}
\mathcal R_1^* \cap \mathcal K&=\left\{\bs\alpha^\top\bs\beta \mid \bs\alpha \in \Fm^k, \bs\beta\in\Fm^{n-k}, \alpha_i\beta_j \in \ker(\Tr) \mbox{ for all } i,j\right\} \\
&=\left\{\bs\alpha^\top\bs\beta \mid \bs\alpha \in \Fm^k, \bs\beta\in\Fm^{n-k}, \beta_j \in \rsu_q(\bs\alpha)^{\times} \mbox{ for all } j\right\}\\
&= \left\{\bs\alpha^\top\bs\beta \mid \bs\alpha \in \Fm^k, \bs\beta\in\Fm^{n-k}, \rsu_q(\bs\beta) \rsu_q(\bs\alpha)^{\times} \right\}
\end{align*}
Moreover, every element in $\mathcal R_1^*\cap \mathcal K$ has $q^m-1$ distinct representations of this form.
\end{lemma}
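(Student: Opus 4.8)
The plan is to verify the three successive reformulations of $\mathcal R_1^* \cap \mathcal K$ and then count the representations. First I would recall the standard fact that a nonzero matrix $A \in \Fm^{k\times(n-k)}$ has rank one if and only if $A = \bs\alpha^\top\bs\beta$ for some nonzero $\bs\alpha \in \Fm^k$ and nonzero $\bs\beta \in \Fm^{n-k}$; in this case every entry is $a_{i,j} = \alpha_i\beta_j$, and the requirement that $A$ has \emph{all} entries nonzero (membership in $\mathcal R_1^*$) forces every $\alpha_i \neq 0$ and every $\beta_j \neq 0$, so the factor vectors can be taken in $(\Fm^*)^k$ and $(\Fm^*)^{n-k}$. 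Superimposing the condition defining $\mathcal K$, namely $a_{i,j} \in \ker(\Tr)$ for all $i,j$, gives exactly the first displayed description: $\mathcal R_1^*\cap\mathcal K = \{\bs\alpha^\top\bs\beta \mid \alpha_i\beta_j \in \ker(\Tr)\ \forall i,j\}$.

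Next I would translate the condition $\alpha_i\beta_j \in \ker(\Tr)$ for all $i,j$ using the duality machinery of Section~2. For a fixed $j$, the conditions $\Tr(\alpha_i\beta_j)=0$ for $i=1,\dots,k$ say precisely that $\beta_j \in \bigcap_{i} \ker(T_{\alpha_i}) = \rsu_q(\bs\alpha)^{\times}$, by the definition of the trace-orthogonal space (here $\rsu_q(\bs\alpha) = \langle \alpha_1,\dots,\alpha_k\rangle_{\Fq}$, which is well-defined regardless of whether the $\alpha_i$ are independent, by the Proposition following that definition). Demanding this for every $j$ yields the second description. Finally, $\beta_j \in \rsu_q(\bs\alpha)^\times$ for all $j$ is equivalent to $\langle \beta_1,\dots,\beta_{n-k}\rangle_{\Fq} = \rsu_q(\bs\beta) \subseteq \rsu_q(\bs\alpha)^\times$, since $\rsu_q(\bs\alpha)^\times$ is an $\Fq$-subspace; this gives the third description. (I note that the displayed third line in the statement should read $\rsu_q(\bs\beta) \subseteq \rsu_q(\bs\alpha)^{\times}$.)

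For the last claim, the number of distinct representations, I would argue as follows: if $\bs\alpha^\top\bs\beta = \bs\alpha'^\top\bs\beta'$ with all four vectors having only nonzero entries, then comparing, say, the first columns gives $\bs\alpha' = \mu\bs\alpha$ for some $\mu \in \Fm^*$ (the column vectors span the same line), and then comparing any nonzero entry forces $\bs\beta' = \mu^{-1}\bs\beta$. Conversely every such rescaling $(\mu\bs\alpha, \mu^{-1}\bs\beta)$, $\mu \in \Fm^*$, produces the same matrix and stays inside the parametrizing set. Hence the representations are in bijection with $\Fm^*$, giving exactly $q^m-1$ of them. There is no serious obstacle here; the only point requiring a little care is the well-definedness of $\rsu_q(\bs\alpha)^\times$ when the entries of $\bs\alpha$ are $\Fq$-linearly dependent, which is already handled by the Proposition in Section~2, and the harmless typo $\subseteq$ in the third line of the display.
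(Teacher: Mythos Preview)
Your proposal is correct and matches the paper's approach: the paper does not give a detailed proof at all, declaring the result ``straightforward'' from the rank-one factorization $A=\bs\alpha^\top\bs\beta$ (with its $q^m-1$ rescalings) and the definitions of $\mathcal R_1^*$ and $\mathcal K$. You have simply written out those straightforward details, including the correct observation about the missing $\subseteq$ in the third displayed line.
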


This result gives a convenient way to represent $\mathcal R_1^*\cap \mathcal K$ using the set

$$V_{k,n}:=\left\{(\bs\alpha,\bs\beta)\in \Fm^k \times \Fm^{n-k} \mid  \rsu_q(\bs\beta) \rsu_q(\bs\alpha)^{\times}  \right\}.$$
Notice that, since we have $q^m-1$ distinct representations for a matrix in $\mathcal R_1^*\cap \mathcal K$ and the entries are all non-zero, we can always choose the representation with $\beta_1=1$.

At this point, given $(\bs\alpha, \bs\beta) \in V_{k,n}$ and a matrix $X \in \Phi_s^{-1}(\{\bs\alpha^\top\bs\beta\})$, we have, by Theorem \ref{thm:GabCrit} and by the definition of $\mathbb G(s)$, that $\C_X$ is MRD if and only if $X \in \mathbb G(s)$, i.e. if and only if $\C_X$ is a generalized Gabidulin code of parameter $s$.

\begin{lemma}\label{lem:alphabetaindep}
 Let $(\bs\alpha, \bs\beta) \in V_{k,n}$, where $\bs\alpha=(\alpha_1,\ldots, \alpha_k) $ and $\bs\beta=(\beta_1,\ldots, \beta_{n-k})$ and let
 $$X\in {\Phi}_s^{-1}(\{\bs\alpha^\top\bs\beta\}).$$
 \begin{enumerate}
 \item If $\rk_q(\bs\alpha)<k$, then $X \notin \mathbb G(s)$, i.e. $\C_X$ is not MRD.
 \item If $\rk_q(\bs\beta)<n-k$, then $X \notin \mathbb G(s)$, i.e. $\C_X$ is not MRD.
 \end{enumerate}
\end{lemma}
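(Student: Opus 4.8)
The strategy is to reduce both parts to the structural criterion given in Lemma \ref{lem:nonMRD}, which detects non-MRD matrices by the failure of a full-rank condition on a row (part 1) or on a column (part 2) of $X$, together with a $1$. Concretely, I will show that if $\rk_q(\bs\alpha)<k$ (respectively $\rk_q(\bs\beta)<n-k$), then some row (respectively column) of the matrix $X$, together with the element $1$, spans an $\F_q$-space of dimension strictly less than $n-k+1$ (respectively $k+1$), so that Lemma \ref{lem:nonMRD} applies and $\C_X$ is not MRD; by Theorem \ref{thm:GabisMRD} this forces $X\notin\mathbb G(s)$.

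For part (1), suppose $\rk_q(\bs\alpha)<k$. Then there is a nontrivial $\F_q$-relation $\sum_i \lambda_i\alpha_i = 0$ with $\lambda_i\in\F_q$ not all zero. Pick $i_0$ with $\lambda_{i_0}\neq 0$. Because $\Phi_s(X)=\bs\alpha^\top\bs\beta$, the $(i_0,j)$-entry of $X$ satisfies $\varphi_s(x_{i_0,j})=\alpha_{i_0}\beta_j$. Now consider the $\F_q$-linear combination $y_j := \sum_i \lambda_i x_{i,j}$ of the entries in column $j$; applying $\varphi_s$ and using $\F_q$-linearity of $\varphi_s$ (Lemma \ref{lem:trace}, part 3) gives $\varphi_s(y_j)=\sum_i\lambda_i\alpha_i\beta_j = 0$, so by part (4) of Lemma \ref{lem:trace} we get $y_j\in\F_q$ for every $j=1,\dots,n-k$. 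Thus the row vector $(1,x_{i_0,1},\dots,x_{i_0,n-k})$ together with the other rows of $X$ participating in the relation produces, via the coefficients $(\lambda_i)_i$, a vector whose last $n-k$ coordinates lie in $\F_q$; after normalizing by $\lambda_{i_0}^{-1}$ this expresses $x_{i_0,j}$ modulo $\langle x_{i,j} : i\neq i_0\rangle$ as lying in $\F_q$ — more precisely, the $(n-k+1)$ elements $1,x_{i_0,1},\dots,x_{i_0,n-k}$ are $\F_q$-linearly dependent with the remaining rows, so one checks directly that $\rk_q(1,x_{i_0,1},\dots,x_{i_0,n-k})$ together with the span of the other rows collapses. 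The cleanest formulation: the nonzero codeword $\big(\sum_i\lambda_i e_i\big)(I_k\mid X)$ has all of its coordinates in $\F_q$ except possibly nothing new, giving $q$-rank at most $\max\{\#\{i:\lambda_i\neq 0\}, 1\}\le k-1 < n-k+1$ when $n-k+1 > k-1$; in general one instead argues directly via Lemma \ref{lem:nonMRD}(1) after a change of basis on the rows making $\alpha_{i_0}$ depend on the others, so that row $i_0$ of $X$ plus $1$ has $q$-rank $<n-k+1$.

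For part (2), I dualize. A generator matrix of $\C_X^\perp$ is $(-X^\top\mid I_{n-k})$, so $\C_X^\perp$ is permutation-equivalent to $\C_{-X^\top}$, and $\Phi_s(-X^\top) = -\Phi_s(X)^\top = -(\bs\alpha^\top\bs\beta)^\top = -\bs\beta^\top\bs\alpha = \bs\beta^\top(-\bs\alpha)$, a rank-one matrix built from $\bs\beta$ in the "row-index" slot. Hence $\rk_q(\bs\beta)<n-k$ is exactly the hypothesis of part (1) applied to $-X^\top$, so part (1) gives that $\C_{-X^\top}$ is not MRD, whence $\C_X^\perp$ is not MRD, and by Proposition \ref{prop:dual1}(1) neither is $\C_X$; again Theorem \ref{thm:GabisMRD} yields $X\notin\mathbb G(s)$. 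The main obstacle is getting part (1) phrased so that Lemma \ref{lem:nonMRD}(1) applies cleanly: the relation on the $\alpha_i$ mixes several rows of $X$, so one must perform an $\F_q$-row operation (equivalently, pass to a new generator matrix $(I_k\mid X)M$ with $M\in\GL_n(q)$, which by Corollary \ref{cor:FqGab}/Lemma \ref{isometries} preserves both the MRD and the Gabidulin property) to isolate a single row of the transformed $X$ whose entries, together with $1$, are $\F_q$-dependent — after which Lemma \ref{lem:nonMRD}(1) finishes it. Everything else is a routine application of $\F_q$-linearity of $\varphi_s$ and part (4) of Lemma \ref{lem:trace}.
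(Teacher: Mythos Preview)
Your proposal contains the right ingredients but is tangled, and in one place the bound you write is actually wrong. In part~(1) you correctly observe that if $\sum_i\lambda_i\alpha_i=0$ with $\lambda_i\in\F_q$ not all zero, then $y_j:=\sum_i\lambda_i x_{i,j}\in\F_q$ for every $j$. But then the nonzero codeword $\big(\sum_i\lambda_i e_i\big)(I_k\mid X)=(\lambda_1,\ldots,\lambda_k,y_1,\ldots,y_{n-k})$ has \emph{all} entries in $\F_q$, so its $q$-rank is at most $1$, not ``$\max\{\#\{i:\lambda_i\neq 0\},1\}\le k-1$'' as you write (that inequality is false in general: the unique relation can have all $k$ coefficients nonzero). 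Since $k<n$ forces $n-k+1\ge 2$, this rank-$1$ codeword already shows $\C_X$ is not MRD, with no extra hypothesis like ``$n-k+1>k-1$'' needed. So the idea is salvageable, but as written the argument does not stand.

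More importantly, you are making part~(1) harder than it is by aiming at Lemma~\ref{lem:nonMRD}(1) (rows). The $\alpha_i$'s index the \emph{rows} of $\bs\alpha^\top\bs\beta$, so the hypothesis $\rk_q(\bs\alpha)<k$ gives a dependence down a \emph{column}: the first column of $\Phi_s(X)$ is $(\alpha_1\beta_1,\ldots,\alpha_k\beta_1)$, and these are $\F_q$-dependent. Lemma~\ref{lem:preindip} then immediately yields that $1,x_{1,1},\ldots,x_{k,1}$ are $\F_q$-dependent, and Lemma~\ref{lem:nonMRD}(2) finishes. This is exactly the paper's argument, and it avoids row operations, isometry arguments, and the detour through a specific codeword. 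Symmetrically, part~(2) follows by looking at the first \emph{row} $(\alpha_1\beta_1,\ldots,\alpha_1\beta_{n-k})$ and invoking Lemma~\ref{lem:preindip} together with Lemma~\ref{lem:nonMRD}(1); your dualization via $\C_X^\perp\sim\C_{-X^\top}$ and Proposition~\ref{prop:dual1} is valid but needlessly roundabout (and you should at least remark that the trace form's symmetry ensures the swapped pair again lies in $V_{n-k,n}$, or note that your argument for part~(1) never actually used that membership).
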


\begin{proof}
\begin{enumerate}
\item The entries of the first column of $\bs\alpha^\top\bs\beta$ are $\alpha_1\beta_1,\ldots,\alpha_k\beta_1$ that by hypothesis are $\F_q$-linearly dependent. By Lemma \ref{lem:preindip} this  means that  the entries of the first column of $X$ together with the element $1$, are $\F_q$-linearly dependent. At this point we conclude by Lemma \ref{lem:nonMRD}.

\item The entries of the first row of  $\bs\alpha^\top\bs\beta$ are $\alpha_1\beta_1\ldots,\alpha_1\beta_{n-k}$ that by hypothesis are $\F_q$-linearly dependent. Then we conclude again using Lemma \ref{lem:preindip} and Lemma \ref{lem:nonMRD}.
\end{enumerate}
\end{proof}

Finally, we can state our desired result. % It gives a characterization of the generator matrix in standard form of a Gabidulin code.

\begin{theorem}[Standard form of Gabidulin codes]\label{thm:stform}
 Suppose $X \in \Fm^{k\times(n-k)}$ is a matrix such that $\C_X \in \mathrm{Gab}_q(k,n,m,s)$. Then $X \in {\Phi}_s^{-1}(\{\bs\alpha^\top\bs\beta\})$ for some $\bs\alpha \in \Fm^k, \bs\beta \in \Fm^{n-k}$ such that
 \begin{itemize}
  \item[(a)] $\rk_q(\bs\alpha)=k$,
  \item[(b)] $\rk_q(\bs\beta)=n-k$,
  \item[(c)] $\rsu_q(\bs\beta) \subseteq \rsu_q(\bs\alpha)^{\times}$.
 \end{itemize}

Moreover, if $\bs\alpha \in \Fm^k, \bs\beta \in \Fm^{n-k}$ satisfy properties (a), (b), (c)  and $X \in {\Phi}_s^{-1}(\{\bs\alpha^\top\bs\beta\})$, then $\C_X\in \mathbb  \mathrm{Gab}_q(k,n,m,s)$.
\end{theorem}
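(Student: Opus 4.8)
The statement has two directions. The forward direction (from $\C_X$ being a generalized Gabidulin code to the existence of $\bs\alpha,\bs\beta$ with properties (a)--(c)) and the converse. The plan is to assemble both from the lemmas already proved in this section, chiefly Lemma \ref{lem:phi}, Lemma \ref{lem:R1}, Lemma \ref{lem:alphabetaindep}, together with the Gabidulin criterion (Theorem \ref{thm:GabCrit}) and the fact that Gabidulin codes are MRD (Theorem \ref{thm:GabisMRD}).

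\textbf{Forward direction.} Suppose $\C_X$ is a generalized Gabidulin code of parameter $s$, i.e.\ $X\in\mathbb G(s)$. By part (3) of Lemma \ref{lem:phi}, $\Phi_s(X)\in\mathcal R_1^*\cap\mathcal K$. By Lemma \ref{lem:R1}, we may write $\Phi_s(X)=\bs\alpha^\top\bs\beta$ for some $\bs\alpha\in\Fm^k$, $\bs\beta\in\Fm^{n-k}$ with $\rsu_q(\bs\beta)\subseteq\rsu_q(\bs\alpha)^\times$; this is already property (c), and by construction $X\in\Phi_s^{-1}(\{\bs\alpha^\top\bs\beta\})$. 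It remains to check (a) and (b). Since $\C_X$ is Gabidulin it is MRD by Theorem \ref{thm:GabisMRD}; hence the contrapositive of Lemma \ref{lem:alphabetaindep}, part (1), forces $\rk_q(\bs\alpha)=k$, and part (2) forces $\rk_q(\bs\beta)=n-k$. (Note $(\bs\alpha,\bs\beta)\in V_{k,n}$ by (c), so Lemma \ref{lem:alphabetaindep} applies.) This establishes (a)--(c).

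\textbf{Converse direction.} Now suppose $\bs\alpha\in\Fm^k$, $\bs\beta\in\Fm^{n-k}$ satisfy (a)--(c) and $X\in\Phi_s^{-1}(\{\bs\alpha^\top\bs\beta\})$. First, (c) says $\bs\alpha^\top\bs\beta\in\mathcal R_1^*\cap\mathcal K$ by Lemma \ref{lem:R1} — in particular it lies in $\mathcal K$, so the preimage $\Phi_s^{-1}(\{\bs\alpha^\top\bs\beta\})$ is nonempty and, by part (2) of Lemma \ref{lem:phi}, equals $\{X+B\mid B\in\Fq^{k\times(n-k)}\}$. By Lemma \ref{lem:phi} part (4), it suffices to exhibit a single matrix in this preimage that lies in $\mathbb G(s)$; equivalently, by Theorem \ref{thm:GabCrit} it suffices to find one $X_0$ in the preimage with $\C_{X_0}$ MRD, since then $\rk(\Phi_s(X_0))=\rk(\bs\alpha^\top\bs\beta)=1$ already gives the Gabidulin property. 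The natural candidate is built explicitly: using (a) extend $\bs\alpha$ to an $\Fq$-basis of $\Fm$ and take the dual basis, so that (c) lets us write each $\beta_j$ concretely; then use the explicit formula in Lemma \ref{lem:preimage}, part (2), entrywise to produce an $X_0$ with $\Phi_s(X_0)=\bs\alpha^\top\bs\beta$. The expectation is that this $X_0$ has a transparent $s$-Moore-type structure from which MRD-ness — or directly the identification of $\C_{X_0}$ with $\G_{k,s}(g)$ for a suitable $g$ of full $q$-rank — can be read off; alternatively one invokes the constructive part of the characterization together with Theorem \ref{thm:GabisMRD}.

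\textbf{Main obstacle.} The forward direction is essentially bookkeeping over the earlier lemmas. The real content is in the converse: one must show that \emph{some} preimage matrix actually generates a Gabidulin (hence MRD) code, not merely that its $\Phi_s$-image has rank one. I expect the crux to be producing the explicit $X_0$ via the formula of Lemma \ref{lem:preimage} and recognizing that the corresponding code $\C_{X_0}$ is $\G_{k,s}(g)$ for an explicit evaluation vector $g$ with $\rk_q(g)=n$ — verifying that $g$ has full $q$-rank (so that Theorem \ref{thm:GabisMRD} applies) is where properties (a), (b), and (c) must all be used simultaneously, and this linear-independence verification is likely the most delicate computation of the proof.
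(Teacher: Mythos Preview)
Your forward direction matches the paper exactly.

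For the converse, the paper takes a completely different route: a counting argument. Having shown that every $X\in\mathbb G(s)$ lies in $\Phi_s^{-1}(\{\bs\alpha^\top\bs\beta\})$ for some $(\bs\alpha,\bs\beta)$ satisfying (a)--(c), the paper simply counts the matrices on the right-hand side. Using (a), (b), (c), Proposition \ref{thm:kerT}, Lemma \ref{lem:R1} (the $q^m-1$ redundancy in the representation), and part (2) of Lemma \ref{lem:phi} (each fibre has size $q^{k(n-k)}$), the total is
\[
\frac{q^{k(n-k)}}{q^m-1}\prod_{i=0}^{k-1}(q^m-q^i)\prod_{i=0}^{n-k-1}(q^{m-k}-q^i)=\prod_{i=1}^{n-1}(q^m-q^i),
\]
which by Corollary \ref{cor:numGab} equals $|\mathbb G(s)|$. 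The inclusion plus equality of cardinalities finishes the proof. This sidesteps entirely the explicit construction you propose.

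Your constructive plan is not wrong in spirit, but as written it has a genuine gap: you never actually produce the evaluation vector $g$ nor verify $\rk_q(g)=n$, and you yourself flag this as ``the most delicate computation of the proof'' without carrying it out. The paper does essentially this computation later in Section 4.2 (equations \eqref{eq:first}--\eqref{eq:linsyst}), but only \emph{after} the theorem is proved, and in the reverse direction --- recovering $g$ from a matrix already known to be a $(q,s)$-Cauchy matrix. Turning that into a self-contained proof of the converse would require showing the resulting $g$ has full $q$-rank directly from (a)--(c), which is real work you have not done. Interestingly, the paper's conclusion explicitly notes that the counting proof does not extend to infinite cyclic extensions and that a bijective argument of the kind you sketch would be needed there --- so your approach, if completed, would buy greater generality, but as a proof over finite fields the counting argument is both shorter and complete.
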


\begin{proof}
Let $\C_X$ be a Gabidulin code. We have that ${\Phi}_s(X)$ is of the form $\bs\alpha^\top\bs\beta$ for some $\bs\alpha, \bs\beta$ by part (3) of Lemma \ref{lem:phi} and by Lemma \ref{lem:R1}. Moreover, part (c) follows from the fact that if $\C_X$ is a Gabidulin code, then all the entries of ${\Phi}_s(X)$ belong to $\ker(\Tr)$. Finally part (a) and (b) follow from Lemma \ref{lem:alphabetaindep}.

On the other hand, we can count the number of matrices $X\in{\Phi}_s^{-1}(\{\bs\alpha^\top\bs\beta\})$ for  $\bs\alpha \in \Fm^k, \bs\beta \in \Fm^{n-k}$ satisfying properties (a), (b), (c). For $\bs\alpha$ we have $\prod_{i=0}^{k-1}(q^{m}-q^i)$ possible choices, while for $\bs\beta$ we have $\prod_{i=0}^{n-k-1}(q^{m-k}-q^i)$ choices. Moreover we need to divide by $q^m-1$ since, by Lemma \ref{lem:R1}, we have $q^m-1$ choices of $(\bs\alpha,\bs\beta)$ that gives the same matrix $\bs\alpha^\top\bs\beta$.
Since for every $\bs\alpha^\top\bs\beta \in \mathcal R_1^*\cap\mathcal K$ we have, by part (2) of Lemma \ref{lem:phi}, $q^{k(n-k)}$ many matrices in the preimage under the map ${\Phi}_s$, we finally obtain 
\begin{align*}
& \frac{q^{k(n-k)}}{q^m-1}\prod_{i=0}^{k-1}(q^{m}-q^i)\prod_{i=0}^{n-k-1}(q^{m-k}-q^i) \\
=&\prod_{i=1}^{k-1}(q^m-q^i)\prod_{i=0}^{n-k-1}(q^{m}-q^{i+k})\\
= &\prod_{i=1}^{n-1}(q^m-q^i).
\end{align*}
By Corollary \ref{cor:numGab}, this number is equal to the number of distinct Gabidulin codes. Therefore, by a counting argument, it follows that conditions (a), (b), (c) are also sufficient.
\end{proof}

Theorem \ref{thm:stform}  gives a characterization of the generator matrix in standard form of a generalized Gabidulin code.
 In \cite{ro85, dur1987automorphism}, it was shown that there is a one-to-one correspondence between generalized Reed-Solomon (GRS)  codes and generalized Cauchy (GC) matrices. In that paper, it is shown that a code in the Hamming metric whose generator matrix in standard form is $(I_k \mid X)$ is a GRS code if and only if $X$ is a GC matrix. Since generalized Gabidulin codes are the analogue of GRS codes for the rank metric, it becomes natural to give the definition of a $q$-analogue of Cauchy matrices according to Theorem \ref{thm:stform}.

Let  $\gamma \in \Fm$ such that $\Tr(\gamma)\neq 0$ and $s$ an integer coprime to $m$. We define the function $\pi_s$ as
\begin{equation}
\begin{array}{ >{\displaystyle}r >{{}}c<{{}}  >{\displaystyle}l } 
\pi_s : \Fm & \longrightarrow & \Fm \\
\alpha & \longmapsto & %\frac{1}{\Tr(\gamma)}\left(\alpha\sigma(\gamma)+(\alpha+\sigma(\alpha))\sigma^2(\gamma)+ \ldots+(\alpha+\ldots+\sigma^{n-2}(\alpha))\sigma^{n-1}(\gamma) \right)
\frac{-1}{\Tr(\gamma)} \sum_{i=0}^{m-2}\left(\sigma^{i+1}(\gamma)\sum_{j=0}^i(\sigma^j(\alpha))\right).
\end{array}
\end{equation}
%\begin{equation}
%\begin{array}{ >{\displaystyle}r >{{}}c<{{}}  >{\displaystyle}l } 
%\pi_s : \Fm & \longrightarrow & \Fm \\
%\alpha & \longmapsto & %\frac{1}{\Tr(\gamma)}\left(\alpha\sigma(\gamma)+(\alpha+\sigma(\alpha))\sigma^2(\gamma)+ \ldots+(\alpha+\ldots+\sigma^{n-2}(\alpha))\sigma^{n-1}(\gamma) \right)
 %\sum_{i=0}^{m-2}\left(\sigma^{i+1}(\gamma)\sum_{j=0}^i(\sigma^j(\alpha))\right),
%\end{array}
%\end{equation}
where $\sigma:=\theta^s$. Recall that, by Lemma \ref{lem:preimage}, for $\alpha \in \ker(\Tr)$, $\pi_s(\alpha)$ gives one of the elements in the preimage of $\varphi_s$, i.e.  $\varphi_s(\pi_s(\alpha))=\alpha$ and
 $\pi_s(\varphi_s(\alpha))=\alpha+\lambda$ for some $\lambda \in \Fq$. Moreover, every element in $\varphi_s^{-1}(\{\alpha\})$ is of the form $\pi_s(\alpha)+\lambda$. 

\begin{definition}\label{def:rankcauchy}
 Let $\bs\alpha \in \Fm^t, \bs\beta \in \Fm^{r}$ such that 
 \begin{itemize}
  \item[(A)] $\rk_q(\bs\alpha)=t$,
  \item[(B)] $\rk_q(\bs\beta)=r$,
  \item[(C)] $\rsu_q(\bs\beta) \subseteq \rsu_q(\bs\alpha)^{\times}$.
 \end{itemize}
 Moreover, let  $s$ be an integer coprime to $m$ and $B \in \F_q^{t\times r}$. A $t\times r$ \emph{$(q,s)$-Cauchy matrix} $C_{(q,s)}(\bs\alpha,\bs\beta,B)$ of parameter $s$ is a matrix of the form
$$C_{(q,s)}(\bs\alpha,\bs\beta,B)=\begin{pmatrix}  \pi_s(\alpha_1\beta_1) & \pi_s(\alpha_1\beta_2) & \cdots & \pi_s(\alpha_1\beta_{r})\\
 \pi_s(\alpha_2\beta_1) & \pi_s(\alpha_2\beta_2) &\cdots & \pi_s(\alpha_2\beta_{r})\\
\vdots & \vdots &  & \vdots \\
 \pi_s(\alpha_t\beta_1) & \pi_s(\alpha_t\beta_2) &\cdots & \pi_s(\alpha_t\beta_{r})\\
\end{pmatrix}+B.$$
When $s=1$ we will simply call it \emph{$q$-Cauchy matrix}.
%In the following we will use the abbreviation $s$-GRC matrix to denote a matrix of this form.
\end{definition}

\begin{remark}
 Definition \ref{def:rankcauchy} directly arises from the characterization of the generator matrix in standard form of a Gabidulin code. However, one can see that $(q,s)$-Cauchy matrices introduced in this work are the $q$-analogue of GC matrices. Indeed,  conditions (A), (B) and (C) represent  the $q$-analogues of conditions (a), (b) and (c) of Definition \ref{def:cauchy}.
 \end{remark}

With this definition, we can reformulate Theorem \ref{thm:stform} in the following way, that puts emphasis on the correspondence between generalized Gabidulin codes and $(q,s)$-Cauchy matrices

\begin{thmbis}\label{thm:stformbis}
 Let $X \in \Fm^{k \times(n-k)}$ and let $s$ be a positive integer coprime to $m$. Then,  $\C_X \in \Gab_q(k,n,m,s)$  if and only if the matrix $X$ is a $(q,s)$-Cauchy matrix.
\end{thmbis}

From Theorem \ref{thm:stform} we have an immediate consequence, that relates $(q,s)$-Cauchy matrices with Moore matrices.
\begin{corollary}\label{cor:inverseMoore}
 Let $0<k<n\leq m$ be positive integers and $s$ be another integer coprime to $m$. Let  $g\in \Fm^n$ be  such that $\rk_q(g)=n$. Then the matrix
$$ M_{k,s}(g_1,\ldots,g_k)^{-1} M_{k,s}(g_{k+1},\ldots,g_n) $$
is a $(q,s)$-Cauchy matrix in $\Fm^{k \times (n-k)}$.

Moreover, if $R \in \Fm^{t \times r}$ is a $(q,s)$-Cauchy matrix, then there exists $g=(g_1,\ldots,g_{t+r}) \in \Fm^{t+r}$  with $\rk_q(g)=t+r$  such that
$$R= M_{t,s}(g_1,\ldots,g_t)^{-1} M_{t,s}(g_{t+1},\ldots,g_{t+r}).$$
\end{corollary}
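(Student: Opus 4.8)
The plan is to deduce Corollary~\ref{cor:inverseMoore} directly from Theorem~\ref{thm:stform} (equivalently Theorem~\ref{thm:stformbis}) together with the fact that the $s$-Moore matrix is the canonical generator matrix of a generalized Gabidulin code. First I would prove the first assertion. Let $g\in\Fm^n$ with $\rk_q(g)=n$, and write $g=(g',g'')$ with $g'=(g_1,\ldots,g_k)$ and $g''=(g_{k+1},\ldots,g_n)$. Then $M_{k,s}(g)=\bigl(M_{k,s}(g')\mid M_{k,s}(g'')\bigr)$ is a generator matrix of the generalized Gabidulin code $\G_{k,s}(g)$. Since $\rk_q(g')\ge k$ forces $\rk_q(g')=k$, the block $M_{k,s}(g')$ is a $k\times k$ Moore matrix built on $\F_q$-linearly independent elements, hence invertible (this is the standard nonsingularity of square Moore matrices, and it also follows because $\G_{k,s}(g)$ is MRD, so its generator matrix has every $k$ columns independent). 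Multiplying on the left by $M_{k,s}(g')^{-1}$ yields the generator matrix in standard form
$$\bigl(I_k \mid M_{k,s}(g')^{-1}M_{k,s}(g'')\bigr),$$
so by definition $\C_X=\G_{k,s}(g)$ for $X:=M_{k,s}(g')^{-1}M_{k,s}(g'')$. Since $\C_X\in\Gab_q(k,n,m,s)$, Theorem~\ref{thm:stformbis} gives that $X$ is a $(q,s)$-Cauchy matrix, which is exactly the first claim.

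For the converse, suppose $R\in\Fm^{t\times r}$ is a $(q,s)$-Cauchy matrix. Apply Theorem~\ref{thm:stformbis} with $k:=t$ and $n:=t+r$ (this is legitimate since $0<t<t+r\le m$ is within the standing range, with $m$ the extension degree): the code $\C_R$ generated by $(I_t\mid R)$ is a generalized Gabidulin code of parameter $s$, say $\C_R=\G_{t,s}(g)$ for some $g=(g_1,\ldots,g_{t+r})\in\Fm^{t+r}$ with $\rk_q(g)=t+r$. Thus $(I_t\mid R)$ and $M_{t,s}(g)=\bigl(M_{t,s}(g_1,\ldots,g_t)\mid M_{t,s}(g_{t+1},\ldots,g_{t+r})\bigr)$ generate the same code. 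As above, $M_{t,s}(g_1,\ldots,g_t)$ is invertible, and by Proposition~\ref{thm:numGab} (or simply uniqueness of the standard form guaranteed by Lemma~\ref{lem:systematic}) the reduced row echelon form of $M_{t,s}(g)$ is precisely $\bigl(I_t\mid M_{t,s}(g_1,\ldots,g_t)^{-1}M_{t,s}(g_{t+1},\ldots,g_{t+r})\bigr)$; comparing with $(I_t\mid R)$ forces
$$R=M_{t,s}(g_1,\ldots,g_t)^{-1}M_{t,s}(g_{t+1},\ldots,g_{t+r}),$$
as desired.

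The only real subtlety — and the step I would flag as the potential obstacle — is the bookkeeping about whether the parameters fit the hypotheses of Theorem~\ref{thm:stform}: that theorem is stated for fixed $0<k<n\le m$, so in the converse direction one must check $t+r\le m$. If the corollary is meant to hold for all $t,r$ with $t+r\le m$ this is automatic; otherwise one should state this constraint explicitly. Everything else is purely formal: the identification $\C_X=\G_{k,s}(g)$ from comparing generator matrices, the invertibility of square Moore matrices on $\F_q$-independent points, and the uniqueness of the systematic generator matrix. I would therefore present the proof as two short paragraphs mirroring the two halves of the statement, citing Theorem~\ref{thm:stformbis} and Lemma~\ref{lem:systematic} at the key points and invoking standard nonsingularity of Moore matrices (or, equivalently, the MRD property via Theorem~\ref{thm:GabisMRD}) for the invertibility of the first block.
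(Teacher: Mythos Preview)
Your proof is correct and matches the paper's approach: the corollary is stated there as an immediate consequence of Theorem~\ref{thm:stform} (equivalently Theorem~\ref{thm:stformbis}), and your argument is precisely the natural unpacking of that deduction via the canonical Moore generator matrix and uniqueness of the standard form. Regarding your flagged concern about $t+r\le m$: this constraint is in fact automatic from Definition~\ref{def:rankcauchy}, since condition~(C) forces $\rsu_q(\bs\beta)\subseteq\rsu_q(\bs\alpha)^{\times}$, and the latter has $\F_q$-dimension $m-t$ by Proposition~\ref{thm:kerT}, so $r=\rk_q(\bs\beta)\le m-t$.
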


Now, we want to determine the basis of the linearized polynomial space $\mathcal G_{k,s}$ that corresponds to the generator matrix in standard form.
In order to do that, we introduce the following notion. 
\begin{definition}
 Let $h=(h_1,\ldots, h_\ell) \in \Fm^{\ell}$ be a vector such that $\rk_q(h)=\ell$, and let $s$ be an integer coprime to $m$. We define the polynomial $p_{h,s}$ associated to $h$ as
$$p_{h,s}(x)=\det(M_{\ell+1,s}(h_1,\ldots, h_\ell, x)).$$
\end{definition}

Obviously $p_{h,s}(x)$ is a linearized polynomial and in particular it belongs to $\mathcal G_{\ell+1,s}$. Observe that, by the properties of  $s$-Moore matrices, it can be deduced that the set of roots of $p_{h,s}(x)$ in $\Fm$ is equal to the $\Fq$-subspace $\rsu_q(h)$. Moreover, If $h,h'\in \Fm^{\ell}$ are two vectors such that $\rk_q(h)=\rk_q(h')=\ell$ and $\rsu_q(h)=\rsu_q(h')$, then
$$p_{h,s}(x)=\det(E)p_{h',s}(x),$$
where $E\in \Fq^{\ell \times\ell}$ is the change-of-basis matrix from $\{h_1',\ldots,h_{\ell}'\}$ to $\{h_1,\ldots, h_{\ell}\}$. Note that the  polynomial associated to $h$ is a scalar multiple of a particular polynomial that is known as subspace polynomial or annihilator polynomial of the subspace $\rsu_q(h)$
 (see \cite[Chapter 3, Section 4]{li94} for more details).

\begin{remark}\label{rem:systpoly}
 Let $\C=\mathcal G_{k,s}(g_1,\ldots, g_n)$ be a generalized Gabidulin code of parameter $s$. Consider the vectors
$$g^{(i)}:=( g_1,\ldots,g_{i-1},g_{i+1},\ldots, g_k)\in \Fm^{k-1} \quad \mbox{ for } i=1,\ldots, k,$$
and define the polynomials
$$f_i(x):=p_{g^{(i)},s}(g_i)^{-1}p_{g^{(i)},s}(x) \quad \mbox{ for } i=1,\ldots, k.$$
It follows from the definition that for every $i,j \in \{1,\ldots, k\}$, we have $f_i(x)\in \mathcal G_{k,s}$ and
$$f_i(g_j)=\delta_{i,j}=\begin{cases} 1 & \mbox{ if } i=j \\
 0 & \mbox{ if } i\neq j.
\end{cases}$$
Therefore the generator matrix in standard form for  the generalized Gabidulin code $\C$ is obtained evaluating the basis
$\left\{f_1(x), \ldots, f_k(x) \right\}$ of $\mathcal G_{k,s} $
in the vector $g=(g_1,\ldots, g_n)$.
\end{remark}

\subsection{A new criterion for generalized Gabidulin codes}
The following result represents the  analogue of Theorem \ref{thm:MDSGC} for the rank metric, and its proof directly follows from Theorem \ref{thm:stform}.

\begin{theorem}[New Gabidulin Criterion I]\label{thm:GabCritNew}
Let $X \in \Fm^{k \times (n-k)}$ and let $s$ be an integer coprime to $m$. Then, $\C_X\in \Gab_q(k,n,m,s)$  if and only if
\begin{itemize}
\item[(i)] the first row of the matrix $\Phi_s(X)$ has $q$-rank $n-k$
\item[(ii)] the first column of the matrix $\Phi_s(X)$ has $q$-rank $k$,
\item[(iii)] $\rk(\Phi_s(X))=1$.
\end{itemize}
\end{theorem}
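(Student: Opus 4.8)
The plan is to deduce this criterion directly from Theorem~\ref{thm:stform} together with the structural results about $\Phi_s$ established in Lemma~\ref{lem:phi}. The key observation is that conditions (i), (ii), (iii) are stated purely in terms of the matrix $\Phi_s(X)$, and by part~(4) of Lemma~\ref{lem:phi} the property of $\C_X$ being a generalized Gabidulin code of parameter $s$ depends only on $\Phi_s(X)$ as well. So the whole statement is really a statement about which rank-one matrices $A=\Phi_s(X)$ arise from generalized Gabidulin codes, and Theorem~\ref{thm:stform} already tells us exactly that: $\C_X\in\Gab_q(k,n,m,s)$ if and only if $\Phi_s(X)=\bs\alpha^\top\bs\beta$ with $\bs\alpha,\bs\beta$ satisfying (a), (b), (c) of Theorem~\ref{thm:stform}.

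First I would prove the forward direction. Assume $\C_X\in\Gab_q(k,n,m,s)$. By Theorem~\ref{thm:stform}, $\Phi_s(X)=\bs\alpha^\top\bs\beta$ for some $\bs\alpha\in\Fm^k$, $\bs\beta\in\Fm^{n-k}$ with $\rk_q(\bs\alpha)=k$, $\rk_q(\bs\beta)=n-k$ and $\rsu_q(\bs\beta)\subseteq\rsu_q(\bs\alpha)^\times$. Condition (iii) is immediate since $\bs\alpha^\top\bs\beta$ has rank $1$ (both vectors are nonzero as their $q$-ranks are positive). For (i): the first row of $\bs\alpha^\top\bs\beta$ is $\alpha_1(\beta_1,\dots,\beta_{n-k})$; since all entries of $\Phi_s(X)$ are nonzero (by Lemma~\ref{lem:systematic}, or because $\rsu_q(\bs\alpha)$ has dimension $k\geq 1$ so $\alpha_1\neq 0$), multiplication by the scalar $\alpha_1\in\Fm^*$ does not change the $q$-rank, hence the first row has $q$-rank $\rk_q(\bs\beta)=n-k$. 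Condition (ii) follows symmetrically: the first column is $\beta_1\bs\alpha^\top$ and $\beta_1\neq 0$, so it has $q$-rank $\rk_q(\bs\alpha)=k$.

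For the converse, assume (i), (ii), (iii) hold. By (iii), $\Phi_s(X)$ has rank $1$, so we can write $\Phi_s(X)=\bs\alpha^\top\bs\beta$ for nonzero vectors $\bs\alpha\in\Fm^k$, $\bs\beta\in\Fm^{n-k}$; moreover all entries are nonzero (an entry $\alpha_i\beta_j=0$ would force a zero row or column, contradicting (i) or (ii) since $n-k\geq 1$ and $k\geq 1$). Also $\Phi_s(X)\in\mathcal K$ by part~(1) of Lemma~\ref{lem:phi}, so $\alpha_i\beta_j\in\ker(\Tr)$ for all $i,j$, which by Lemma~\ref{lem:R1} gives $(\bs\alpha,\bs\beta)\in V_{k,n}$, i.e. $\rsu_q(\bs\beta)\subseteq\rsu_q(\bs\alpha)^\times$; that is condition (c) of Theorem~\ref{thm:stform}. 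For (a): the first column of $\Phi_s(X)$ is $\beta_1\bs\alpha^\top$ with $\beta_1\in\Fm^*$, so its $q$-rank equals $\rk_q(\bs\alpha)$; by (ii) this is $k$, so $\rk_q(\bs\alpha)=k$. Symmetrically, (i) gives $\rk_q(\bs\beta)=n-k$, which is condition (b). Now $\bs\alpha,\bs\beta$ satisfy (a), (b), (c), and $X\in\Phi_s^{-1}(\{\bs\alpha^\top\bs\beta\})$, so the second part of Theorem~\ref{thm:stform} yields $\C_X\in\Gab_q(k,n,m,s)$.

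I do not expect a serious obstacle here: the theorem is essentially a repackaging of Theorem~\ref{thm:stform} in terms of easily computable quantities. The one point requiring a little care is the bookkeeping around nonzero entries and the scalar factors $\alpha_1$, $\beta_1$ when transferring $q$-ranks between a rank-one matrix and its generating vectors --- one must make sure that the representation $\Phi_s(X)=\bs\alpha^\top\bs\beta$ chosen in the converse has nonzero entries so that the first row (resp.\ column) is genuinely a nonzero scalar multiple of $\bs\beta$ (resp.\ $\bs\alpha$); this is guaranteed by combining (iii) with (i) and (ii). A minor remark worth including is that, as in Theorem~\ref{thm:MDSGC}, this criterion can be checked at the cost of one reduced row echelon form computation on $\Phi_s(X)$, which is the source of the $\mathcal O(m\cdot F(k,n))$ complexity claimed in the introduction.
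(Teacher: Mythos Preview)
Your proposal is correct and matches the paper's approach exactly: the paper simply states that the result ``directly follows from Theorem~\ref{thm:stform}'', and what you have written is precisely the unpacking of that claim, translating conditions (a), (b), (c) of Theorem~\ref{thm:stform} into conditions (i), (ii), (iii) via the rank-one decomposition $\Phi_s(X)=\bs\alpha^\top\bs\beta$. The only minor imprecision is in your sentence ``an entry $\alpha_i\beta_j=0$ would force a zero row or column, contradicting (i) or (ii)'': strictly speaking (i) and (ii) concern only the \emph{first} row and column, so the clean argument is that (ii) forces $\beta_1\neq 0$ (else the first column vanishes), whence $\rk_q(\bs\alpha)=k$, and symmetrically (i) forces $\alpha_1\neq 0$ and $\rk_q(\bs\beta)=n-k$; the vanishing of other entries is then ruled out a posteriori, but is not needed for the proof.
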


This theorem can be reformulated also in the following way.

\begin{thmbis}[New Gabidulin Criterion II]\label{thm:GabCritNewII}
Let $$X =(x_{i,j})_{\substack{1 \leq i \leq k \\ 1 \leq j \leq n-k }} \in \Fm^{k \times (n-k)}$$ and let $s$ be an integer coprime to $m$. Then, $\C_X\in \Gab_q(k,n,m,s)$  if and only if
\begin{itemize}
\item[(i')] $\rk_q(1,x_{1,1}, \ldots, x_{1,n-k})=n-k+1$,
\item[(ii')] $\rk_q(1,x_{1,1}, \ldots, x_{k,1})=k+1$,
\item[(iii)] $\rk(\Phi_s(X))=1$.
\end{itemize}
\end{thmbis}

\begin{proof}
 By Lemma \ref{lem:preindip} we have that conditions (i') and (ii') are equivalent to conditions (i) and (ii).  This means that the statement is equivalent to Theorem \ref{thm:GabCritNew}. 
\end{proof}

In addition to representing a natural analogue of Theorem \ref{thm:MDSGC} for the rank metric framework, Theorem \ref{thm:GabCritNew} also gives a new criterion to recognize whether a given code in standard form is a generalized Gabidulin code. Observe that, contrary to Theorem \ref{thm:GabCrit}, that gives a criterion subject to a previous verification that the code is MRD, this result is independent on this assumption, and it could be verified more easily. Indeed, according to Proposition \ref{prop:newMRDCrit}, checking whether a code is MRD requires the computation of $\binom{n}{k}_q=\mathcal O(q^{k(n-k)})$ matrix products and ranks, while this new criterion only requires to check the linear independence of two sets of elements and the computation of the rank of one matrix.

More generally, suppose we have an $\Fm$-linear rank metric code $\C\subseteq \Fm^n$ of dimension $k$ given by one of its generator matrices $G\in \Fm^{k\times n}$, and an integer $s$ coprime to $m$. We can check whether $\C$ is a generalized Gabidulin code of parameter $s$ with the following algorithm. First we compute the reduced row echelon form of $G$. If it is not of the form $(I_k \mid X)$, then by Lemma \ref{lem:systematic}, $\C$ is not MRD and hence it is not a generalized Gabidulin code for any parameter $s$. Hence, suppose we get a matrix of the form $(I_k \mid X)$. We can  use Theorem \ref{thm:GabCritNew}, computing the matrix $\Phi_s(X)$ and  its rank, and then verifying the linear independence of the elements in first row and in the first column. It is easy to see that the computational cost of this algorithm is given by the cost of computing the reduced row echelon form of $G$, that can be done via Gaussian elimination, or with faster algorithms. 
Therefore we have just provided a procedure that verifies if a given code is a Gabidulin code with 
$\mathcal O(m\cdot F(k,n))$ operations over $\Fm$, where $F(k,n)$ represents the computational cost of computing the reduced row echelon form of a $k \times n$ matrix. 
%This cost can also be improved if one uses faster algorithms for computing the reduced row echelon form of a matrix.
Observe that new criteria for checking whether a given rank-metric code is a generalized Gabidulin code were given in \cite[Theorem 6.5]{neri2019equivalence}. Although these criteria do not require to check the MRD condition, it is easy to see that the procedure described above is still faster.

\begin{example}
Let $q=3$, $k=3$ and $n=m=6$. Consider the finite field $\F_{3^6}=\F_3(a)$, where $a$ is a primitive element that satisfies the relation $a^6+2a^4+a^2+2a+2=0$. Consider the $\F_{3^6}$-linear code $\C\subseteq \F_{3^6}^6$ with generator matrix 
$$G=\begin{pmatrix}a^{2} & a^{54} & a^{591} & a^{277} & a^{160} & a^{634}  \\ 
a^{67} & a^{701} & a^{443} & a^{45} & a^{486} & a^{209}  \\
a^{320} & a^{199} & a^{650} & a^{361} & a^{701} & a^{562}  \end{pmatrix}.$$
We put $G$ in reduced row echelon form, and obtain the matrix $(I_3\mid X)$ with
$$X=\begin{pmatrix} a^{180} & a^{373} & a^{714} \\ 
a^{14} & a^{588} & a^{561} \\
a^{370} & a^{702} & a^{442} \end{pmatrix}.$$
For $s=1$ we consider the map 
$$ \begin{array}{ >{\displaystyle}r >{{}}c<{{}}  >{\displaystyle}l } 
\pi_1 : \F_{3^6} & \longrightarrow & \F_{3^6} \\
z & \longmapsto & %\frac{1}{\Tr(\gamma)}\left(\alpha\sigma(\gamma)+(\alpha+\sigma(\alpha))\sigma^2(\gamma)+ \ldots+(\alpha+\ldots+\sigma^{n-2}(\alpha))\sigma^{n-1}(\gamma) \right)
 \sum_{i=0}^{4}\left(\gamma^{3^{i+1}}\sum_{j=0}^iz^{3^j}\right),\end{array}$$
 with $\gamma=a^2$. Then, we compute the matrix
$$\Phi_1(X)=\begin{pmatrix}  a^{72} & a^{226} & a^{406} \\ 
a^{98} & a^{252} & a^{432} \\
a^{144} & a^{298} & a^{478}  \end{pmatrix} $$
and observe that it has rank one. Moreover, the element of the first row of $\Phi_1(X)$ are linearly independent over $\F_3$ and the same holds for the elements of the first column. Thus, by Theorem \ref{thm:GabCritNew}, $\C$ is a generalized Gabidulin code of parameter $1$, that is a classical Gabidulin code.
\end{example}

\subsection{Recovering the parameters of the code from the $(q,s)$-Cauchy matrix}

In order to complete the picture on the correspondence between generalized Gabidulin codes and $(q,s)$-Cauchy matrices, we need to find the relations between the vector of points $g=(g_1,\ldots,g_n)$ in which the set $\mathcal G_{k,s}$ is evaluated, and the corresponding vectors $\bs\alpha=(\alpha_1,\ldots,\alpha_k) \in \Fm^k,\bs\beta=(\beta_1,\ldots,\beta_{n-k})\in \Fm^{n-k}$ and matrix $B\in \F_q^{k\times(n-k)}$ that define the $(q,s)$-Cauchy matrix.  Observe that, by Lemma \ref{lem:R1} and Proposition \ref{thm:numGab}, we can always suppose $\beta_1=g_1=1$. In the rest of this section we will always use this assumption.

 As a preliminary result, we prove that knowing the entries of a $(q,s)$-Cauchy matrix is equivalent to knowing its defining parameters $\bs \alpha, \bs \beta$ and $B$.
If one knows the latter, then it is trivial that the entries of the $(q,s)$-Cauchy matrix can be easily computed. For the other way around we have the following result.

\begin{proposition}\label{prop:equivGRC}
Let $X\in \Fm^{t\times r}$ be a $(q,s)$-Cauchy matrix. Then it is possible to recover the parameters $\bs\alpha\in \Fm^t,\bs\beta \in \Fm^{r}$ and $B\in \Fq^{t\times r}$ from the entries of $X$.
\end{proposition}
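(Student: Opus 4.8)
The plan is to reconstruct $\bs\alpha$, $\bs\beta$ and $B$ from $X$ by first passing through the map $\Phi_s$, which destroys all the $\F_q$-ambiguity. Concretely, I would first compute $A:=\Phi_s(X)=\theta^s(X)-X$. Since $X$ is a $(q,s)$-Cauchy matrix, by Definition \ref{def:rankcauchy} we have $\Phi_s(X)=\Phi_s\!\big(C_{(q,s)}(\bs\alpha,\bs\beta,B)\big)$ and, because $\Phi_s(B)=0$ for $B\in\Fq^{t\times r}$ (Corollary \ref{cor:FqGab}) together with $\varphi_s(\pi_s(\alpha_i\beta_j))=\alpha_i\beta_j$ (the property of $\pi_s$ recalled after Definition \ref{def:rankcauchy}), we get $A=\bs\alpha^\top\bs\beta$, a rank-one matrix all of whose entries are non-zero and lie in $\ker(\Tr)$. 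So the first genuine task is: given the rank-one matrix $A$, recover the (normalized) factorization $A=\bs\alpha^\top\bs\beta$.

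For this, I would use the normalization $\beta_1=1$ which is available by Lemma \ref{lem:R1} and the remark following it. Reading off the first column of $A$ gives $(\alpha_1\beta_1,\ldots,\alpha_t\beta_1)^\top=(\alpha_1,\ldots,\alpha_t)^\top=\bs\alpha^\top$ directly. Then, since all entries of $\bs\alpha$ are non-zero, the first row of $A$ is $(\alpha_1\beta_1,\ldots,\alpha_1\beta_r)$, so dividing by $\alpha_1$ recovers $\bs\beta=(\beta_1,\ldots,\beta_r)$. (One should check consistency, i.e. that $a_{i,j}=\alpha_i\beta_j$ for all $i,j$, which holds because $\rk(A)=1$.) Conditions (A), (B), (C) of Definition \ref{def:rankcauchy} are then automatically satisfied: they are exactly what makes $X$ a $(q,s)$-Cauchy matrix, so this recovered pair $(\bs\alpha,\bs\beta)$ is the one we want. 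Finally, having $\bs\alpha$ and $\bs\beta$, one forms the matrix $\big(\pi_s(\alpha_i\beta_j)\big)_{i,j}$ entry-by-entry using the explicit formula for $\pi_s$, and then sets $B:=X-\big(\pi_s(\alpha_i\beta_j)\big)_{i,j}$. It remains to verify that this $B$ indeed has entries in $\F_q$: by Lemma \ref{lem:preimage}(2), every element of $\varphi_s^{-1}(\{\alpha_i\beta_j\})$ is of the form $\pi_s(\alpha_i\beta_j)+\lambda$ with $\lambda\in\Fq$, and since both $x_{i,j}$ and $\pi_s(\alpha_i\beta_j)$ lie in $\varphi_s^{-1}(\{a_{i,j}\})$ (the former because $\Phi_s(X)=A$, the latter by the defining property of $\pi_s$), their difference is in $\Fq$.

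The one subtlety — and the step I expect to be the only real obstacle — is the choice of $\gamma$ hidden inside $\pi_s$: the map $\pi_s$ depends on a fixed $\gamma\in\Fm$ with $\Tr(\gamma)\neq0$, and the recovered matrix $B$ will depend on that choice. The honest statement is that, once a $\gamma$ is fixed (hence $\pi_s$ is fixed), the triple $(\bs\alpha,\bs\beta,B)$ is uniquely determined by $X$ under the normalization $\beta_1=1$; changing $\gamma$ only shifts $\pi_s$ by something in $\Fq$ entrywise and thus only changes $B$ by a matrix over $\Fq$, leaving $\bs\alpha,\bs\beta$ untouched. I would phrase the proof with $\gamma$ (equivalently $\pi_s$) regarded as fixed throughout, so that everything above is a genuine reconstruction algorithm; the uniqueness of $\bs\alpha,\bs\beta$ modulo the scalar ambiguity killed by $\beta_1=1$ follows from the $q^m-1$-to-one description in Lemma \ref{lem:R1}. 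This gives the claimed recovery, and in fact exhibits it as an explicit procedure, which is what is needed for the computational discussion that follows.
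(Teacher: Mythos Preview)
Your proof is correct and follows essentially the same route as the paper's: compute $\Phi_s(X)$ entrywise to obtain $\bs\alpha^\top\bs\beta$, use the normalization $\beta_1=1$ to read off $\bs\alpha$ from the first column and $\bs\beta$ by dividing, then recover $B$ by subtracting $(\pi_s(\alpha_i\beta_j))_{i,j}$ from $X$. Your version is in fact more thorough than the paper's, since you explicitly verify that the resulting $B$ lies in $\Fq^{t\times r}$ via Lemma~\ref{lem:preimage}(2) and address the dependence on the auxiliary element $\gamma$ hidden in $\pi_s$, both of which the paper leaves implicit.
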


\begin{proof}
 It follows from the definition of $\pi_s$ and from Lemma \ref{lem:preimage} that $\varphi_s(x_{i,1})=\alpha_i$ (since $\beta_1=1$), and $\varphi_s(x_{i,j})=\alpha_i\beta_j$ for $j=2,\dots,r$. From that, we can recover $\bs\alpha$ and $\bs\beta$.  Finally, the matrix $B$ can be easily obtained, since 
$$B=X-\begin{pmatrix} \pi_s(\alpha_1\beta_1) & \pi_s(\alpha_1\beta_2) & \cdots & \pi_s(\alpha_1\beta_{r})\\
\pi_s(\alpha_2\beta_1) &  \pi_s(\alpha_2\beta_2) &\cdots & \pi_s(\alpha_2\beta_{r})\\
\vdots & \vdots &  & \vdots \\
\pi_s(\alpha_t\beta_1) &  \pi_s(\alpha_1\beta_2) & \cdots & \pi_s(\alpha_t\beta_{r})\\
\end{pmatrix}.$$
\end{proof}

Suppose we have a generalized Gabidulin code $\C=\mathcal G_{k,s}(g_1,\ldots, g_n)$. Then we can efficiently obtain the corresponding $(q,s)$-Cauchy matrix by computing the reduced row echelon form of the $s$-Moore matrix $M_{k,s}(g_1,\ldots, g_n)$. The cost of this reduction is $\mathcal O(F(k,n))$ field operations over the finite field $\Fm$, where $F(k,n)$ is the cost of computing the reduced row echelon form of a $k \times n$ matrix. If we want a more explicit way to do it (but less efficient), then we can compute the basis  $\{f_1(x),\ldots, f_k(x)\}$ of $\mathcal G_{k,s}$ as described in Remark \ref{rem:systpoly}, and  evaluate it in the vector $g=(g_1,\ldots, g_n)$. In order to recover the parameters  $\bs\alpha\in \Fm^k,\bs\beta \in \Fm^{n-k}$ and $B\in \Fq^{k\times (n-k)}$, one can use Proposition \ref{prop:equivGRC}.

On the other hand, we have that the two sets of parameters that we want to put in relation, are connected by Corollary \ref{cor:inverseMoore} as follows:

$$M_{k,s}(g_1,\ldots,g_k)C_{(q,s)}(\bs\alpha,\bs\beta,B)=M_{k,s}(g_{k+1},\ldots,g_n).$$

From this matrix equation we can deduce how to get the vector $g$ from $\bs\alpha, \bs\beta$ and $B$. Let $\sigma:=\theta^s$, where $\theta$ is the $q$-Frobenius automorphism of $\Fm$. Since $\beta_1=1$, from the first column of the matrix product  we get 
\begin{equation}\label{eq:first}
\sum_{j=1}^kg_j(\pi_s(\alpha_j)+b_{j,1})=g_{k+1}
\end{equation}
and, in general, for $\ell=0,\ldots,k-1$,
\begin{equation}\label{eq:general}
\sum_{j=1}^k\sigma^\ell(g_j)(\pi_s(\alpha_j)+b_{j,1})=\sigma^\ell(g_{k+1}),
\end{equation}
where we have set $\sigma=\theta^s$.
If we apply $\sigma$ to equation (\ref{eq:general}) for $\ell-1$ and we subtract (\ref{eq:general}) to it, we get the set of equations
\begin{align}\label{eq:supgen}
0&=\sum_{j=1}^k\sigma^\ell(g_j)(\sigma(\pi_s(\alpha_j)+b_{j,1})-(\pi_s(\alpha_j)+b_{j,1}))\nonumber \\
 &=\sum_{j=1}^k\sigma^\ell(g_j)\alpha_j,
\end{align}
for every $\ell=1,\ldots, k-1$, where the last identity follows from part (2) of Lemma \ref{lem:preimage}.

We can repeat this process with any other column of the matrix product, and we get, for $i=2,\ldots,n-k$,
\begin{equation}\label{eq:firstgen}
\sum_{j=1}^kg_j(\pi_s(\alpha_j\beta_i)+b_{j,i})=g_{k+i}
\end{equation}
and
$$
0=\sum_{j=1}^k\sigma^\ell(g_j)\alpha_j\beta_i.
$$
However, this set of equations is the same as (\ref{eq:supgen}), therefore we do not consider it. Now, we can show that equations (\ref{eq:first}), (\ref{eq:supgen}) and (\ref{eq:firstgen}) are exactly what we need for our purpose.

By Proposition \ref{prop:equivGRC}, we can recover the vectors $\bs\alpha$ and $\bs\beta$ and the matrix $B$ from $X$. Moreover, applying $\sigma^{-\ell}$ to every equation in (\ref{eq:supgen}), we get a  linear system 
\begin{equation}\label{eq:linsyst}\left(\begin{array}{cccc}
\sigma^{-1}(\alpha_2) & \sigma^{-1}(\alpha_3) & \cdots & \sigma^{-1}(\alpha_k)  \\
\sigma^{-2}(\alpha_2) & \sigma^{-2}(\alpha_3) & \cdots & \sigma^{-2}(\alpha_k)\\
\vdots & \vdots & & \vdots   \\
\sigma^{-k+1}(\alpha_2) & \sigma^{-k+1}(\alpha_3) & \cdots & \sigma^{-k+1}(\alpha_k) \\
\end{array}\right) \left( \begin{array}{c} g_2 \\ \vdots \\ g_k \end{array} \right)=\left( \begin{array}{c}\sigma^{-1}(\alpha_1) \\ \sigma^{-2}(\alpha_1) \\ \vdots \\ \sigma^{-k+1}(\alpha_1)\end{array} \right)\end{equation}
with $g_2,\ldots,g_k$ unknowns.
The matrix defining the linear system (\ref{eq:linsyst}) is  a $(k-1)\times(k-1)$ matrix with coefficients in $\Fm$. In particular, this matrix is equal to the Moore matrix $M_{k-1,-s}(\sigma^{-1}(\alpha_2),\ldots,\sigma^{-1}(\alpha_k))$, and since $\alpha_2,\ldots,\alpha_k$ are $\F_q$-linearly independent it has full rank. The unique solution of this linear system allows to compute $g_2,\ldots, g_k$, and for computing $g_{k+1},\ldots, g_n$ one can use (\ref{eq:first}) and (\ref{eq:firstgen}).

\section{Gabidulin codes in Hankel and Toeplitz form}

In this section we use the characterization of the generator matrix in standard form for a generalized Gabidulin code given in Section \ref{sec:stform} for the construction of particular subclasses of these codes. Indeed, we will prove  that there exist generalized Gabidulin codes $\C_X$ such that $X$ is a Hankel matrix or a Toeplitz matrix.

For our purpose, we first need a technical result.

\begin{lemma}\label{lem:primtrace}
 Let $\gamma\in \Fm$ be a primitive element, i.e. such that $\Fm^*=\langle \gamma \rangle$. Then there exists $\ell \in \mathbb N$ such that 
$$\Tr(\gamma^\ell)=\Tr(\gamma^{\ell+1})=\ldots=\Tr(\gamma^{\ell+m-2})=0.$$
\end{lemma}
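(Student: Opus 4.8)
The plan is to count, over all nonzero $c \in \F_{q^m}$, how often $\Tr(c) = 0$ along multiplicative "windows'' of length $m-1$. Concretely, write $\gamma$ for a fixed primitive element, so that $\F_{q^m}^* = \{\gamma^0, \gamma^1, \dots, \gamma^{q^m-2}\}$, and for each $i \in \{0, 1, \dots, q^m-2\}$ consider the indicator $\chi(i) := 1$ if $\Tr(\gamma^i) = 0$ and $\chi(i) := 0$ otherwise, extended periodically modulo $q^m - 1$. What I want is an index $\ell$ with $\chi(\ell) = \chi(\ell+1) = \cdots = \chi(\ell + m - 2) = 1$, i.e.\ a run of $m-1$ consecutive $1$'s in the cyclic sequence $(\chi(i))_{i \bmod q^m-1}$.

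First I would record the global count: by part (5) of Lemma~\ref{lem:trace}, $|\ker(\Tr)| = q^{m-1}$, and since $0 \notin \ker(\Tr)$ would be false — in fact $\Tr(0) = 0$ so $0 \in \ker(\Tr)$ — the number of \emph{nonzero} elements of $\ker(\Tr)$ is $q^{m-1} - 1$. Hence $\sum_{i=0}^{q^m-2} \chi(i) = q^{m-1} - 1$. Now I would count the windows: there are exactly $q^m - 1$ cyclic windows of length $m-1$, and summing the number of $1$'s over all of them counts each index $i$ exactly $m-1$ times, so
\begin{equation}
\sum_{\text{windows } W} \Big( \sum_{i \in W} \chi(i) \Big) = (m-1)(q^{m-1} - 1).
\end{equation}
The average number of $1$'s per window is therefore $\frac{(m-1)(q^{m-1}-1)}{q^m-1}$. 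The key inequality I would establish is that this average is strictly greater than $m - 2$, which forces some window to contain at least $m-1$ ones, i.e.\ to be \emph{entirely} ones; that window's starting index is the desired $\ell$.

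So the crux reduces to the numerical estimate $(m-1)(q^{m-1}-1) > (m-2)(q^m - 1)$. Expanding, this is $(m-1)q^{m-1} - (m-1) > (m-2)q^m - (m-2)$, i.e.\ $(m-1)q^{m-1} - (m-2)q^m > 1$. Dividing by $q^{m-1}$, the left side is $q^{m-1}\big( (m-1) - (m-2)q \big)$, which is negative once $(m-2)q \ge m-1$, i.e.\ for essentially all $q \ge 2$ and $m \ge 3$ — so this naive averaging is \emph{not} enough, and that is exactly where the real work lies. The fix I would pursue is to exploit extra structure of $\ker(\Tr)$: it is an $\F_q$-subspace, so it is closed under scaling by $\F_q^*$, which means the set $S := \{ i \bmod q^m-1 : \gamma^i \in \ker(\Tr)\}$ is a union of cosets of the subgroup $H := \{ i : \gamma^i \in \F_q^*\} \subseteq \Z/(q^m-1)$ of order $q-1$. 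Equivalently, passing to the quotient $\Z/(q^m-1) \to \Z/\frac{q^m-1}{q-1}$, the image $\bar S$ has size $\frac{q^{m-1}-1}{q-1}$ out of $\frac{q^m-1}{q-1} = q^{m-1} + q^{m-2} + \cdots + 1$ elements, i.e.\ $\bar S$ \emph{misses} only $q^{m-1} + \cdots + q = q\cdot\frac{q^{m-1}-1}{q-1}$ points — wait, that recovers the complement count $q^m-1-(q^{m-1}-1)$ divided by $q-1$. The point is that I want a window of length $m-1$ in the \emph{original} cyclic group avoiding the complement $\Z/(q^m-1) \setminus S$, which has size $q^m - q^{m-1}$; naive pigeonhole on gaps says the largest gap between consecutive complement-elements has length at least $\frac{q^m-1}{q^m - q^{m-1}} = \frac{q^m-1}{q^{m-1}(q-1)}$, which is far too small. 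The honest approach is therefore the one the paper almost certainly takes: use that $\ker(\Tr) = \operatorname{Im}(\varphi_s)$ and, more to the point, realize the $m-1$ consecutive powers $\gamma^{\ell}, \dots, \gamma^{\ell+m-2}$ as a \emph{basis} of $\ker(\Tr)$ — specifically, if $\gamma^\ell, \gamma^{\ell+1}, \dots, \gamma^{\ell + m-2}$ all had trace zero, they would span $\ker(\Tr)$ provided they are $\F_q$-linearly independent, and conversely one shows a suitable $\ell$ exists by a dimension/basis argument. The cleanest line: the $m$ elements $1, \gamma, \gamma^2, \dots, \gamma^{m-1}$ form an $\F_q$-basis of $\F_{q^m}$ (since $\gamma$ is primitive, its minimal polynomial has degree $m$), so $\gamma, \gamma^2, \dots, \gamma^{m-1}$ together with $1$ span everything; applying $\Tr$, the kernel is cut out by one linear condition. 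I would argue that there is an index $\ell$ with $0 \le \ell \le q^m - 1$ such that $\Tr(\gamma^\ell) = \cdots = \Tr(\gamma^{\ell+m-2}) = 0$ by taking any nonzero $w \in \ker(\Tr)$, writing $w = \gamma^\ell$, and then using that the map $x \mapsto \gamma x$ permutes $\F_{q^m}^*$ while the trace-zero hyperplane, though not $\gamma$-invariant, must meet $m-1$ consecutive powers because of the pigeonhole count on windows — which brings us back to needing a sharper count.

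Given the above, the honest plan I would actually write down is: (i) reduce, via Lemma~\ref{lem:trace}(5), to finding $m-1$ consecutive powers of $\gamma$ lying in the codimension-one $\F_q$-subspace $\ker(\Tr)$; (ii) observe that if $\gamma^\ell, \dots, \gamma^{\ell+m-2}$ are $\F_q$-linearly independent and all have zero trace, they form a basis of $\ker(\Tr)$, and conversely such a run, if it exists, is forced to be independent; (iii) prove existence by considering the polynomial identity coming from the minimal polynomial of $\gamma$: since $1, \gamma, \dots, \gamma^{m-1}$ is a basis, there is a unique dependence expressing $\gamma^m$ in terms of lower powers, and iterating shows every $\gamma^j$ lies in $\operatorname{span}_{\F_q}\{\gamma^\ell, \dots, \gamma^{\ell+m-1}\}$ for any $\ell$; (iv) conclude by a counting argument over the $q^m - 1$ cyclic windows of length $m-1$, using that $\ker(\Tr)$ is a union of $(q-1)$-cosets, to force at least one window entirely inside $S$. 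The \textbf{main obstacle} is step (iv): the naive averaging bound fails, so one must use the multiplicative-subgroup (coset) structure of $\ker(\Tr)^* = S$ inside $\Z/(q^m-1)$, together with the fact that $\gamma$ being primitive makes the powers $\gamma^0, \dots, \gamma^{m-1}$ a basis, to get a run of the required length — I expect the paper handles this by exhibiting $\ell$ explicitly from the dual basis of $1, \gamma, \dots, \gamma^{m-1}$, namely taking $\gamma^\ell$ to be the basis element dual to $1$, whose successive multiples by $\gamma$ then land in $\ker(\Tr)$ by construction.
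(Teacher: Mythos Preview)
Your proposal does not actually prove the lemma. The bulk of it---the averaging over cyclic windows and the pigeonhole on gaps---you yourself correctly diagnose as insufficient: the inequality $(m-1)(q^{m-1}-1) > (m-2)(q^m-1)$ fails for all $q\ge 2$, $m\ge 3$, and the coset structure of $\ker(\Tr)^*$ under $\F_q^*$ does not rescue a pure counting argument. Your ``honest plan'' (i)--(iv) still leaves step (iv) as an unresolved \emph{main obstacle}, so no proof is completed.

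Your final parenthetical guess, however, is exactly the paper's argument and deserves to be promoted from afterthought to the whole proof. Since $\gamma$ is primitive, $1,\gamma,\dots,\gamma^{m-1}$ is an $\F_q$-basis of $\F_{q^m}$. Define $L\in\Hom_{\F_q}(\F_{q^m},\F_q)$ on this basis by $L(\gamma^i)=0$ for $0\le i\le m-2$ and $L(\gamma^{m-1})=1$. By the nondegeneracy of the trace form (Theorem~\ref{thm:dualisom}), $L=T_\beta$ for a unique $\beta\in\F_{q^m}^*$; primitivity gives $\beta=\gamma^\ell$ for some $\ell$. Then for $i=0,\dots,m-2$,
\[
\Tr(\gamma^{\ell+i})=T_{\gamma^\ell}(\gamma^i)=L(\gamma^i)=0,
\]
which is precisely the claim. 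No counting is needed at all. (Your variant ``take $\gamma^\ell$ dual to $1$'' also works: it yields $\Tr(\gamma^{\ell+1})=\dots=\Tr(\gamma^{\ell+m-1})=0$, the same conclusion after a shift of index.) The lesson is that once you have the dual-basis idea, steps (ii)--(iv) of your plan are redundant; the existence of $\ell$ is immediate from surjectivity of $\beta\mapsto T_\beta$ together with primitivity.
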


\begin{proof}
 Since $\gamma$ is a primitive element, then $\Fm=\F_q(\gamma)$ and $1,\gamma,\gamma^2,\ldots,\gamma^{m-1}$ is an $\F_q$-basis of $\Fm$. Consider the $\Fq$-linear map  $L\in \Hom_{\Fq}(\Fm,\Fq)$ defined as

$$L(\gamma^i)=\begin{cases}
0 & \mbox{ for } 0\leq i \leq m-2 \\
1 & \mbox{ for } i=m-1.
\end{cases}$$
$L$ is a non zero element in $\Hom_{\F_q}(\Fm, \F_q)$, and by Theorem \ref{thm:dualisom},  there exists $\beta \in \Fm^*$ such that $L=T_{\beta}$. At this point, since $\gamma$ is a primitive element,  there exists $\ell \in \mathbb N$ such that $\beta=\gamma^{\ell}$. In this way, we have that for all $i=0,\ldots,m-2$,
$$\Tr(\gamma^{\ell+i})=T_{\gamma^{\ell}}(\gamma^i)=L(\gamma^i)=0$$
and this concludes the proof.
\end{proof}

\begin{definition}
An $r\times s$ matrix $A=(A_{i,j})$ over a field $\F$ is called \emph{Toeplitz matrix} if there exist a vector $a=(a_{1-r},a_{2-r},\ldots, a_{s-1})\in \F^{r+s-1}$ such that
$$A_{i,j}=a_{j-i}.$$

An $r\times s$ matrix $A=(A_{i,j})$ over a field $\F$ is called \emph{Hankel matrix} if there exist a vector $a=(a_{0},a_{1},\ldots, a_{r+s-2})\in \F^{r+s-1}$ such that
$$A_{i,j}=a_{i+j-2}.$$
\end{definition}

A special kind of square Toeplitz matrices is given by  circulant matrices.
\begin{definition}
 An $r \times r$ matrix $A=(A_{i,j})$ over a field $\F$ is called \emph{circulant matrix} if there exists a vector $a=(a_0,\ldots, a_{r-1})$ such that 
$$A_{i,j}=a_{j-i (\!\!\!\!\mod r)}.$$
\end{definition}

\begin{theorem}\label{thm:Hank}
For every $0<k<n \leq m$ and every $s$ coprime to $m$, there exists a  code $\C_X \in \Gab_q(k,n,m,s)$ such that $X$ is a Hankel matrix.
\end{theorem}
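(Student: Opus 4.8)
The plan is to use the characterization in Theorem~\ref{thm:stform}: it suffices to exhibit vectors $\bs\alpha\in\Fm^k$ and $\bs\beta\in\Fm^{n-k}$ satisfying conditions (a), (b), (c), and then to choose the free matrix $B\in\Fq^{k\times(n-k)}$ so that the resulting $(q,s)$-Cauchy matrix $C_{(q,s)}(\bs\alpha,\bs\beta,B)$ is Hankel. The guiding idea is that the entries $\pi_s(\alpha_i\beta_j)$ should depend only on the product $\alpha_i\beta_j$, so if we can arrange $\alpha_i\beta_j$ to depend only on $i+j$, then the matrix $(\pi_s(\alpha_i\beta_j))_{i,j}$ is already Hankel, and we may take $B=0$. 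The natural choice is a geometric progression: pick a primitive element $\gamma\in\Fm$ and set $\alpha_i=\gamma^{a+i}$, $\beta_j=\gamma^{b+j}$ for suitable shifts $a,b$, so that $\alpha_i\beta_j=\gamma^{a+b+i+j}$ depends only on $i+j$.

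**The key steps, in order.** First, invoke Lemma~\ref{lem:primtrace} to find an exponent $\ell$ such that $\Tr(\gamma^\ell)=\Tr(\gamma^{\ell+1})=\cdots=\Tr(\gamma^{\ell+m-2})=0$; this gives a window of $m-1$ consecutive powers of $\gamma$ all lying in $\ker(\Tr)$. Second, I would index so that for $1\le i\le k$ and $1\le j\le n-k$ the products $\alpha_i\beta_j$ all fall inside this window: since $\alpha_i\beta_j=\gamma^{a+b+i+j}$ and $i+j$ ranges over $\{2,3,\dots,n\}$, a span of $n-1\le m-1$ consecutive integers, we can choose $a+b$ so that $a+b+i+j$ runs through $\ell+1,\dots,\ell+n-1$ (or any length-$(n-1)$ sub-window of $\{\ell,\dots,\ell+m-2\}$), guaranteeing condition (c), namely $\alpha_i\beta_j\in\ker(\Tr)$ for all $i,j$, which is exactly $\rsu_q(\bs\beta)\subseteq\rsu_q(\bs\alpha)^\times$ by Lemma~\ref{lem:R1}. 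Third, I would verify conditions (a) and (b): $\rk_q(\bs\alpha)=k$ and $\rk_q(\bs\beta)=n-k$. Here $\bs\alpha=(\gamma^{a+1},\dots,\gamma^{a+k})=\gamma^{a+1}(1,\gamma,\dots,\gamma^{k-1})$, and since $\gamma$ is primitive we have $\Fm=\F_q(\gamma)$, so $1,\gamma,\dots,\gamma^{k-1}$ are $\F_q$-linearly independent (as $k\le n\le m$); multiplying by the nonzero scalar $\gamma^{a+1}$ preserves $\F_q$-linear independence, giving $\rk_q(\bs\alpha)=k$, and likewise $\rk_q(\bs\beta)=n-k$. Fourth, with (a), (b), (c) in hand, Theorem~\ref{thm:stform} says $\C_X\in\Gab_q(k,n,m,s)$ for any $X\in\Phi_s^{-1}(\{\bs\alpha^\top\bs\beta\})$; taking $X=(\pi_s(\alpha_i\beta_j))_{i,j}=C_{(q,s)}(\bs\alpha,\bs\beta,0)$, observe that its $(i,j)$ entry is $\pi_s(\gamma^{a+b+i+j})$, a function of $i+j$ alone, hence $X$ is a Hankel matrix by definition.

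**The main obstacle** is the bookkeeping that makes the window of trace-zero powers from Lemma~\ref{lem:primtrace} long enough to cover all the products $\alpha_i\beta_j$. The window has length exactly $m-1$, and the sums $i+j$ range over an interval of length $n-1$; the constraint $n\le m$ in the hypothesis is precisely what makes $n-1\le m-1$, so the fit is tight but always possible — one must simply be careful to place the interval $\{a+b+2,\dots,a+b+n\}$ inside $\{\ell,\ell+1,\dots,\ell+m-2\}$, e.g.\ by taking $a+b=\ell-1$ so that the products range over $\gamma^{\ell+1},\dots,\gamma^{\ell+n-1}$. Beyond this placement, everything is routine: linear independence of powers of a primitive element, and the trivial observation that a matrix whose entries depend only on the sum of the indices is Hankel. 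The Toeplitz and circulant cases (anticipated later in the section) follow by the analogous device with $\alpha_i\beta_j$ depending on $j-i$, which one obtains by reversing the progression in one of the vectors.
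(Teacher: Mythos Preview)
Your proposal is correct and follows essentially the same route as the paper: choose a primitive $\gamma$, invoke Lemma~\ref{lem:primtrace} to obtain $m-1$ consecutive powers of $\gamma$ in $\ker(\Tr)$, set $\bs\alpha$ and $\bs\beta$ to be geometric progressions in $\gamma$ so that the products $\alpha_i\beta_j$ depend only on $i+j$ and all fall inside the trace-zero window, verify (a)--(c) of Theorem~\ref{thm:stform}, and observe that $(\pi_s(\alpha_i\beta_j))_{i,j}$ is Hankel. The paper makes the concrete choice $\bs\alpha=(\gamma^\ell,\dots,\gamma^{\ell+k-1})$, $\bs\beta=(1,\gamma,\dots,\gamma^{n-k-1})$, so that the products are exactly $\gamma^\ell,\dots,\gamma^{\ell+n-2}$.

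One small bookkeeping slip to fix: with your suggested shift $a+b=\ell-1$, the products land at exponents $\ell+1,\dots,\ell+n-1$, and when $n=m$ the top exponent $\ell+m-1$ falls just outside the window $\{\ell,\dots,\ell+m-2\}$. Taking $a+b=\ell-2$ (equivalently, the paper's choice) puts the products at $\gamma^\ell,\dots,\gamma^{\ell+n-2}$, which always fits since $n\le m$.
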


\begin{proof}
Let $\gamma \in \Fm$ be a primitive element. By Lemma \ref{lem:primtrace} there exist $\ell \in \mathbb N$ such that 
\begin{equation}\label{eq:Tr=0}\Tr(\gamma^\ell)=\Tr(\gamma^{\ell+1})=\ldots=\Tr(\gamma^{\ell+m-2})=0.\end{equation}
Let $\bs\alpha \in \Fm^k$, $\bs\beta \in \Fm^{n-k}$  be defined as
\begin{align*}
\bs\alpha&=(\gamma^\ell, \gamma^{\ell+1},\ldots,\gamma^{\ell+k-1}),\\
\bs\beta&=(1,\gamma,\ldots,\gamma^{n-k-1}),
\end{align*}
and consider the matrix $\bs\alpha^\top\bs\beta$. We now check that $\bs\alpha ,\bs\beta$ satisfy properties (a), (b), (c) of Theorem \ref{thm:stform}. Indeed, $\gamma$ is a primitive element, and therefore $1,\gamma,\ldots,\gamma^{m-1}$ are linearly independent, as well as $\gamma^\ell,\ldots,\gamma^{\ell+m-1}$. In particular, properties (a) and (b) are satisfied. Moreover,  for every $i=0,\ldots,k-1$, $j=0,\ldots,n-k-1$, 
$$T_{\gamma^{\ell+i}}(\gamma^j)=\Tr(\gamma^{\ell+i+j})=0,$$
where the last inequality holds by (\ref{eq:Tr=0}). Therefore also property (c) is verified.

Now we have that every matrix $X\in \Phi_s^{-1}(\{\bs\alpha^\top\bs\beta\})$ is a $(q,s)$-Cauchy matrix and hence it is of the form
$$ X=\begin{pmatrix} \pi_s(\gamma^{\ell}) & \pi_s(\gamma^{\ell+1}) & \cdots & \pi_s(\gamma^{\ell+n-k-1})\\
\pi_s(\gamma^{\ell+1}) & \pi_s(\gamma^{\ell+2}) & \cdots & \pi_s(\gamma^{\ell+n-k})\\
\vdots & \vdots &  & \vdots \\
\pi_s(\gamma^{\ell+k-1}) & \pi_s(\gamma^{\ell+k}) & \cdots & \pi_s(\gamma^{\ell+n-2})\\
\end{pmatrix}+B,$$
for an arbitrary $B\in\F_q^{k\times(n-k)}$. Choosing $B$ as a Hankel matrix completes the proof.
\end{proof}

\begin{theorem}\label{thm:Toeplitz}
For every $0<k<n \leq m$ and every $s$ coprime to $m$, there exists a generalized Gabidulin code $\C_X$ of parameter $s$ such that $X$ is a Toeplitz matrix.
\end{theorem}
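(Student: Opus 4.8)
The plan is to mirror the proof of Theorem~\ref{thm:Hank}, but choosing $\bs\alpha$ and $\bs\beta$ so that the product $\alpha_i\beta_j$ depends only on the difference $j-i$ rather than on the sum $i+j$. Concretely, I would again invoke Lemma~\ref{lem:primtrace} to get a primitive element $\gamma$ and an integer $\ell$ with $\Tr(\gamma^\ell)=\Tr(\gamma^{\ell+1})=\cdots=\Tr(\gamma^{\ell+m-2})=0$, and then set
\begin{align*}
\bs\alpha&=(\gamma^{\ell+k-1},\gamma^{\ell+k-2},\ldots,\gamma^{\ell}),\\
\bs\beta&=(1,\gamma,\ldots,\gamma^{n-k-1}).
\end{align*}
Note that this is just the vector $\bs\alpha$ from the Hankel proof read in reverse order, so it is still a reordering of $k$ consecutive powers $\gamma^\ell,\ldots,\gamma^{\ell+k-1}$. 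Since $\gamma$ is primitive these powers are $\F_q$-linearly independent, so conditions (a) and (b) of Theorem~\ref{thm:stform} hold, exactly as before. For condition (c), for $i=1,\ldots,k$ and $j=1,\ldots,n-k$ the $(i,j)$-entry of $\bs\alpha^\top\bs\beta$ is $\gamma^{\ell+k-i}\cdot\gamma^{j-1}=\gamma^{\ell+k-i+j-1}$; the exponent $\ell+(k-i+j-1)$ ranges over $\ell,\ell+1,\ldots,\ell+(k-1)+(n-k-1)=\ell+n-2\le\ell+m-2$, so by the choice of $\ell$ every entry lies in $\ker(\Tr)$, i.e.\ $\rsu_q(\bs\beta)\subseteq\rsu_q(\bs\alpha)^{\times}$.

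Having verified (a), (b), (c), Theorem~\ref{thm:stform} guarantees that every $X\in\Phi_s^{-1}(\{\bs\alpha^\top\bs\beta\})$ generates a code $\C_X\in\Gab_q(k,n,m,s)$. Applying $\pi_s$ entrywise, such an $X$ has the form
$$
X=\begin{pmatrix}
\pi_s(\gamma^{\ell+k-1}) & \pi_s(\gamma^{\ell+k}) & \cdots & \pi_s(\gamma^{\ell+n-2})\\
\pi_s(\gamma^{\ell+k-2}) & \pi_s(\gamma^{\ell+k-1}) & \cdots & \pi_s(\gamma^{\ell+n-3})\\
\vdots & \vdots & & \vdots\\
\pi_s(\gamma^{\ell}) & \pi_s(\gamma^{\ell+1}) & \cdots & \pi_s(\gamma^{\ell+n-k-1})
\end{pmatrix}+B,
$$
for an arbitrary $B\in\F_q^{k\times(n-k)}$. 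The matrix displayed above has $(i,j)$-entry $\pi_s(\gamma^{\ell+k-i+j-1})$, which depends only on $j-i$, hence it is a Toeplitz matrix; choosing $B$ to be a Toeplitz matrix (for instance $B=0$) keeps $X$ Toeplitz and finishes the proof.

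I do not expect a serious obstacle here: the only thing that really needs checking is that the reversal of the index order turns the "sum" pattern $i+j$ into a "difference" pattern $j-i$ while still keeping all relevant exponents inside the window $[\ell,\ell+m-2]$ on which the trace vanishes, and the count $\ell+n-2\le\ell+m-2$ (using $n\le m$) is exactly what makes this work, just as in Theorem~\ref{thm:Hank}. One small point to state carefully is that $\pi_s$ is only guaranteed to be a $\varphi_s$-preimage on $\ker(\Tr)$, so it is important that the argument of every $\pi_s$ above is of the form $\gamma^{\ell+t}$ with $0\le t\le m-2$, which is precisely what the choice of $\bs\alpha,\bs\beta$ ensures.
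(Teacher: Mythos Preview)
Your proof is correct and follows essentially the same approach as the paper's: both invoke Lemma~\ref{lem:primtrace} to obtain consecutive powers of a primitive element in $\ker(\Tr)$, pick $\bs\alpha$ and $\bs\beta$ so that $\alpha_i\beta_j$ depends only on the index difference, verify conditions (a)--(c) of Theorem~\ref{thm:stform}, and finish by taking $B$ Toeplitz. The only cosmetic difference is the parametrization: you reverse the order of $\bs\alpha$ from the Hankel proof while keeping $\bs\beta=(1,\gamma,\ldots,\gamma^{n-k-1})$, whereas the paper instead shifts $\bs\alpha$ to $(\gamma^{\ell+n-k-1},\ldots,\gamma^{\ell+n-2})$ and takes $\bs\beta=(1,\gamma^{-1},\ldots,\gamma^{-(n-k-1)})$; both choices make the exponent of $\gamma$ range over $\{\ell,\ldots,\ell+n-2\}$ and yield a valid Toeplitz $(q,s)$-Cauchy matrix.
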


\begin{proof}
 Following the same proof of Theorem \ref{thm:Hank} with 
\begin{align*}
\bs\alpha&=(\gamma^{\ell+n-k-1}, \gamma^{\ell+n-k},\ldots,\gamma^{\ell+n-2}),\\
\bs\beta&=(1,\gamma^{-1},\gamma^{-2},\ldots,\gamma^{-n+k+1}),
\end{align*}
the matrix obtained is of the form
$$ X=\begin{pmatrix} \pi_s(\gamma^{\ell+n-k-1}) & \pi_s(\gamma^{\ell+n-k-2}) & \cdots & \pi_s(\gamma^{\ell})\\
\pi_s(\gamma^{\ell+n-k}) & \pi_s(\gamma^{\ell+n-k-1}) & \cdots & \pi_s(\gamma^{\ell+1})\\
\vdots & \vdots &  & \vdots \\
\pi_s(\gamma^{\ell+n-2}) & \pi_s(\gamma^{\ell+n-3}) & \cdots & \pi_s(\gamma^{\ell+k-1})\\
\end{pmatrix}+B,$$
for an arbitrary $B\in\F_q^{k\times(n-k)}$. As above, choosing $B$ in Toeplitz form concludes the proof.
\end{proof}

These two theorems allow to define two subfamilies of generalized Gabidulin codes, the \emph{Hankel Gabidulin codes} and the \emph{Toeplitz Gabidulin codes}. In the following Lemma we can see that this structure on the generator matrix in standard form is hard to improve if we still require the code to be MRD.

\begin{lemma}\label{lem:circulant}
Suppose that $n$ is even and $k=\frac{n}{2}$. Let $X\in \Fm^{k \times k}$ be a circulant matrix, and let $d$ be the minimum rank distance of the code $\C_X$. Then $d \leq 2$. 

In particular, for $n\geq 4$, there does not exist any $\frac{n}{2}$-dimensional MRD code $\C_X$ with $X$ circulant matrix.
\end{lemma}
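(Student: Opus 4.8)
The plan is to bypass any delicate analysis by producing a single explicit nonzero codeword of $\C_X$ whose $q$-rank is at most $2$; since the minimum rank distance of a linear code equals the minimum $q$-rank over its nonzero codewords, this immediately gives $d\le 2$, and the ``in particular'' statement then drops out of a one-line parameter count. The codeword I would use is the one generated by the all-ones vector, whose low rank is forced by the cyclic symmetry of $X$.

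Concretely, I would first write $X$ through its defining vector $a=(a_0,\dots,a_{k-1})$, so that $X_{i,j}=a_{(j-i)\bmod k}$, and set $\gamma:=\sum_{\ell=0}^{k-1}a_\ell\in\Fm$. The key elementary observation is that each column of $X$ consists of all the entries of $a$ (each exactly once), whence $\sum_{i=1}^{k}X_{i,j}=\gamma$ for every $j$. Therefore the vector $v=(1,1,\dots,1)\in\Fm^k$ produces the codeword
\[
 w:=v\,(I_k\mid X)=(1,\dots,1,\gamma,\dots,\gamma)\in\C_X ,
\]
with $k$ entries equal to $1$ followed by $k$ entries equal to $\gamma$. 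Its $q$-rank is $\rk_q(w)=\dim_{\F_q}\langle 1,\gamma\rangle_{\F_q}\le 2$, and $w\neq 0$ since its first $k$ coordinates equal $1$; hence $d=d_r(\C_X)\le 2$.

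For the second assertion I would argue by contradiction: if $n\ge 4$ and $k=n/2$, then an MRD code of length $n$ and dimension $k$ must have minimum rank distance $n-k+1=\tfrac{n}{2}+1\ge 3$ by Theorem~\ref{thm:singrank} and the definition of an MRD code, contradicting the bound $d\le 2$ just established. Thus no circulant matrix $X\in\Fm^{k\times k}$ can give an MRD code $\C_X$. I do not anticipate a genuine obstacle here: the only point that needs checking is the column-sum property of circulant matrices, which is immediate from the definition; the substance of the lemma is precisely that the cyclic invariance of $X$ forces the existence of a ``constant'' codeword of rank at most $2$.
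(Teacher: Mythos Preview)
Your proof is correct and is essentially identical to the paper's own argument: both exhibit the codeword $(1,\dots,1)(I_k\mid X)=(1,\dots,1,\gamma,\dots,\gamma)$ coming from the constant column sums of a circulant matrix, and then compare $2$ with $n-k+1=\tfrac{n}{2}+1$ for $n\ge 4$.
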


\begin{proof}
Since the matrix $X$ is circulant, then the sum of the elements on each of its columns is constant. Let $\gamma$ be such a sum. Then, the non-zero codeword of the code $\C_X$
$$(1,\ldots, 1)  \left(\begin{array}{c|c}
      I_k \; & X
    \end{array}
  \right)=(1,\ldots, 1, \gamma, \ldots, \gamma)$$
has rank weight at most $2$. In particular, if $n\geq 4$ we have $$n-k+1=\frac{n}{2}+1>2 \geq d$$ and therefore, the code $\C_X$ can not be MRD.
\end{proof}

This result possibly suggests that, at least in the case $k=\frac{n}{2}$, it is very difficult to require more structure on the non-systematic part of the generator matrix in standard form of an MRD code. However, it would be very interesting to find new families of Gabidulin, or more generally MRD codes with structured generator matrices.

We conclude this section with a small  example.

\begin{example}
Consider the case $q=2$, $k=3$, $n=m=6$ and $s=1$. We construct a Hankel Gabidulin code of dimension $3$ and length $6$ over the finite field $\F_{2^6}=\F_2(a)$, where $a$ is a primitive element that satisfies $a^6+a^4+a^3+a+1=0$. One can find, by Lemma \ref{lem:primtrace}, five consecutive powers of $a$ that belong to $\ker( \mathrm{Tr}_{\F_{2^6}/\F_2})$, that are $a^i$ for $i=14,15,\ldots, 18$. Then, we set the vectors
$$\bs\alpha=(a^{14},a^{15},a^{16}), \quad \bs\beta=(1,a,a^2).$$
Moreover, we choose the matrix $B$ to be the zero matrix, and the map

$$ \begin{array}{ >{\displaystyle}r >{{}}c<{{}}  >{\displaystyle}l } 
\pi_1 : \F_{2^6} & \longrightarrow & \F_{2^6} \\
z & \longmapsto & %\frac{1}{\Tr(\gamma)}\left(\alpha\sigma(\gamma)+(\alpha+\sigma(\alpha))\sigma^2(\gamma)+ \ldots+(\alpha+\ldots+\sigma^{n-2}(\alpha))\sigma^{n-1}(\gamma) \right)
 \sum_{i=0}^{4}\left(\gamma^{2^{i+1}}\sum_{j=0}^iz^{2^j}\right),\end{array}$$
 with $\gamma=a^3$. We then get the following $q$-Cauchy matrix
$$X:=C_{(q,1)}(\bs\alpha,\bs\beta, 0)=\begin{pmatrix}\pi_1(a^{14}) &  \pi_1(a^{15})  & \pi_1(a^{16})  \\
\pi_1(a^{15})  & \pi_1(a^{16})  & \pi_1(a^{17})  \\
\pi_1(a^{16})  & \pi_1(a^{17})  & \pi_1(a^{18}) \end{pmatrix}=\begin{pmatrix} a^{57} & a^{7} & a^{13} \\
a^{7} & a^{13} & a^{37} \\
a^{13} & a^{37} & a^{36} \end{pmatrix}.$$
By Theorem \ref{thm:Hank} the code $\C_X$ is a Gabidulin code of parameter $s=1$. Moreover we can recover the vector of evaluation points $g=(g_1,\ldots, g_6)$ of the code. We can suppose $g_1=1$, and recover $g_2=a^{45}$ and $g_3=a^{!5}$ using the linear system (\ref{eq:linsyst}). Finally, using equations (\ref{eq:first}) and (\ref{eq:firstgen}) we get $g_4=a^{46}$, $g_5=a^{14}$ and $g_6=a^{28}$. Therefore, our Hankel Gabidulin code is
$$\C_X=\mathcal G_{3,1}(1,a^{45},a^{15},a^{46},a^{14},a^{28}).$$
\end{example}

\section{Conclusions and open problems}

In this work we find a parametrization of the generator matrix in standard form of generalized Gabidulin codes (see Theorem \ref{thm:stform}). Such a parametrization coincides with the $q$-analogue of generalized Cauchy matrices and, therefore, leads to a natural definition of \emph{$(q,s)$-Cauchy matrices}, a notion that was missing in the literature. In Theorem \ref{thm:stformbis} we underline that these matrices are in one-to-one correspondence with generalized Gabidulin codes.
Moreover, in Theorems \ref{thm:GabCritNew} and \ref{thm:GabCritNewII} we give a new criterion for determining whether a given rank metric code of dimension $k$ and length $n$ is a generalized Gabidulin code. This new result only requires $\mathcal O(m \cdot F(k,n))$ field operations, where $F(k,n)$ denotes the cost of computing the reduced row echelon form of a $k\times n$ matrix, and it improves the existing criterion that relies on an a priori check of the MRD property. 
Finally we use our results in order to build two new subfamilies of generalized Gabidulin codes, namely the \emph{Hankel and Toeplitz Gabidulin codes} (see Theorems \ref{thm:Hank} and \ref{thm:Toeplitz}). These families have the advantage that the non-systematic part of the generator matrix is a structured matrix. This implies that matrix/vector multiplications, and therefore the encoding procedure, can be performed faster than usual. 

 From a theoretical point of view we believe that the same parametrization as the one of Theorems \ref{thm:stform} and \ref{thm:stformbis} applies to Gabidulin codes over any finite cyclic field extension $\mathbb{E}/\mathbb{K}$, which were introduced by Augot, Loidreau and Robert in \cite{augot2013rank, augot2014generalization, augot2018generalized} (see also \cite[Section VI]{roth1996tensor}). As a consequence one would also get the characterization of generalized Gabidulin codes of Theorems \ref{thm:GabCritNew} and \ref{thm:GabCritNewII} in this setting. This general approach based on general cyclic extension field is used to describe some results of this paper in \cite[Chapter 4]{neri2019PhD}. Here, $\sigma$-Gabidulin codes are defined with respect to a generator $\sigma$ of the Galois group $G:=\Gal(\mathbb{E}/\mathbb{K})$, and the space $\mathcal G_{k,s}$ is replaced by the space $\langle \mathrm{id}, \sigma, \ldots, \sigma^{k-1}\rangle_{\mathbb{E}}\subseteq \mathbb{E}[G]$. Formally we have the following open problem.
 
 \begin{problem}
  Let $\mathbb{E}/\mathbb{K}$ be an extension of fields of finite degree, and let $\Gal(\mathbb{E}/\mathbb{K})$ be a cyclic group. Do Theorems \ref{thm:stform} and \ref{thm:stformbis} hold in this more general setting?
 \end{problem}
 
 Unfortunately, the proof technique used here relies on a counting argument, and therefore it does not apply to infinite fields. In order to prove the same theorems in this more general setting one needs to find a different argument, finding an explicit bijection between generalized Gabidulin codes and $(q,s)$-Cauchy matrices which also works for infinite fields.
 
 Moreover, for future research we plan to investigate  the use of this parametrization for  applications in erasure and syndrome decoding for generalized Gabidulin codes.

\section*{Acknowledgement} 
The author would like to thank Eimear Byrne for useful comments on the structure of the work and  Martino Borello for suggesting to point out Lemma \ref{lem:circulant}.

\bibliographystyle{abbrv}
\bibliography{./network_coding_stuff}

\end{document}